\newtheorem{thm}{Theorem} [section]
\newtheorem{lem}{Lemma}[section]
\theoremstyle{definition}
\newtheorem{defn}{Definition}[section]
\theoremstyle{remark}
\numberwithin{equation}{section}
\begin{document}
\title[A characterization for an almost MDS code
to be a near MDS code]{A characterization for an
almost MDS code to be a near MDS code and a
proof of the Geng-Yang-Zhang-Zhou conjecture}
\author[S.Y. Qiang]{Shiyuan Qiang}
\address{Mathematical College, Sichuan University,
Chengdu 610064, P.R. China}
\email{syqiang23@163.com}
\author[H.K. Wei]{Huakai Wei}
\address{Mathematical College, Sichuan University,
Chengdu 610064, P.R. China}
\email{1810097@mail.nankai.edu.cn}
\author[S.F. Hong]{Shaofang Hong$^*$}
\address{Mathematical College, Sichuan University,
Chengdu 610064, P.R. China}
\email{sfhong@scu.edu.cn}
\begin{abstract}
Let $\mathbb{F}_q$ be the finite field of $q$ elements,
where $q=p^{m}$ with $p$ being a prime number and $m$ being
a positive integer. Let $\mathcal{C}_{(q, n, \delta, h)}$
be a class of BCH codes of length $n$ and designed $\delta$.
A linear code $\mathcal{C}$ is said to be maximum distance
separable (MDS) if the minimum distance $d=n-k+1$. If $d=n-k$,
then $\mathcal{C}$ is called an almost MDS (AMDS) code.
Moreover, if both of $\mathcal{C}$ and its dual code
$\mathcal{C}^{\bot}$ are AMDS, then $\mathcal{C}$ is called
a near MDS (NMDS) code. In [A class of almost MDS codes,
{\it Finite Fields Appl.} {\bf 79} (2022),
\#101996], Geng, Yang, Zhang and Zhou proved that the BCH
code $\mathcal{C}_{(q, q+1,3,4)}$ is an almost MDS code,
where $q=3^m$ and $m$ is an odd integer, and they also
showed that its parameters is $[q+1, q-3, 4]$.
Furthermore, they proposed a conjecture stating that the dual
code $\mathcal{C}^{\bot}_{(q, q+1, 3, 4)}$ is also an AMDS code
with parameters $[q+1, 4, q-3]$. In this paper, we first
present a characterization for the dual code of an almost MDS
code to be an almost MDS code. Then we use this result to show
that the Geng-Yang-Zhang-Zhou conjecture is true. Our result
together with the Geng-Yang-Zhang-Zhou theorem implies that
the BCH code $\mathcal{C}_{(q, q+1,3,4)}$ is a near MDS code.
\end{abstract}
\thanks{$^*$S.F. Hong is the corresponding author and was
supported partially by National Science Foundation of China
\# 12171332.}
\keywords{Cyclic code; Linear code; MDS code; BCH code;
AMDS code; NMDS code}
\subjclass[]{11T71, 94B15, 94B05, 94A60}
\maketitle

\section{Introduction}
Let $\mathbb{F}_q$ be the finite field of $q$ elements,
where $q=p^{m}$ with prime $p$ and positive integer $m$,
$\mathbb{F}_q^{n}$ is an n-dimensional vector space
over $\mathbb{F}_q$. Let $\mathbb{F}_q^{*}
:=\mathbb{F}_q\backslash\{0\}$ be the set of all
non-zero elements of $\mathbb{F}_q$. Then $\mathbb {F}_q^*$
forms a group under the multiplicative operation.
For a non-empty set $\mathcal{C}\subseteq \mathbb{F}_q^{n}$,
if $\mathcal{C}$ is a $k$-dimensional subspace of $\mathbb{F}_q^{n}$,
then $\mathcal{C}$ is called an $[n, k]$ {\it linear code} over
$\mathbb{F}_q$, and $n$ and $k$ are called the {\it length} and
{\it dimension} of the code $\mathcal{C}$, respectively.
And we call the vector {\bf c}  in $\mathcal{C}$ {\it codeword}.
Then the linear code $\mathcal{C}$ holds $q^k$ codewords.
The {\it dual code}, denoted by $\mathcal{C}^{\bot}$, of an
$[n, k]$ linear code $\mathcal{C}$ over $\mathbb{F}_q$ is
defined by $\mathcal{C}^{\bot}:=\{{\bf c}^{\bot}\in
\mathbb{F}_q^{n}\mid\langle {\bf c}, {\bf c}^{\bot}\rangle=0,
{\bf c}\in \mathcal{C}\}$ with $\langle {\bf c}, {\bf c}^{\bot}\rangle$
denoting the Euclidean inner product of ${\bf c}$ and
${\bf c}^{\bot}$. For any integer $n\ge 1$, we set
$\langle n\rangle:=\{1, 2, \cdots, n\}$ throughout this paper.
As usual, for any finite set $A$, $|A|$ stands for its cardinality.

In coding theory, the two most common ways to present a linear
code are either a generator matrix or a parity-check matrix.
A {\it generator matrix} for an $[n, k]$ code $\mathcal{C}$ is
any $k\times n$ matrix $G$ whose rows form a basis of $\mathcal{C}$.
An $(n-k)\times n$ matrix $H$ is called a {\it parity-check matrix}
of an $[n, k]$ code $\mathcal{C}$. Then one has
$\mathcal{C}=\{{\bf x}\in\mathbb{F}_q^{n}\mid H{\bf x}^{T}=0\}$.
An important invariant of a code is the minimal distance
between codewords. Let ${\bf a}=(a_{1}, \cdots, a_{n})$,
${\bf b}=(b_{1}, \cdots, b_{n})$ be two codewords in
$\mathcal{C}$. Then the {\it Hamming weight} $w({\bf a})$ of ${\bf a}$
is defined to be the number of nonzero coordinates of ${\bf a}$,
that is, $w({\bf a})=|\{i\in\langle n\rangle \mid a_{i}\neq0\}|$.
The {\it support set} ${\rm supp}({\bf a})$ of ${\bf a}\in\mathbb{F}_q^{n}$
is defined by ${\rm supp}({\bf a}):=\{i\in\langle n\rangle\mid a_{i}\neq0\}$.
The {\it Hamming distance} $d({\bf a}, {\bf b})$ between ${\bf a}$ and ${\bf b}$
is defined to be the number of coordinates in which ${\bf a}$ and
${\bf b}$ differ. In other words, $d({\bf a}, {\bf b}):=|\{i\in\langle
n\rangle \mid a_{i}\neq b_{i}\}|$. Then $d({\bf a}, {\bf b})=w({\bf a}-{\bf b})$.
The {\it minimum Hamming distance}, denoted by
$d=d(\mathcal{C})$, of a code $\mathcal{C}\subseteq
\mathbb{F}_q^{n}$ is defined to be the smallest Hamming
distance between distinct codewords. Obviously, $d$
is equal to the minimal value of the Hamming weights
of all nonzero codewords in $\mathcal{C}$. Namely, $d=d(\mathcal{C})
:=\min_{\substack{{\bf c}_{1}, {\bf c}_{2}\in \mathcal{C},
{\bf c}_{1}\neq {\bf c}_{2}}}\{d({\bf c}_{1}, {\bf c}_{2})\}$. In particular,
the minimum Hamming distance is important in determining
the error-correcting capability of $\mathcal{C}$.
If the minimum Hamming distance $d$ of an $[n, k]$ code
is known, then we refer this code to an $[n, k, d]$ code.

For any integer $i$ with $0\leq i\leq n$, let $A_{i}$
stand for the number of codewords in $\mathcal{C}$
with length $n$ and the Hamming weight $i$. Then the
{\it weight enumerator} of $\mathcal{C}$ is defined
to be the polynomial
$$1+A_{1}x+A_{2}x^{2}+\cdots+A_{n}x^{n}.$$
The sequence $(1, A_{1}, A_{2},\cdots, A_{n})$ is called
the {\it weight distribution} of $\mathcal{C}$. The code
$C$ is called a {\it $t$-weight code} if
$|\{ A_{i}\neq0\mid i\in\langle n\rangle\}|=t$.
As an important type of linear codes, cyclic codes were
first studied by E. Prange in 1957. Not only they have
good algebraic structures, but also their encoding and
decoding can be easily implemented using the linear shift
registers \cite{[CS-HEP05], [HP-CUP03], [MS-1977], [RMR-CUP06]}.
We say that an $[n, k, d]$ code $\mathcal{C}$ over $\mathbb{F}_q$
is {\it cyclic} if $(c_{0}, c_{1}, \cdots, c_{n-1})\in\mathcal{C}$
implies that $(c_{n-1}, c_{0}, c_1, \cdots, c_{n-2})\in\mathcal{C}$.

In the rest of the paper, we impose the restriction 
$\gcd(n, q)=1$ and let $(x^{n}-1)$ be the ideal
$\mathbb{F}_q[x]$ generated by $x^{n}-1\in\mathbb{F}_q[x]$.
Then all the elements of the residue class ring
$\mathbb{F}_q[x]/(x^n-1)$ can be represented by polynomials
of degree less than $n$. Clearly, there is an isomorphism
$\varphi$ from $\mathbb{F}_q^{n}$ as a vector space over
$\mathbb{F}_q$ to $\mathbb{F}_q[x]/(x^n-1)$, defined by
\begin{align*}
\varphi: \mathbb{F}_q^{n}&\rightarrow\mathbb{F}_q[x]/(x^n-1)\notag\\
(c_{0}, c_{1}, \cdots, c_{n-1})&\mapsto c_{0}+c_{1}x+\cdots+c_{n-1}x^{n-1}.
\end{align*}
For convenience, we denote the elements of $\mathbb{F}_q[x]/(x^n-1)$
either as polynomials of degree no more than $n-1$ modulo $x^{n}-1$
or as vectors over $\mathbb{F}_q$. So we can interpret $\mathcal{C}$
as a subset of $\mathbb{F}_q[x]/(x^n-1)$. A linear code $\mathcal{C}$ is
cyclic if and only if $\mathcal{C}$ is an ideal of $\mathbb{F}_q[x]/(x^n-1)$.
Note that every ideal of $\mathbb{F}_q[x]/(x^n-1)$ is principal,
in particular, every nonzero ideal $\mathcal{C}$ is generated by 
the monic polynomial $g(x)$ of smallest degree in this ideal.

Let $\mathcal{C}=(g(x))$ be a cyclic code. Then $g(x)$ is called the
{\it generator polynomial} of $\mathcal{C}$ and $h(x)=(x^n-1)/g(x)$
is called the {\it parity-check polynomial} of $\mathcal{C}$. Let
$m={\rm ord}_{n}(q)$ be the order of $q$ modulo $n$, and
let $\alpha$ be a generator of $\mathbb{F}^{\ast}_{q^{m}}$.
Put $\beta=\alpha^{\frac{q^{m}-1}{n}}$. Then $\beta$ is a primitive
$n$-th root of unity in $\mathbb{F}_{q^{m}}$. For any integer $i$
with $0\leq i\leq n-1$, let $m_{i}(x)$ denote the minimal
polynomial of $\beta^{i}$ over $\mathbb{F}_q$. Let $h$ and
$\delta$ be two integers with $0\leq h\leq n-1$ and
$2\leq \delta\leq n$, define
$$g_{(q, n,\delta,h)}(x):={\rm lcm}(m_{h}(x), m_{h+1}(x), \cdots, m_{h+\delta-2}(x)),$$
where ${\rm lcm}$ denotes the least common multiple of these
minimal polynomials over $\mathbb{F}_q$, and the addition
in the subscript $b+i$ of $m_{b+i}(x)$ means the integer
addition modulo $n$. Let $\mathcal{C}_{(q, n,\delta,h)}$
denote the cyclic code of length $n$ and designed distance
$\delta$ over $\mathbb{F}_q$ with generator polynomial
$g_{(q, n,\delta,h)}(x)$. Then $\mathcal{C}_{(q, n,\delta,h)}$
is called a {\it BCH code} of length $n$ and designed distance
$\delta$. The maximum designed distance of a BCH code is called the
{\it Bose distance}. When $h=1$, $\mathcal{C}_{(q, n,\delta,h)}$
is called a {\it narrow-sense BCH code}. When $n=q^{m}-1$,
$\mathcal{C}_{(q, n,\delta,h)}$ is referred to as a
{\it primitive BCH code}.  It is well known that an $[n, k, d]$
code $\mathcal{C}$ is called a {\it maximum distance separable
(MDS) code} if the minimum distance $d$ reaches the Singleton
bound: $d\leq n-k+1$, i.e., if $d=n-k+1$. The dual of an MDS code
is also an MDS code. If  $d=n-k$, then the code is called an
{\it almost MDS code}. For brevity, we write an {\it AMDS}
code for an almost MDS code. Hence AMDS codes have
parameters $[n, k, n-k]$. In general, the dual of an AMDS
code may not be an AMDS code. Therefore determining whether
the dual of an AMDS code is an AMDS code is an interesting and
important problem, see \cite{[DB-DCC96]}, \cite{[LH-FFA23]},
\cite{[AMM-FFA22]} and \cite{[WH-DM21]} for some progress
about this topic. An AMDS code $\mathcal{C}$ is called a
{\it near MDS (NMDS) code} if its dual code is also an AMDS
code. MDS codes are widely applied in various fields due
to their nice properties, see \cite{[HP-CUP03]} and
\cite{[LN-CUP97]}. So the study of MDS code, AMDS codes and
NMDS code has attracted a lot of attention and made much vital
progress, see, for instance, \cite{[BGP-IT15]},
\cite{[DDK-IT97]}, \cite{[SI-JG95]}, \cite{[DL-DM00]}, \cite{[G-AAECC15]},
\cite{[HLW-IT22]} and \cite{[JK-IT19} .

BCH codes are a significant class of linear codes that
have good algebraic structure, are easy to construct, and
can be encoded and decoded relatively easily. Although BCH
codes have been studied for decades, their parameters
are not easy to determine. In the past 10 years, many
mathematicians devoted to this subject and made much
vital progress, see \cite{[DFZ-FFA17]}, \cite{[DT-IT20]},
\cite{[GYZZ-FFA22]}, \cite{[GDL-IT22]}, \cite{[LDXG-IT17]},
\cite{[LWL-IT19]} and \cite{[LLGS-DM23]}. In 2020, Ding and
Tang \cite{[DT-IT20]} considered the narrow-sense BCH codes
$\mathcal{C}_{(q, q+1,3,1)}$ with $q=3^m$ and its dual code.
They showed that the BCH code $\mathcal{C}_{(q, q+1,3,1)}$
and the dual code of $\mathcal{C}_{(q, q+1,3,1)}$ are NMDS
code. In 2022, Geng, Yang, Zhang and Zhou \cite{[GYZZ-FFA22]}
studied a class of AMDS codes from the BCH codes
$\mathcal{C}_{(q, q+1, 3, 4)}$ with $q=3^m$ and
determined their parameters.
\begin{thm}\label{thm1.1} \cite{[GYZZ-FFA22]}
Let $q=3^m$ with $m\geq3$ being odd. Then the BCH
code $\mathcal{C}_{(q, q+1, 3, 4)}$ is an AMDS code
with parameters $[q+1, q-3, 4]$.
\end{thm}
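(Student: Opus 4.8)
The plan is to verify the three parameters of $\mathcal{C}:=\mathcal{C}_{(q,q+1,3,4)}$ in turn: the length is $n=q+1$ by definition, so it remains to show $\dim\mathcal{C}=q-3$ and $d(\mathcal{C})=4$. First I would compute the $q$-cyclotomic cosets modulo $n=q+1$. Since $q\equiv-1\pmod{n}$ and $q^{2}\equiv1\pmod{n}$, the coset of $i$ is $C_{i}=\{i,\;n-i\}$; in particular $C_{4}=\{4,q-3\}$ and $C_{5}=\{5,q-4\}$, and for $m\geq3$ (so $q\geq27$) these are disjoint cosets of size $2$. Hence $g_{(q,q+1,3,4)}(x)=m_{4}(x)m_{5}(x)$ has degree $4$, so $\dim\mathcal{C}=n-4=q-3$.

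Next I would establish the lower bound $d(\mathcal{C})\geq4$. The defining set is $T=C_{4}\cup C_{5}=\{-5,-4,4,5\}$ modulo $n$, which contains no three consecutive residues, so the plain BCH bound only gives $d\geq3$; a sharpening is needed. I would apply the Hartmann--Tzeng bound with base $b=4$, consecutive step $1$ of length $\delta-1=2$, and second step $c=-9$ of length $s=1$: indeed $\{4+i_{1}+i_{2}c:0\leq i_{1}\leq1,\ 0\leq i_{2}\leq1\}=\{4,5,-5,-4\}=T$, and $\gcd(c,n)=\gcd(9,3^{m}+1)=1$ since $3\nmid3^{m}+1$. This yields $d\geq\delta+s=4$. (Equivalently, one can argue directly: if some nonzero codeword had weight $\leq3$, supported at positions with $\beta$-values $x_{1},x_{2},x_{3}\in\mu_{n}$, then eliminating the coefficients from $\sum a_{t}x_{t}^{4}=\sum a_{t}x_{t}^{5}=0$ and their conjugates forces $x_{1}^{9}=x_{2}^{9}=x_{3}^{9}$, hence $x_{t}/x_{s}\in\mu_{\gcd(9,n)}=\{1\}$, a contradiction.)

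For the upper bound $d(\mathcal{C})\leq4$ I would exhibit an explicit weight-$4$ codeword, and this is the place where the oddness of $m$ is used essentially. Since $m$ is odd, $3^{m}\equiv3\pmod4$, so $4\mid n$; thus $\mu_{n}$ contains a primitive fourth root of unity $\zeta$, and moreover $\zeta\notin\mathbb{F}_{q}$ because $4\nmid q-1$ (equivalently, $-1$ is a non-square in $\mathbb{F}_{3^{m}}$ exactly when $m$ is odd). Using that for $\mathbf{c}\in\mathbb{F}_{q}^{n}$ one has $\mathbf{c}\in\mathcal{C}\iff\sum_{j}c_{j}\beta^{4j}=\sum_{j}c_{j}\beta^{5j}=0$ (the conditions at the conjugate exponents $-4,-5$ being automatic over $\mathbb{F}_{q}$), take $\mathbf{c}$ supported at the four distinct coordinates whose $\beta$-values are $1,\zeta,-1,-\zeta$, with respective entries $1,-1,1,-1\in\mathbb{F}_{q}^{*}$. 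Then $\sum c_{j}\beta^{4j}=1-1+1-1=0$ (each fourth power is $1$) and $\sum c_{j}\beta^{5j}=1\cdot1+(-1)\zeta+1\cdot(-1)+(-1)(-\zeta)=0$, so $\mathbf{c}\in\mathcal{C}$ has weight $4$. Together with $d\geq4$ and the Singleton bound $d\leq n-\dim\mathcal{C}+1=5$, this gives $d=4=n-\dim\mathcal{C}$, i.e.\ $\mathcal{C}$ is AMDS with parameters $[q+1,q-3,4]$.

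The main obstacle is the determination of the minimum distance, and specifically separating $d=4$ from $d=5$ (the MDS case). On one side the plain BCH bound is not sharp, so the lower bound requires the Hartmann--Tzeng (or Roos) refinement together with the elementary but crucial arithmetic fact $\gcd(9,3^{m}+1)=1$; on the other side, exhibiting a weight-$4$ word --- hence ruling out that $\mathcal{C}$ is MDS --- is exactly where the hypothesis ``$m$ odd'' enters, through the existence of a fourth root of unity in $\mu_{q+1}$ lying outside $\mathbb{F}_{q}$. The coset computation and the Singleton bound are routine.
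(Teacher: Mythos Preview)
The paper does not give its own proof of Theorem~1.1; the result is quoted from Geng--Yang--Zhang--Zhou \cite{[GYZZ-FFA22]} and used as input for the paper's main contribution (showing that the dual is also AMDS). So there is no in-paper argument to compare against.

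Your proof is correct and self-contained. The dimension computation via the $q$-cyclotomic cosets $C_4=\{4,q-3\}$ and $C_5=\{5,q-4\}$ is routine. For $d\ge 4$, the Hartmann--Tzeng application is clean: $\{4,5\}+\{0,-9\}=\{4,5,-4,-5\}$ is exactly the defining set, and $\gcd(9,q+1)=1$ since $3\nmid q+1$, giving $d\ge 3+1=4$. Your parenthetical direct argument is also valid once unpacked: from $\sum a_t x_t^{4}=\sum a_t x_t^{-5}=0$ and $\sum a_t x_t^{5}=\sum a_t x_t^{-4}=0$ one obtains two nonzero vectors in the kernel of $\left(\begin{smallmatrix}1&1&1\\ x_1^{9}&x_2^{9}&x_3^{9}\end{smallmatrix}\right)$, which is one-dimensional unless all $x_t^{9}$ coincide, and $\gcd(9,q+1)=1$ then forces the $x_t$ to coincide. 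This is in the same spirit as the determinant argument of \cite{[GYZZ-FFA22]} that the present paper records as Lemma~3.2. For $d\le 4$, your explicit weight-$4$ codeword supported at the fourth roots of unity in $U_{q+1}$ works, and the oddness of $m$ is exactly what ensures $4\mid q+1$. One harmless remark: you note that $\zeta\notin\mathbb{F}_q$, which is true, but it is not used anywhere in your construction --- the entries $1,-1,1,-1$ already lie in $\mathbb{F}_q$, and the four supporting positions are distinct simply because $\zeta$ is a \emph{primitive} fourth root of unity.
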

\noindent They verified through examples that the
code $\mathcal{C}^{\bot}_{(q, q+1, 3, 4)}$ is also
an AMDS code, where $q=3^m$ for $m=3, 5, 7$. Based
on this observation, they conjectured that if
$q=3^m$ with $m\geq3$ being odd, then the dual
code $\mathcal{C}^{\bot}_{(q, q+1, 3, 4)}$ is
an AMDS code with parameters $[q+1, 4, q-3]$.

In the current paper, our main goal is to
investigate this conjecture. We will prove that
the Geng-Yang-Zhang-Zhou conjecture mentioned
above is true as the following theorem shows.
\begin{thm}\label{thm1.2}
Let $q=3^m$ with $m\geq3$ being odd. Then the dual code
of the BCH code $\mathcal{C}_{(q, q+1, 3, 4)}$ is an AMDS
code with parameters $[q+1, 4, q-3]$.
\end{thm}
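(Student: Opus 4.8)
The plan is to determine the parameters of $\mathcal{C}^{\bot}:=\mathcal{C}^{\bot}_{(q,q+1,3,4)}$, then to reduce the claim, via our characterization of when the dual of an AMDS code is AMDS, to a lower bound on $d^{\bot}$, and finally to establish that bound through the trace representation of $\mathcal{C}^{\bot}$. Since $q\equiv-1\pmod{q+1}$ we have $\mathrm{ord}_{q+1}(q)=2$, so $x^{q+1}-1$ splits over $\mathbb{F}_{q^2}$; let $\beta\in\mathbb{F}_{q^2}$ be a primitive $(q+1)$-st root of unity and let $\mu_{q+1}$ denote the group of $(q+1)$-st roots of unity. By Theorem~\ref{thm1.1}, $\mathcal{C}:=\mathcal{C}_{(q,q+1,3,4)}$ is $[q+1,q-3,4]$, so $\dim_{\mathbb{F}_q}\mathcal{C}^{\bot}=4$ and $\mathcal{C}^{\bot}$ has parameters $[q+1,4,d^{\bot}]$. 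The Singleton bound gives $d^{\bot}\le q-2$, with equality only if $\mathcal{C}^{\bot}$ is MDS; but $\mathcal{C}$ is AMDS, hence not MDS (that would force $d=5$), and since the dual of an MDS code is MDS, $\mathcal{C}^{\bot}$ is not MDS either, so $d^{\bot}\le q-3$. It therefore suffices to prove $d^{\bot}\ge q-3$, i.e.\ that every nonzero codeword of $\mathcal{C}^{\bot}$ has at most four zero coordinates; by our characterization this is exactly what is needed to conclude that $\mathcal{C}^{\bot}$ is AMDS, equivalently that $\mathcal{C}$ is near MDS.

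To verify it, I would use the Delsarte (trace) description of $\mathcal{C}^{\bot}$. The defining set of $\mathcal{C}$ is the union of the $q$-cyclotomic cosets of $4$ and of $5$ modulo $q+1$, namely $\{4,\,q-3\}$ and $\{5,\,q-4\}$; this set is invariant under negation, so cyclic duality together with the trace representation of cyclic codes gives
$$\mathcal{C}^{\bot}=\Bigl\{\bigl(\mathrm{Tr}_{\mathbb{F}_{q^2}/\mathbb{F}_q}(a\beta^{4j}+b\beta^{5j})\bigr)_{j=0}^{q}\ :\ a,b\in\mathbb{F}_{q^2}\Bigr\},$$
and the assignment $(a,b)\mapsto$ codeword is an $\mathbb{F}_q$-linear bijection onto $\mathcal{C}^{\bot}$ (both sides have $\mathbb{F}_q$-dimension $4$). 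Fix $(a,b)\ne(0,0)$ and put $u=\beta^{j}$, so that $u$ runs over $\mu_{q+1}$; since $u^{q}=u^{-1}$ on $\mu_{q+1}$, the $j$-th coordinate of the corresponding codeword vanishes exactly when $au^{4}+bu^{5}+a^{q}u^{-4}+b^{q}u^{-5}=0$, i.e.\ when
$$F_{a,b}(u):=bu^{10}+au^{9}+a^{q}u+b^{q}=0.$$
So the number of zero coordinates of that codeword equals $\#\{u\in\mu_{q+1}:F_{a,b}(u)=0\}$, and the whole statement comes down to showing $\deg\gcd\bigl(F_{a,b}(u),\,u^{q+1}-1\bigr)\le 4$ for every $(a,b)\ne(0,0)$.

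The case $b=0$ is easy: $F_{a,0}(u)=au\,(u^{8}+a^{q-1})$, so its roots in $\mu_{q+1}$ are the solutions of $u^{8}=-a^{q-1}$, and because $m$ is odd we have $q\equiv 3\pmod 8$, hence $\gcd(8,q+1)=4$ and there are at most four such $u$. The case $b\ne 0$ is the heart of the matter, and the step I expect to be hardest: here $F_{a,b}$ genuinely has degree $10$, so a naive degree bound is useless and one must exploit that the roots are forced onto $\mu_{q+1}$. My plan is to rewrite $F_{a,b}(u)=0$ on $\mu_{q+1}$ as $u^{9}=-\bigl(a^{q}u+b^{q}\bigr)/\bigl(bu+a\bigr)$, to observe that the right-hand side is a M\"obius transformation that again carries $\mu_{q+1}$ into itself while $u\mapsto u^{9}$ permutes $\mu_{q+1}$ (since $\gcd(9,q+1)=1$), and then to descend to $\mathbb{F}_q$ via the substitution $t=u+u^{-1}\in\mathbb{F}_q$, writing $u^{4}\pm u^{-4}$ and $u^{5}\pm u^{-5}$ as Dickson polynomials in $t$. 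This recasts ``$F_{a,b}$ has at least five roots in $\mu_{q+1}$'' as the assertion that an explicit polynomial in $t$ over $\mathbb{F}_q$ of controlled degree has too many roots $t$ for which $t^{2}-4$ is a non-square in $\mathbb{F}_q$; a careful resultant/discriminant analysis of that polynomial, exploiting the arithmetic of $q=3^m$ with $m$ odd (for instance $\gcd(q+1,8)=4$ and $\gcd(q+1,3)=1$), should then pin the count at $4$. The degenerate values of $u$ (those with $bu+a=0$, and $u=\pm1$), where the rewriting or the substitution degenerates, need separate but routine checking.

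Putting the two cases together gives $d^{\bot}\ge q-3$; combined with $d^{\bot}\le q-3$ from the first step, $d^{\bot}=q-3$, so $\mathcal{C}^{\bot}_{(q,q+1,3,4)}$ is an AMDS code with parameters $[q+1,4,q-3]$ --- the Geng--Yang--Zhang--Zhou conjecture --- and, together with Theorem~\ref{thm1.1}, this shows $\mathcal{C}_{(q,q+1,3,4)}$ is a near MDS code. I expect essentially all the difficulty to sit in the $b\ne0$ root count; the remaining steps are bookkeeping with the trace representation and elementary arithmetic of $q+1$.
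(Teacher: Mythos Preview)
Your route---writing $\mathcal{C}^{\bot}$ explicitly via the trace representation and bounding the number of zeros of each nonzero codeword---is sound and genuinely different from the paper's. The paper never touches $\mathcal{C}^{\bot}$ directly: it applies Theorem~\ref{thm2.1} to the primal code, reducing everything to showing $\dim\mathcal{C}(I)=1$ for every $5$-subset $I$. Via the parity-check matrix (Lemma~\ref{lem3.1}) this becomes: for each pair of distinct $x,y\in U_{q+1}\setminus\{1\}$ there is at most one $z\in U_{q+1}\setminus\{x,y,1\}$ for which a certain system has a nonzero $\mathbb{F}_q$-solution. The paper then shows (Lemmas~\ref{lem10}--\ref{lem18}), using the characteristic-$3$ identity $(x-1)^9=x^9-1$ and $\gcd(8,q-1)=2$, that this is equivalent to the single cross-ratio condition $\dfrac{(y-1)(z-x)}{(x-1)(z-y)}=-1$, which is linear in $z$ and so pins it down uniquely. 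The two formulations are in fact equivalent (a codeword of $\mathcal{C}^{\bot}$ with $\ge 5$ zeros corresponds exactly to a $5$-subset $I$ with $\dim\mathcal{C}(I)\ge 2$), but the paper's primal-side determinant manipulation makes the characteristic-$3$ collapse explicit.

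Your $b=0$ case is correct. The genuine gap is the $b\neq 0$ case: the Dickson descent you propose does \emph{not} by itself yield a bound of $4$. Writing $a=a_1+a_2\gamma$, $b=b_1+b_2\gamma$ with $\gamma^q=-\gamma$, the equation $\mathrm{Tr}(au^4+bu^5)=0$ becomes $P(t)+sQ(t)=0$ with $t=u+u^{-1}$, $s=\gamma(u-u^{-1})\in\mathbb{F}_q$, $s^2=\gamma^2(t^2-4)$, where $P=a_1D_4+b_1D_5$ has degree $5$ and $Q=a_2E_4+b_2E_5$ has degree $4$. Eliminating $s$ gives $R(t):=P(t)^2-\gamma^2(t^2-4)Q(t)^2=0$, and for $b\neq 0$ the leading coefficient of $R$ is $b_1^2-\gamma^2 b_2^2\neq 0$, so $\deg R=10$; moreover each root $t\in\mathbb{F}_q$ really does produce a $u\in\mu_{q+1}$, so the naive count is $10$, not $4$. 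Some further characteristic-$3$ input is therefore indispensable. The paper supplies it on the primal side via the $9$th-power identity; on your side, a workable substitute would be to exploit $u^9=u^{3^2}$ (Frobenius) together with a rational parametrization of $\mu_{q+1}$ over $\mathbb{F}_q$, but that computation is not in your proposal and is where the real content lies.
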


Combining Theorem \ref{thm1.1} with Theorem \ref{thm1.2},
we can conclude that the BCH code $\mathcal{C}_{(q, q+1,3,4)}$
is a NMDS code. That is, we have the following result.
\begin{thm}\label{thm1.3}
Let $q=3^m$ with $m\geq3$ being odd. Then the BCH
code $\mathcal{C}_{(q, q+1, 3, 4)}$ is a NMDS code
with parameters $[q+1, q-3, 4]$.
\end{thm}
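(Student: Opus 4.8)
Theorem~\ref{thm1.3} would follow immediately from Theorems~\ref{thm1.1} and~\ref{thm1.2}, since by definition an AMDS code whose dual is also AMDS is an NMDS code; so the real content is Theorem~\ref{thm1.2}, which I now discuss. Put $\mathcal{C}:=\mathcal{C}_{(q,q+1,3,4)}$. By Theorem~\ref{thm1.1} we have $\dim\mathcal{C}^{\bot}=(q+1)-(q-3)=4$, so $\mathcal{C}^{\bot}$ is a $[q+1,4,d^{\bot}]$ code and the Singleton bound gives $d^{\bot}\le q-2$; here $d^{\bot}=q-2$ is impossible, as it would force $\mathcal{C}^{\bot}$, hence $\mathcal{C}$, to be MDS, contradicting $d(\mathcal{C})=4$. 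Thus $d^{\bot}\le q-3$, and everything reduces to showing $d^{\bot}\ge q-3$. For this I would invoke the characterization the paper establishes --- for an $[n,k,n-k]$ AMDS code the dual is AMDS if and only if every $k-1$ columns of a generator matrix are linearly independent, i.e. if and only if $\mathcal{C}^{\bot}$ has no nonzero codeword of weight $<k$ --- which, together with Theorem~\ref{thm1.1} (which supplies the remaining part of the classical NMDS criterion), turns the goal into: every nonzero codeword of $\mathcal{C}^{\bot}$ has at most $4$ zero coordinates.

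To make this concrete, note that $\operatorname{ord}_{q+1}(q)=2$ and $q\equiv-1\pmod{q+1}$, so every $q$-cyclotomic coset modulo $q+1$ has the form $\{i,-i\}$ and the defining set of $\mathcal{C}$ is the self-conjugate set $T=\{4,5,q-4,q-3\}$; hence $\mathcal{C}^{\bot}$ has defining set $\{0,1,\dots,q\}\setminus T$, its nonzeros are $\beta^{\pm4},\beta^{\pm5}$, and $\mathcal{C}^{\bot}=\bigl\{\bigl(\Tr_{\mathbb{F}_{q^2}/\mathbb{F}_q}(a\beta^{4j}+b\beta^{5j})\bigr)_{0\le j\le q}:a,b\in\mathbb{F}_{q^2}\bigr\}$. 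Writing $x=\beta^{j}$, which runs over the group $U$ of $(q+1)$-th roots of unity in $\mathbb{F}_{q^2}^{*}$, and using $x^{q}=x^{-1}$ on $U$, multiplication of $\Tr(ax^4+bx^5)$ by $x^{5}$ shows that the number of zero coordinates of the codeword attached to $(a,b)$ equals the number of roots in $U$ of $f_{a,b}(x):=bx^{10}+ax^{9}+a^{q}x+b^{q}$. So Theorem~\ref{thm1.2} is equivalent to: for every $(a,b)\in\mathbb{F}_{q^2}^{2}\setminus\{(0,0)\}$, the polynomial $f_{a,b}$ has at most $4$ roots in $U$.

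Two sub-cases are immediate. If $b=0$ then $f_{a,b}(x)=ax(x^{8}+a^{q-1})$, and since $m$ odd forces $q+1\equiv4\pmod 8$ the map $x\mapsto x^{8}$ is $4$-to-$1$ on $U$, so $x^{8}=-a^{q-1}$ has at most $4$ solutions in $U$ (the case $a=0$ gives $x^{10}=-b^{q-1}$, hence at most $2$ solutions since $\gcd(10,q+1)=2$). If $ab\neq0$ and $N(a)=N(b)$, where $N=N_{\mathbb{F}_{q^2}/\mathbb{F}_q}$, then $bx+a$ divides $f_{a,b}$ and in fact $f_{a,b}(x)=(bx+a)(x^{9}+a^{q}/b)$; since $3\nmid q+1$ we have $\gcd(9,q+1)=1$, so $x\mapsto x^{9}$ is a bijection of $U$ and $f_{a,b}$ has at most $2$ roots in $U$.

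The decisive case is $ab\neq0$ and $N(a)\neq N(b)$, where $\deg f_{a,b}=10$ and the degree bound alone is useless. Two structural facts cut it down. First, because the characteristic is $3$, $f_{a,b}'(x)=bx^{9}+a^{q}$, whose zeros satisfy $x^{9}=-a^{q}/b$ with $N(-a^{q}/b)=N(a)/N(b)\neq1$, so $-a^{q}/b\notin U$ and $f_{a,b}$ has only simple roots in $U$. Second, $f_{a,b}$ is conjugate-self-reciprocal, $f_{a,b}(x)=x^{10}\,\overline{f_{a,b}(1/x)}$ (the bar conjugating coefficients), so its roots off $U$ are interchanged in pairs by the involution $\rho\mapsto\rho^{-q}$, whence the number of roots of $f_{a,b}$ in $U$ is even; combining the two facts it suffices to rule out $6$ distinct roots $x_{1},\dots,x_{6}\in U$. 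In that situation $(b,a,a^{q},b^{q})$ lies in the kernel of the matrix $\bigl(x_{i}^{10},x_{i}^{9},x_{i},1\bigr)_{1\le i\le 6}$, so all its $4\times4$ minors vanish; each minor factors as $\prod_{r<s}(x_{i_r}-x_{i_s})$ times a Schur polynomial $s_{(7,7,0,0)}$ in four of the $x_i$, so distinctness forces $s_{(7,7,0,0)}$ to vanish on every quadruple among $x_{1},\dots,x_{6}$, and one then pushes these identities --- together with $x_i^{q}=x_i^{-1}$ and the divisibility facts $\gcd(8,q+1)=4$, $\gcd(9,q+1)=1$ for $q+1=3^{m}+1$ --- to a contradiction (equivalently, via the substitution $x=(s+\gamma)/(s-\gamma)$ with $\gamma^{q}=-\gamma$, one may instead bound the number of $\mathbb{F}_q$-roots of $\Tr_{\mathbb{F}_{q^2}/\mathbb{F}_q}\bigl((s+\gamma)^{9}((a+b)s+(b-a)\gamma)\bigr)\in\mathbb{F}_q[s]$). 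This last exclusion is the main obstacle: all the earlier steps are routine, but it is precisely here that $q=3^{m}$ and $m$ odd must be used in an essential way, and I expect the bulk of the proof to consist of that analysis.
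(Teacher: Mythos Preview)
Your setup is sound and your reformulation of Theorem~\ref{thm1.2} as ``for every nonzero $(a,b)\in\mathbb{F}_{q^2}^2$, the polynomial $f_{a,b}(x)=bx^{10}+ax^{9}+a^{q}x+b^{q}$ has at most four roots in $U_{q+1}$'' is correct and is equivalent to the paper's criterion, though reached by a different route: you work directly with the trace form of $\mathcal{C}^{\bot}$, whereas the paper works on the primal side via Theorem~\ref{thm2.1} and the condition $\dim\mathcal{C}(I)=1$ for every $5$-subset $I$ (your ``every $k-1$ columns of a generator matrix are independent'' is exactly this condition rephrased). Your treatment of the degenerate cases $a=0$, $b=0$, $N(a)=N(b)$ is correct, as are the separability and conjugate-self-reciprocal observations in the main case.

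The genuine gap is precisely where you place it: the exclusion of six roots in $U_{q+1}$ is not carried out, and neither suggested line --- vanishing of $s_{(7,7,0,0)}$ on all $\binom{6}{4}$ quadruples, or the M\"obius substitution to an $\mathbb{F}_q$-polynomial --- is executed or shown to close. The paper sidesteps this difficulty by a different reduction. Through Lemma~\ref{lem3.1} it converts $\dim\mathcal{C}(I)=1$ into a \emph{uniqueness} statement: for any distinct $x,y\in U_{q+1}\setminus\{1\}$ there is at most one $z\in U_{q+1}\setminus\{x,y,1\}$ for which the $4\times4$ determinant (\ref{3.3}) vanishes. A direct computation (Lemmas~\ref{lem11} and \ref{lem16}), exploiting $t^{9}-1=(t-1)^{9}$ in characteristic $3$, collapses that vanishing to the single cross-ratio equation
\[
\frac{(y-1)(z-x)}{(x-1)(z-y)}=-1,
\]
which is linear in $z$ and hence determines it uniquely; Lemmas~\ref{lem17}--\ref{lem18} confirm the resulting $z$ lies in $U_{q+1}\setminus\{x,y,1\}$. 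The oddness of $m$ enters only to force $\gcd(q-1,8)=2$, so that the eighth-root-of-unity quantity appearing in the reduction must already be a square root of unity. Your approach could in principle be completed, but the paper's passage to a cross-ratio replaces the hard root-counting step you leave open with a one-line uniqueness argument.
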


This paper is organized as follows. First of all,
we establish in Section 2 a necessary and sufficient
condition for the dual code of an almost MDS code
to be an almost MDS code. Then in Section 3, we study
the BCH code $\mathcal{C}_{(q, q+1, 3, 4)}$ with
parameters $[q+1, q-3, 4]$ and give the proof of
Theorem \ref{thm1.2}.

\section{A characterization for the dual code of an
almost MDS code to be an almost MDS code}

In this section, we present a characterization
for the dual code of an almost MDS code to be an
almost MDS code that is needed in proving Theorem
\ref{thm1.2}. We begin with a renowned result
due to MacWilliams.

\begin{lem} \label{lem2} \cite{[CS-HEP05], [HP-CUP03],
[MS-1977], [RMR-CUP06]} {\rm (MacWilliams identity)}
For a linear $[n, k]$ code $\mathcal{C}$ over $\mathbb{F}_q$,
let $A_{i}$ and $A^{\bot}_{i}$ be the number of codewords with
the Hamming weight $i$ in $\mathcal{C}$ and $\mathcal{C}^{\bot}$,
respectively. Then for any integer $r$ with $0\leq r\leq n$, we have
$$\frac{1}{q^{k}}\sum_{i=0}^{n-r}\binom{n-i}{r}A_{i}
=\frac{1}{q^{r}}\sum_{i=0}^{r}\binom{n-i}{n-r}A^{\bot}_{i}.$$
\end{lem}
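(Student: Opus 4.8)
The plan is to prove the identity by a direct double-counting argument, which avoids invoking the homogeneous-weight-enumerator form of MacWilliams' theorem (though that route works too, by coefficient comparison). First I would reinterpret the left-hand side combinatorially. For a subset $T\subseteq\langle n\rangle$, write $\mathcal{C}(T):=\{\mathbf{c}\in\mathcal{C}\mid c_{j}=0\ \text{for all}\ j\in T\}$ for the subcode of codewords vanishing on $T$. Counting the pairs $(\mathbf{c},T)$ with $\mathbf{c}\in\mathcal{C}$, $|T|=r$ and $T\cap\mathrm{supp}(\mathbf{c})=\emptyset$ in two ways---grouping by the Hamming weight $i$ of $\mathbf{c}$ versus grouping by $T$---gives
$$\sum_{i=0}^{n-r}\binom{n-i}{r}A_{i}=\sum_{|T|=r}|\mathcal{C}(T)|.$$
If $\pi_{T}\colon\mathbb{F}_q^{n}\to\mathbb{F}_q^{T}$ is the projection onto the coordinates in $T$ and $\rho(T):=\dim\pi_{T}(\mathcal{C})$, then $\mathcal{C}(T)=\ker(\pi_{T}|_{\mathcal{C}})$, so $|\mathcal{C}(T)|=q^{k-\rho(T)}$ by the rank--nullity theorem.

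The heart of the matter is a duality identity for $\rho(T)$: I claim that $r-\rho(T)=\sigma(T)$, where $\sigma(T):=\dim\{\mathbf{x}\in\mathcal{C}^{\bot}\mid\mathrm{supp}(\mathbf{x})\subseteq T\}$. Indeed, a vector $\mathbf{v}\in\mathbb{F}_q^{T}$ is orthogonal to $\pi_{T}(\mathcal{C})$ precisely when its zero-extension $\widetilde{\mathbf{v}}\in\mathbb{F}_q^{n}$ satisfies $\langle\widetilde{\mathbf{v}},\mathbf{c}\rangle=0$ for all $\mathbf{c}\in\mathcal{C}$, i.e. $\widetilde{\mathbf{v}}\in\mathcal{C}^{\bot}$; hence $\mathbf{v}\mapsto\widetilde{\mathbf{v}}$ is an isomorphism from the orthogonal complement of $\pi_{T}(\mathcal{C})$ in $\mathbb{F}_q^{T}$ onto $\{\mathbf{x}\in\mathcal{C}^{\bot}\mid\mathrm{supp}(\mathbf{x})\subseteq T\}$, and comparing dimensions gives $r-\rho(T)=\sigma(T)$. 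Consequently $|\mathcal{C}(T)|=q^{k-\rho(T)}=q^{k-r}\cdot q^{\sigma(T)}=q^{k-r}\,|\{\mathbf{x}\in\mathcal{C}^{\bot}\mid\mathrm{supp}(\mathbf{x})\subseteq T\}|$, so
$$\sum_{i=0}^{n-r}\binom{n-i}{r}A_{i}=q^{k-r}\sum_{|T|=r}\bigl|\{\mathbf{x}\in\mathcal{C}^{\bot}\mid\mathrm{supp}(\mathbf{x})\subseteq T\}\bigr|.$$

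Finally I would evaluate the remaining sum by the same device applied to $\mathcal{C}^{\bot}$: it counts the pairs $(\mathbf{x},T)$ with $\mathbf{x}\in\mathcal{C}^{\bot}$, $|T|=r$ and $\mathrm{supp}(\mathbf{x})\subseteq T$, and grouping by the weight $j$ of $\mathbf{x}$ yields $\sum_{j=0}^{r}\binom{n-j}{r-j}A_{j}^{\bot}=\sum_{j=0}^{r}\binom{n-j}{n-r}A_{j}^{\bot}$, using $\binom{n-j}{r-j}=\binom{n-j}{n-r}$. Dividing through by $q^{k}$ then gives the asserted identity. The only step that is not pure bookkeeping is the duality identity $r-\rho(T)=\sigma(T)$, so the main thing to get right is keeping the zero-extension and restriction maps straight; everything else is elementary binomial manipulation and the dimension formula for a linear map. (Alternatively, one can substitute $Y=1$ and then $X=1+z$ into the homogeneous MacWilliams identity $W_{\mathcal{C}^{\bot}}(X,Y)=q^{-k}W_{\mathcal{C}}(X+(q-1)Y,\,X-Y)$ and compare the coefficients of $z^{r}$, but this presupposes the character-theoretic proof of the homogeneous form.)
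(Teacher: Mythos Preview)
Your argument is correct. The paper does not prove this lemma at all: it is quoted from the standard references \cite{[CS-HEP05], [HP-CUP03], [MS-1977], [RMR-CUP06]} and used as a black box in the proof of Lemma~\ref{lem4}. So there is no ``paper's own proof'' to compare against; your double-counting proof via the duality identity $r-\rho(T)=\sigma(T)$ is a clean self-contained derivation of the Pless power-moment form of MacWilliams, and every step checks out.

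One small remark on notation if this is to be inserted into the paper: your $\mathcal{C}(T)$ denotes the codewords \emph{vanishing} on $T$, whereas the paper's later Definition~\ref{def1} uses $\mathcal{C}(i_1,\ldots,i_r)$ for the codewords \emph{supported} on $\{i_1,\ldots,i_r\}$. These are complementary conventions, so you may want to rename your subcode (e.g.\ $\mathcal{C}_T$ or $\mathcal{C}|_T^{\,0}$) to avoid a clash.
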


\begin{lem} \label{lem12} \cite{[RS-IT64]}
{\rm (Singleton bound)} For a linear $[n, k, d]$ code
$\mathcal{C}$ over $\mathbb{F}_q$, we have
$$d\leq n-k+1.$$
\end{lem}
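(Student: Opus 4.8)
The plan is to establish the bound by a puncturing (coordinate-deletion) argument together with a simple count of vectors. I may assume $\mathcal{C}\neq\{0\}$, since otherwise $k=0$ and the inequality is vacuous. First I would introduce the $\mathbb{F}_q$-linear map $\pi\colon\mathbb{F}_q^{n}\to\mathbb{F}_q^{n-d+1}$ that forgets the last $d-1$ coordinates of a vector, and then consider its restriction $\pi|_{\mathcal{C}}$ to the code $\mathcal{C}$.

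The heart of the argument is to verify that $\pi|_{\mathcal{C}}$ is injective. Indeed, if ${\bf c}_{1},{\bf c}_{2}\in\mathcal{C}$ satisfy $\pi({\bf c}_{1})=\pi({\bf c}_{2})$, then ${\bf c}_{1}$ and ${\bf c}_{2}$ coincide on the first $n-d+1$ coordinates, so the difference ${\bf c}_{1}-{\bf c}_{2}$ is supported within the last $d-1$ coordinates and hence has Hamming weight at most $d-1$. By the characterization $d=d(\mathcal{C})=\min\{w({\bf c})\mid {\bf c}\in\mathcal{C}\setminus\{0\}\}$ recalled in the Introduction, a nonzero codeword cannot have weight $\leq d-1$; therefore ${\bf c}_{1}-{\bf c}_{2}={\bf 0}$, i.e. ${\bf c}_{1}={\bf c}_{2}$.

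Injectivity then gives $q^{k}=|\mathcal{C}|=|\pi(\mathcal{C})|\leq q^{n-d+1}$, and comparing exponents yields $k\leq n-d+1$, which is exactly $d\leq n-k+1$. (An alternative route would be via any parity-check matrix $H$ of $\mathcal{C}$: the minimum distance equals the smallest number of linearly dependent columns of $H$, so every $d-1$ columns of $H$ are linearly independent; since $H$ has only $n-k$ rows, its column rank forces $d-1\leq n-k$.) The only step that demands any attention is the injectivity claim, and that follows immediately from the definition of $d$; I do not anticipate a genuine obstacle here.
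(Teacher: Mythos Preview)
Your puncturing argument is correct and is one of the standard textbook proofs of the Singleton bound. Note, however, that the paper does not supply its own proof of this lemma at all: it simply states the bound and cites \cite{[RS-IT64]} as a classical result, so there is no ``paper's proof'' against which to compare. Your write-up would serve perfectly well as a self-contained justification should one be desired.
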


\begin{lem} \label{lem13} \cite{[RMR-CUP06]}
Let $\mathcal{C}$ be a linear $[n, k, d]$ code over
$\mathbb{F}_q$. Then $\mathcal{C}$ is an MDS code
if and only if the dual code $\mathcal{C}^{\bot}$
of $\mathcal{C}$ is an MDS code.
\end{lem}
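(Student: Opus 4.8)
The plan is to reduce the MDS property of a code to a statement about the linear independence of the columns of a parity-check matrix, and then to exploit the standard self-duality of this setup: a generator matrix of $\mathcal{C}$ is a parity-check matrix of $\mathcal{C}^{\bot}$, and conversely.

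First I would record the elementary fact that for any parity-check matrix $H$ of an $[n,k]$ code $\mathcal{C}$ (so that $\mathcal{C}=\{x\in\mathbb{F}_q^{n}\mid Hx^{T}=0\}$), the minimum distance $d$ equals the least cardinality of a linearly dependent set of columns of $H$; equivalently, $d\ge s+1$ if and only if every $s$ columns of $H$ are linearly independent. Since $H$ has only $n-k$ rows, every $n-k+1$ of its columns are dependent, which simultaneously reproves the Singleton bound (Lemma \ref{lem12}) and yields the criterion: $\mathcal{C}$ is MDS if and only if every $n-k$ columns of $H$ are linearly independent. Because a generator matrix $G$ of $\mathcal{C}$ satisfies $\mathcal{C}^{\bot}=\{x\mid Gx^{T}=0\}$, the matrix $G$ is a parity-check matrix of the $[n,n-k]$ code $\mathcal{C}^{\bot}$, and the same criterion gives: $\mathcal{C}^{\bot}$ is MDS if and only if every $k$ columns of $G$ are linearly independent.

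The crux is then the purely linear-algebraic claim that every $k$ columns of $G$ are linearly independent if and only if every $n-k$ columns of $H$ are linearly independent. I would prove the contrapositive of one implication: if some $k$-subset $S$ of the columns of $G$ is dependent, then the $k\times k$ submatrix $G|_{S}$ is singular, so there is a nonzero row vector $u$ with $uG|_{S}=0$; hence $c:=uG$ is a codeword of $\mathcal{C}$ vanishing on $S$ (so supported inside the complementary $(n-k)$-set $T=\langle n\rangle\setminus S$), and $c\neq0$ because the rows of $G$ are linearly independent. The identity $Hc^{T}=0$ then exhibits a nontrivial dependence among the $n-k$ columns of $H$ indexed by $T$. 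Applying this implication to $\mathcal{C}^{\bot}$ in place of $\mathcal{C}$, with $G$ now playing the role of the parity-check matrix and $H$ that of the generator matrix (legitimate since $\mathcal{C}^{\bot\bot}=\mathcal{C}$), gives the reverse implication. Chaining the three equivalences then finishes the proof: $\mathcal{C}$ is MDS $\iff$ every $n-k$ columns of $H$ are independent $\iff$ every $k$ columns of $G$ are independent $\iff$ $\mathcal{C}^{\bot}$ is MDS.

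I do not anticipate a genuine obstacle here; the only points needing care are keeping the quantifier ``for every subset of the prescribed size'' in force throughout (rather than fixing a single information set) and doing the dimension bookkeeping correctly when $k$ and $n-k$ swap roles in the dual passage. An alternative, more in the spirit of the MacWilliams identity (Lemma \ref{lem2}) already recalled, would be to use that the weight distribution of an $[n,k]$ MDS code is uniquely determined by $n$, $k$ and $q$, transform it by Lemma \ref{lem2}, and check that the resulting weight distribution of $\mathcal{C}^{\bot}$ has $A_{1}^{\bot}=\cdots=A_{k}^{\bot}=0$ and $A_{k+1}^{\bot}\neq0$, i.e. $d(\mathcal{C}^{\bot})=k+1=n-(n-k)+1$; but this route is more computational, and I would keep the column-independence argument as the main line.
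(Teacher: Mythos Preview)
Your argument is correct and is in fact the standard textbook proof (essentially the one in the cited reference \cite{[RMR-CUP06]}). Note, however, that the paper does not give its own proof of this lemma at all: it is stated with a citation and used as a black box, so there is nothing to compare against beyond confirming that your column-independence route is the classical one.
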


\begin{lem} \label{lem4}
Let $\mathcal{C}$ be an AMDS code over $\mathbb{F}_q$
with parameters $[n, k, n-k]$. Then $\mathcal{C}$ is
a NMDS code if and only if the following identity is true:
\begin{align} \label{2.0}
\frac{1}{q-1}(kA_{n-k}+A_{n-k+1})=\binom{n}{n-k+1}.
\end{align}
\end{lem}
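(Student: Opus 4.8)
The plan is to use the MacWilliams identity (Lemma \ref{lem2}) applied at two well-chosen values of $r$, together with the known weight structure of an AMDS code, to extract a relation that pins down exactly when $\mathcal{C}^{\bot}$ is AMDS. Recall that $\mathcal{C}$ has parameters $[n,k,n-k]$, so its nonzero weights lie in $\{n-k, n-k+1, \dots, n\}$, i.e.\ $A_0=1$ and $A_i=0$ for $1\le i\le n-k-1$. Dually, $\mathcal{C}^{\bot}$ has parameters $[n,n-k,d^{\bot}]$ with $d^{\bot}\le k+1$ by the Singleton bound, and by Lemma \ref{lem13} it fails to be MDS precisely when $\mathcal{C}$ fails to be MDS, which is our standing hypothesis; hence $d^{\bot}\le k$. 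So $\mathcal{C}^{\bot}$ is AMDS if and only if $d^{\bot}=k$, equivalently $A^{\bot}_i=0$ for $1\le i\le k-1$ and $A^{\bot}_k\neq 0$.

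First I would write down the MacWilliams identity for $r=k-1$. On the left-hand side, $\binom{n-i}{k-1}$ is nonzero only for $i\le n-k+1$, and since $A_i=0$ for $1\le i\le n-k-1$, only the terms $i=0$, $i=n-k$, $i=n-k+1$ survive; this gives an explicit expression in $A_{n-k}$ and $A_{n-k+1}$, namely $\frac{1}{q^k}\bigl(\binom{n}{k-1}+\binom{k}{k-1}A_{n-k}+A_{n-k+1}\bigr)$. On the right-hand side, the sum runs over $0\le i\le k-1$, and $A^{\bot}_0=1$, so the right side is $\frac{1}{q^{k-1}}\bigl(\binom{n}{n-k+1}+\sum_{i=1}^{k-1}\binom{n-i}{n-k+1}A^{\bot}_i\bigr)$. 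Next I would also use $r=k$ (or more elementarily the fact that the total sum telescopes) to get a second, consistency relation; alternatively, one shows directly that the identity for $r=k-1$ already forces $\sum_{i=1}^{k-1}\binom{n-i}{n-k+1}A^{\bot}_i=0$. Since each binomial coefficient $\binom{n-i}{n-k+1}$ is strictly positive for $1\le i\le k-1$ (as $n-i\ge n-k+1$) and each $A^{\bot}_i\ge 0$, that single equation is equivalent to $A^{\bot}_i=0$ for all $i$ in that range — which is exactly $d^{\bot}\ge k$, hence $d^{\bot}=k$, hence $\mathcal{C}^{\bot}$ AMDS.

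The remaining bookkeeping is to show that the scalar identity obtained from MacWilliams at $r=k-1$ rearranges into $\frac{1}{q-1}(kA_{n-k}+A_{n-k+1})=\binom{n}{n-k+1}$ precisely when that nonnegative sum vanishes. Multiplying the $r=k-1$ identity through by $q^k$ and using $\binom{k}{k-1}=k$ gives $\binom{n}{k-1}+kA_{n-k}+A_{n-k+1} = q\bigl(\binom{n}{n-k+1}+\Sigma\bigr)$ where $\Sigma:=\sum_{i=1}^{k-1}\binom{n-i}{n-k+1}A^{\bot}_i\ge 0$; then one substitutes the value of $\binom{n}{k-1}$ coming from the $r=k$ (trivial) MacWilliams relation, which reads $\binom{n}{k-1}=q\binom{n}{n-k+1}-\binom{n}{n-k+1}=(q-1)\binom{n}{n-k+1}$ after the same kind of cancellation — I should double-check this identity, since $\binom{n}{k-1}$ and $(q-1)\binom{n}{n-k+1}$ need not be equal as pure binomial identities, so in fact the correct move is to eliminate $\binom{n}{k-1}$ between the two MacWilliams instances rather than to assert a closed form. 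After that elimination, the equation becomes $kA_{n-k}+A_{n-k+1}=(q-1)\binom{n}{n-k+1}+q\Sigma - (\text{the }\Sigma\text{-type term from }r=k)$, and a short check shows the combined $\Sigma$-contribution is a positive multiple of $\Sigma$; dividing by $q-1$ yields \eqref{2.0} $\iff \Sigma=0 \iff \mathcal{C}^{\bot}$ is AMDS. The main obstacle I anticipate is precisely this linear-algebra step: getting the coefficients in the two MacWilliams instances to combine cleanly so that all the ``junk'' collapses into a single manifestly-nonnegative quantity, and verifying the normalizing constant is exactly $q-1$; everything else is routine.
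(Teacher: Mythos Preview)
Your approach is essentially the same as the paper's: apply the MacWilliams identity at $r=k-1$, use $A_i=0$ for $1\le i\le n-k-1$ on the left and $A_0^\bot=1$ on the right, and then argue that the nonnegative sum $\Sigma=\sum_{i=1}^{k-1}\binom{n-i}{n-k+1}A_i^\bot$ vanishes if and only if \eqref{2.0} holds, which combined with $d^\bot\le k$ (from Lemma~\ref{lem13} and the Singleton bound) gives $d^\bot=k$.

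The one unnecessary detour in your plan is the attempt to bring in a second MacWilliams instance at $r=k$ to ``eliminate'' $\binom{n}{k-1}$. This is not needed, and your suspicion that $\binom{n}{k-1}=(q-1)\binom{n}{n-k+1}$ might be wrong is well-founded: it is wrong. What is true, and trivially so by the symmetry $\binom{n}{j}=\binom{n}{n-j}$, is simply
\[
\binom{n}{k-1}=\binom{n}{n-k+1}.
\]
Substituting this into your equation $\binom{n}{k-1}+kA_{n-k}+A_{n-k+1}=q\bigl(\binom{n}{n-k+1}+\Sigma\bigr)$ immediately yields
\[
kA_{n-k}+A_{n-k+1}=(q-1)\binom{n}{n-k+1}+q\Sigma,
\]
so \eqref{2.0} holds iff $\Sigma=0$, with no further bookkeeping required. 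The paper does exactly this (implicitly using the binomial symmetry when it says ``the desired result \eqref{2.0} follows immediately'' from its equation~(2.2)). Once you make this substitution your ``main obstacle'' disappears entirely.
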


\begin{proof}
As $\mathcal{C}$ is an AMDS code with parameters $[n, k, n-k]$,
applying Lemma \ref{lem2} to the case $r=k-1$, one derives that
$$\frac{1}{q^{k}}\sum_{i=0}^{n-k+1}\binom{n-i}{k-1}A_{i}
=\frac{1}{q^{k-1}}\sum_{i=0}^{k-1}\binom{n-i}{n-k+1}A^{\bot}_{i}.$$
Furthermore, the minimum distance of $\mathcal{C}$ equals $n-k$,
and so $A_i=0$ for any integer $i$ with $1\le i\le n-k-1$. It then
follows from $A_{0}=1=A_{0}^{\bot}$ that
\begin{align}\label{2.1}\frac{1}{q}\big(\binom{n}{k-1}
+\binom{k}{k-1}A_{n-k}+A_{n-k+1}\big)=
\binom{n}{n-k+1}+\sum_{i=1}^{k-1}\binom{n-i}{n-k+1}A^{\bot}_{i}.
\end{align}

First of all, we show the necessity part. To do so, one
lets $\mathcal{C} $ be a NMDS code. Then $\mathcal{C}^{\bot}$
is an AMDS code with $d(\mathcal{C}^{\bot})=k$.
This yields that $A^{\bot}_{i}=0$ for any integer $i$
with $1\le i\le k-1$. Hence
\begin{align}\label{2.4}
\sum_{i=1}^{k-1}\binom{n-i}{n-k+1}A^{\bot}_{i}=0.
\end{align}
Then from (\ref{2.1}), we deduce that
\begin{align}\label{2.2}\frac{1}{q}
\big(\binom{n}{k-1}
+kA_{n-k}+A_{n-k+1}\big)=\binom{n}{n-k+1}.
\end{align}
Therefore the desired result (\ref{2.0}) follows
immediately. The necessity is proved.

Finally, we show the sufficiency part. Assume
that (\ref{2.0}) is true. Combining (\ref{2.1})
and (\ref{2.2}) gives us (\ref{2.4}).
So by (\ref{2.4}), one can deduce that
$A^{\bot}_{i}=0$ for any integer $i$ with
$1\le i\le k-1$. This implies that
\begin{align}\label{2.5}
d(\mathcal{C}^{\bot})\geq k.
\end{align}
Noticing that $\mathcal{C}$ is an AMDS code with
parameters $[n, k, n-k]$, so $\mathcal{C}^{\bot}$ is
a code with parameters $[n, n-k, d(\mathcal{C}^{\bot})]$.
Since $\mathcal{C}$ is not an MDS code, Lemma
\ref{lem13} tells us that $\mathcal{C}^{\bot}$
is also not an MDS code. Thus from Lemma \ref{lem12}, Singleton
bound gives us that
\begin{align}\label{2.7}
d(\mathcal{C}^{\bot})\leq n-(n-k)=k.
\end{align}
It then follows from (\ref{2.5}) and (\ref{2.7}), we have
$d(\mathcal{C}^{\bot})=k$. Namely, $\mathcal{C}^{\bot}$
is an AMDS code with parameters $[n, n-k, k]$.
Consequently, $\mathcal{C}$ is a NMDS code as desired.
This ends the proof of Lemma \ref{lem4}.
\end{proof}

\begin{lem} \label{lem14}
Let $\mathcal{C}$ be an AMDS code with parameters $[n, k, n-k]$.
Then the Hamming weight of each nonzero codeword ${\bf c}$ in
$\mathcal{C}$ with ${\rm supp}({\bf c})\subseteq \{i_{1},
\cdots,i _{n-k+1}\}\subseteq \langle n\rangle$
is equal to either $n-k$, or $n-k+1$.
\end{lem}
\begin{proof}
Let ${\bf c}$ be a nonzero codeword in $\mathcal{C}$ with
${\rm supp}({\bf c})\subseteq \{i_{1},\cdots,i _{n-k+1}\}\subseteq \langle n\rangle$.
On the one hand, since $\mathcal{C}$ is an AMDS code with parameters
$[n, k, n-k]$, one has $w({\bf c})\geq d(\mathcal{C})=n-k$.
On the other hand, we have $w({\bf c})=|{\rm supp}({\bf c})|\leq |\{i_{1},
\cdots, i _{n-k+1}\}|=n-k+1$. Hence $n-k\leq w({\bf c})\leq n-k+1$.
So the expected result follows immediately.

This completes the proof of Lemma \ref{lem14}.
\end{proof}

\begin{defn}\label{def1}
Let $\mathcal{C}$ be an AMDS code with parameters $[n, k, n-k]$
and let $r$ be an integer with $1\le r\le n$, Then associated with
any given subset ${\{i_{1}, \cdots, i_{r}\}}\subseteq\langle n\rangle$,
we define a subset, denoted by $\mathcal{C}(i_{1}, \cdots, i_{r})$
or $\mathcal{C}(\{i_{1}, \cdots, i_{r}\})$, of $\mathcal{C}$ by
$$\mathcal{C}(i_{1}, \cdots, i_{r})
:=\{{\bf c}\in \mathcal{C}\mid {\rm supp}({\bf c})
\subseteq{\{i_{1}, \cdots, i_{r}\}}\}.$$
\end{defn}

Obviously, for any subset ${\{i_{1}, \cdots, i_{r}\}}
\subseteq\langle n\rangle$, the set $\mathcal{C}(i_{1}, \cdots, i_{r})$
may be empty and forms a vector space under as the usual vector
addition and scalar multiplication on $\mathbb{F}_q$. Now we
introduce two subsets of $\mathcal{C}$ as follows:
$$\mathcal{C}_i(i_{1}, \cdots, i_{r})
:=\{{\bf c}\in \mathcal{C}(i_{1}, \cdots, i_r)\mid w({\bf c})=\delta_i\},$$
where $\delta_i=n-k-1+i$ for $i\in \{1, 2\}$.

We have the following result.
\begin{lem} \label{lem5}
Let $\mathcal{C}$ be an AMDS code with parameters $[n, k, n-k]$.
Then for any subset ${\{i_{1}, \cdots, i_{n-k+1}\}}\subseteq
\langle n\rangle$, we have the following disjoint union:
$$\mathcal{C}(i_{1}, \cdots, i_{n-k+1})\backslash\{{\bf 0}\}=
\mathcal{C}_1(i_{1}, \cdots, i_{n-k+1})
\cup\mathcal{C}_2(i_{1}, \cdots, i_{n-k+1}),$$
where ${\bf 0}$ stands for the zero codeword of $\mathcal{C}$.
\end{lem}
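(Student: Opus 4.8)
The plan is to prove this by a straightforward partition argument combined with Lemma \ref{lem14}. First I would observe that, by the very definition in Definition \ref{def1}, every codeword ${\bf c}$ belonging to $\mathcal{C}(i_{1}, \cdots, i_{n-k+1})$ satisfies ${\rm supp}({\bf c})\subseteq \{i_{1}, \cdots, i_{n-k+1}\}$. Then, for any nonzero such ${\bf c}$, Lemma \ref{lem14} applies and tells us that $w({\bf c})$ is either $n-k$ or $n-k+1$. In the notation introduced just before the lemma, $\delta_1 = n-k-1+1 = n-k$ and $\delta_2 = n-k-1+2 = n-k+1$, so $w({\bf c}) = \delta_1$ means ${\bf c}\in \mathcal{C}_1(i_{1}, \cdots, i_{n-k+1})$ and $w({\bf c}) = \delta_2$ means ${\bf c}\in \mathcal{C}_2(i_{1}, \cdots, i_{n-k+1})$. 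Hence every nonzero codeword of $\mathcal{C}(i_{1}, \cdots, i_{n-k+1})$ lies in the union $\mathcal{C}_1(i_{1}, \cdots, i_{n-k+1})\cup\mathcal{C}_2(i_{1}, \cdots, i_{n-k+1})$, giving one inclusion.

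For the reverse inclusion, I would note that by definition both $\mathcal{C}_1(i_{1}, \cdots, i_{n-k+1})$ and $\mathcal{C}_2(i_{1}, \cdots, i_{n-k+1})$ are subsets of $\mathcal{C}(i_{1}, \cdots, i_{n-k+1})$, and each of their elements has positive Hamming weight (namely $n-k\geq 1$ or $n-k+1$), hence is nonzero. Therefore the union is contained in $\mathcal{C}(i_{1}, \cdots, i_{n-k+1})\backslash\{{\bf 0}\}$. Combining the two inclusions yields the set equality. Finally, the union is disjoint because a codeword cannot simultaneously have Hamming weight $\delta_1 = n-k$ and Hamming weight $\delta_2 = n-k+1$; that is, $\mathcal{C}_1(i_{1}, \cdots, i_{n-k+1})\cap\mathcal{C}_2(i_{1}, \cdots, i_{n-k+1}) = \emptyset$ since $\delta_1\neq \delta_2$.

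There is essentially no obstacle here: the entire content of the lemma is bookkeeping that repackages Lemma \ref{lem14} together with the definitions of $\mathcal{C}(i_{1}, \cdots, i_{r})$, $\mathcal{C}_1$, and $\mathcal{C}_2$. The only point requiring the slightest care is making sure the index shift $\delta_i = n-k-1+i$ is matched correctly against the two possible weights $n-k$ and $n-k+1$ from Lemma \ref{lem14}, and observing that the zero codeword is excluded on the left precisely because the $\mathcal{C}_i$ sets consist of codewords of strictly positive weight. I would write this up in a few lines.
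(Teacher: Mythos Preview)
Your proposal is correct and follows essentially the same approach as the paper: both arguments invoke Lemma~\ref{lem14} to see that any nonzero codeword in $\mathcal{C}(i_{1},\cdots,i_{n-k+1})$ has weight $n-k$ or $n-k+1$, note the reverse inclusion is immediate from the definitions, and observe disjointness because $\delta_1\neq\delta_2$. Your write-up is slightly more careful about why the zero codeword is excluded on both sides, but the content is identical.
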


\begin{proof}
First of all, it is obvious that $\mathcal{C}_1(i_{1}, \cdots, i_{n-k+1})
\cap\mathcal{C}_2(i_{1}, \cdots, i_{n-k+1})=\emptyset$.

Now for any ${\bf 0}\neq{\bf c}\in\mathcal{C}
(i_{1}, \cdots, i_{n-k+1})$, by Lemma \ref{lem14},
one has either $w({\bf c})=n-k$ or $w({\bf c})=n-k+1$.
If $w({\bf c})=n-k$, then ${\bf c}\in\mathcal{C}_1(i_{1}, \cdots, i_{n-k+1})$.
If $w({\bf c})=n-k+1$, then ${\bf c}\in\mathcal{C}_2(i_{1}, \cdots, i_{n-k+1})$.
By the arbitrariness of ${\bf c}$, one then concludes that
$$
\mathcal{C}(i_{1}, \cdots, i_{n-k+1})\subseteq\mathcal{C}_1(i_{1},
\cdots, i_{n-k+1})\cup\mathcal{C}_2(i_{1}, \cdots, i_{n-k+1}).
$$
But we have $\mathcal{C}_1(i_{1}, \cdots, i_{n-k+1})
\cup\mathcal{C}_2(i_{1}, \cdots, i_{n-k+1})\subseteq
\mathcal{C}(i_{1}, \cdots, i_{n-k+1}).$
So the expected result follows immediately.

This completes the proof of Lemma \ref{lem5}.
\end{proof}

\begin{lem}\label{lem6}
Let $\mathcal{C}$ be an AMDS code over $\mathbb{F}_q$
with parameters $[n, k, n-k]$ and let $i$ be a positive
integer no more than $n$. Then each of the following is true:
\begin{enumerate}
\item If the $i$-th component of all codewords in
$\mathcal{C}$ is equal to zero, then
$|\mathcal{C}(\langle n\rangle\backslash\{i\})|=q^{k}$
and $\dim\mathcal{C}(\langle n\rangle\backslash\{i\})=k$.
\item If there exists a codeword in $\mathcal{C}$ with the $i$-th
component nonzero, then we have $|\mathcal{C}(\langle n\rangle\backslash\{i\})|=q^{k-1}$
and $\dim\mathcal{C}(\langle n\rangle\backslash\{i\})=k-1$.
\end{enumerate}
\end{lem}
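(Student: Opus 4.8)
The statement is a dichotomy about the code $\mathcal{C}(\langle n\rangle\backslash\{i\})$ obtained by restricting to codewords whose $i$-th coordinate vanishes. The plan is to exploit the evaluation (projection) map $\pi_i:\mathcal{C}\to\mathbb{F}_q$ sending a codeword ${\bf c}=(c_1,\dots,c_n)$ to its $i$-th component $c_i$. This is plainly $\mathbb{F}_q$-linear, and its kernel is exactly $\mathcal{C}(\langle n\rangle\backslash\{i\})$ by Definition~\ref{def1}: a codeword has $c_i=0$ if and only if $i\notin{\rm supp}({\bf c})$, i.e. ${\rm supp}({\bf c})\subseteq\langle n\rangle\backslash\{i\}$. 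So the whole lemma will follow from the rank–nullity theorem once we determine the image of $\pi_i$.

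For part (1), if the $i$-th component of every codeword is zero then $\pi_i$ is the zero map, so $\mathcal{C}(\langle n\rangle\backslash\{i\})=\ker\pi_i=\mathcal{C}$, whence $\dim\mathcal{C}(\langle n\rangle\backslash\{i\})=\dim\mathcal{C}=k$ and $|\mathcal{C}(\langle n\rangle\backslash\{i\})|=q^k$. For part (2), if some codeword has nonzero $i$-th component then ${\rm im}\,\pi_i$ is a nonzero $\mathbb{F}_q$-subspace of the one-dimensional space $\mathbb{F}_q$, hence equals $\mathbb{F}_q$, so $\pi_i$ is surjective with $\dim{\rm im}\,\pi_i=1$. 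Rank–nullity then gives $\dim\ker\pi_i=\dim\mathcal{C}-1=k-1$, and therefore $|\mathcal{C}(\langle n\rangle\backslash\{i\})|=q^{k-1}$.

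There is no serious obstacle here; the argument is essentially the standard fact about shortened codes, and the only thing to be careful about is the bookkeeping — namely that $\mathcal{C}(\langle n\rangle\backslash\{i\})$ as defined is genuinely the kernel of the coordinate functional, and that $\mathcal{C}$ having dimension $k$ (which is part of the AMDS hypothesis $[n,k,n-k]$) is what feeds rank–nullity. One could alternatively phrase part (2) via a parity-check/generator matrix argument, deleting the $i$-th column, but the linear-functional formulation is cleaner and avoids choosing a basis. I would write it exactly in the two-case form above, invoking only the rank–nullity theorem and the definition of $\mathcal{C}(\cdot)$.
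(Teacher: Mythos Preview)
Your argument is correct. It is, however, a genuinely different route from the paper's. The paper also identifies $\mathcal{C}(\langle n\rangle\backslash\{i\})$ with the fibre $\mathcal{C}_i(0):=\{{\bf c}\in\mathcal{C}\mid {\bf c}_i=0\}$, but instead of invoking rank--nullity it proceeds combinatorially: for each $\alpha\in\mathbb{F}_q$ it considers the fibre $\mathcal{C}_i(\alpha)$, exhibits explicit bijections $\mathcal{C}_i(a)\to\mathcal{C}_i(b)$ (scalar multiplication by $ba^{-1}$ for nonzero $a,b$, and translation by a fixed codeword to compare $\mathcal{C}_i(0)$ with $\mathcal{C}_i(a)$), and then uses the disjoint-union decomposition $\mathcal{C}=\bigcup_{\alpha\in\mathbb{F}_q}\mathcal{C}_i(\alpha)$ to conclude $|\mathcal{C}|=q\,|\mathcal{C}_i(0)|$. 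In effect the paper re-proves, by hand at the set level, exactly the statement that the fibres of a surjective linear functional are cosets of the kernel and hence equinumerous. Your approach via the linear functional $\pi_i$ and rank--nullity packages this into one line and is the standard ``shortened code'' argument; the paper's approach is more elementary in that it avoids naming the rank--nullity theorem, at the cost of several paragraphs of bijection-chasing. Either way, neither proof actually uses the AMDS hypothesis beyond the fact that $\dim\mathcal{C}=k$.
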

\begin{proof}
For arbitrary $\alpha\in\mathbb{F}_q$,
we define $\mathcal{C}_{i}(\alpha)$ as a subset of $\mathcal{C}$ by
$$\mathcal{C}_{i}(\alpha):=\{{\bf c}=({\bf c}_1, \cdots, {\bf c}_n)\in
\mathcal{C}\mid {\bf c}_{i}=\alpha\}.$$
We claim that $\mathcal{C}(\langle n\rangle\backslash\{i\})=\mathcal{C}_i(0)$.
In fact, for any ${\bf c}\in \mathcal{C}(\langle n\rangle\backslash\{i\})$,
one has ${\bf c}_i=0$. This implies that ${\bf c}\in\mathcal{C}_{i}(0)$. Then
$\mathcal{C}(\langle n\rangle\backslash\{i\})\subseteq\mathcal{C}_{i}(0)$.
Conversely, for any ${\bf c}\in\mathcal{C}_{i}(0)$, we must have ${\bf c}_i=0$.
But by Definition \ref{def1}, we have
$\mathcal{C}(\langle n\rangle\backslash\{i\})=
\{{\bf c}\in\mathcal{C}\mid{\rm supp}({\bf c})\subseteq\langle n\rangle\backslash\{i\}\}$
which implies that ${\rm supp}({\bf c})\subseteq\langle n\rangle\backslash\{i\}$.
Hence ${\bf c}\in\mathcal{C}(\langle n\rangle\backslash\{i\})$. That is,
$\mathcal{C}_{i}(0)\subseteq\mathcal{C}(\langle n\rangle\backslash\{i\})$.
Thus we have $\mathcal{C}(\langle n\rangle\backslash\{i\})=\mathcal{C}_i(0)$
as claimed.

In addition, the subset $\mathcal{C}_{i}(0)$
forms a vector subspace of $\mathcal{C}$ under vector
addition and scalar multiplication. It then follows that
$\dim\mathcal{C}(\langle n\rangle\backslash\{i\})
=\dim\mathcal{C}_i(0)$.

(1). Let ${\bf c}_{i}=0$ for any ${\bf c}\in \mathcal{C}$.
Then $\mathcal{C}_i(0)=\mathcal{C}$. Combining the above claim,
one yields that $\mathcal{C}(\langle n\rangle\backslash\{i\})
=\mathcal{C}_i(0)=\mathcal{C}$. Thus
$\# \mathcal{C}(\langle n\rangle\backslash\{i\}
=|\mathcal{C}_{i}(0)|=|\mathcal{C}|=q^{k}$
and $\dim\mathcal{C}(\langle n\rangle\backslash\{i\})
=\dim\mathcal{C}_{i}(0)=\dim\mathcal{C}=k$ as required.
Part (1) is proved.

(2). Assume that there exists a codeword ${\bf c}\in\mathcal{C}$
such that ${\bf c}_{i}\in\mathbb{F}^{*}_{q}$. For such ${\bf c}_i$,
one writes ${\bf c}_{i}=a$. Namely, ${\bf c}\in \mathcal{C}_{i}(a)$.
For any $b\in\mathbb{F}^{*}_q$, we have
$ba^{-1}{\bf c}\in \mathcal{C}$ with $(ba^{-1}{\bf c})_{i}
=ba^{-1}{\bf c}_i=ba^{-1}a=b$. Hence $ba^{-1}{\bf c}\in\mathcal{C}_{i}(b)$.
Therefore if we define a map $\theta$ from $\mathcal{C}_{i}(a)$
to $\mathcal{C}_{i}(b)$ by
\begin{align*}
\theta: \mathcal{C}_{i}(a)&\rightarrow\mathcal{C}_{i}(b), \notag\\
{\bf c}&\mapsto ba^{-1}{\bf c},
\end{align*}
then $\theta$ is well defined. First of all, for any
${\bf c}\in\mathcal{C}_{i}(b)$, one has ${\bf c}_i=b$.
This implies that $(ab^{-1}{\bf c})_i=ab^{-1}{\bf c}_i=ab^{-1}b=a$.
Thus $ab^{-1}{\bf c}\in\mathcal{C}_i(a)$. Moreover, we have
$\theta(ab^{-1}{\bf c})=ba^{-1}\cdot ab^{-1}{\bf c}={\bf c}$. Hence
$\theta$ is surjective. Consequently, for arbitrary
${\bf c}, {\bf c}'\in\mathcal{C}_{i}(b)$ satisfying ${\bf c}={\bf c}'$, one
can find $\bar {\bf c}, \bar {\bf c}'\in\mathcal{C}_{i}(a)$ such
that $\theta(\bar {\bf c})=ba^{-1}\bar {\bf c}={\bf c}$ and
$\theta(\bar {\bf c}')=ba^{-1}\bar {\bf c}'={\bf c}'$.
Since ${\bf c}={\bf c}'$, we have $ba^{-1}\bar {\bf c}=ba^{-1}\bar {\bf c}'$
which infers that $\bar {\bf c}=\bar {\bf c}'$. Hence $\theta$
is injective, and so the map $\theta$ is bijective.
It follows that
\begin{align}\label{2.8}
|\mathcal{C}_{i}(b)|
=|\mathcal{C}_{i}(a)|, \forall b\in\mathbb{F}^{*}_q.
\end{align}

Consequently, we show that the following is true:
\begin{align}\label{2.9}
|\mathcal{C}_{i}(0)|=|\mathcal{C}_{i}(a)|.
\end{align}
This will be done in what follows. On the one hand,
for any ${\bf d}=({\bf d}_1, \cdots, {\bf d}_n)\in\mathcal{C}_{i}(a)$, one has
$({\bf c}-{\bf d})_{i}={\bf c}_i-{\bf d}_i=a-a=0$ that yields that
${\bf c}-{\bf d}\in\mathcal{C}_{i}(0)$. So
$$\{{\bf c}-{\bf d}\mid {\bf d}\in\mathcal{C}_{i}(a)\}\subseteq\mathcal{C}_{i}(0).$$
It then follows that
\begin{align}\label{2.10}
|\mathcal{C}_{i}(a)|
=|\{{\bf d}\mid {\bf d}\in\mathcal{C}_{i}(a)\}|
=|\{{\bf c}-{\bf d}\mid {\bf d}\in\mathcal{C}_{i}(a)\}|
\leq|\mathcal{C}_{i}(0)|.
\end{align}

On the other hand, for any ${\bf f}=({\bf f}_1, \cdots, {\bf f}_n)\in\mathcal{C}_{i}(0)$,
one has $({\bf f}+{\bf c})_i={\bf f}_i+{\bf c}_i=0+a=a$.
Thus ${\bf f}+{\bf c}\in\mathcal{C}_i(a)$ which implies that
$$\{{\bf f}+{\bf c}\mid {\bf f}\in\mathcal{C}_{i}(0)\}
\subseteq\mathcal{C}_{i}(a).$$
Therefore
\begin{align}\label{2.11}
|\mathcal{C}_{i}(0)|=
|\{{\bf f}\mid {\bf f}\in\mathcal{C}_{i}(0)\}|=
|\{{\bf f}+{\bf c}\mid {\bf f}\in\mathcal{C}_{i}(0)\}|
\leq|\mathcal{C}_{i}(a)|.
\end{align}
Then the truth of (2.8) follows immediately from (\ref{2.10}) and (\ref{2.11}).

Hence, for any two distinct elements $b$ and $c$ in $\mathbb{F}^{*}_q$,
combining (\ref{2.8}) and (\ref{2.9}), we derive that
$$|\mathcal{C}_{i}(b)|=|\mathcal{C}_{i}(c)|=|\mathcal{C}_{i}(0)|$$
and
$$\mathcal{C}_{i}(b)\cap\mathcal{C}_{i}(c)=\{{\bf c}\in\mathcal{C}\mid {\bf c}_i=b\}
\cap\{{\bf c}\in\mathcal{C}\mid {\bf c}_i=c\}=\emptyset.$$
Furthermore, since
$$\mathcal{C}=\mathcal{C}_{i}(0)\cup(\bigcup_{a\in
\mathbb{F}^{*}_q}\mathcal{C}_{i}(a)),$$
one deduces that $|\mathcal{C}|
=q|\mathcal{C}_{i}(0)|$. So $|\mathcal{C}_{i}(0)|
=\frac{1}{q}|\mathcal{C}|=q^{k-1}$. Finally,
by the claim, one can deduce that
$|\mathcal{C}(\langle n\rangle\backslash\{i\})|
=|\mathcal{C}_{i}(0)|=q^{k-1}$
and $\dim\mathcal{C}(\langle n\rangle\backslash\{i\})
=\dim\mathcal{C}_{i}(0)=k-1$ as expected. Part (2) is proved.

This finishes the proof of Lemma \ref{lem6}.
\end{proof}

\begin{lem} \label{lem7}
Let $\mathcal{C}$ be an AMDS code with parameters $[n, k, n-k]$.
For any subset $\{i_{1}, \cdots, i_{n-k+1}\}\subseteq\langle n\rangle$,
we have $\dim\mathcal{C}(i_{1}, \cdots, i_{n-k+1})\in\{1,2\}$.
\end{lem}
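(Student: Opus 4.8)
The plan is to bound $\dim\mathcal{C}(i_1,\dots,i_{n-k+1})$ from both sides: a lower bound of $1$ coming from elementary linear algebra, and an upper bound of $2$ coming from the Singleton bound applied to a suitable shortened code, together with the weight dichotomy already recorded in Lemma~\ref{lem14} (equivalently Lemma~\ref{lem5}).

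First I would write $S=\{i_1,\dots,i_{n-k+1}\}$ and $T=\langle n\rangle\backslash S$, so that $|T|=n-(n-k+1)=k-1$. By Definition~\ref{def1}, $\mathcal{C}(i_1,\dots,i_{n-k+1})$ is exactly the set of codewords of $\mathcal{C}$ that vanish on every coordinate in $T$, that is, the kernel of the coordinate-projection $\mathcal{C}\to\mathbb{F}_q^{T}$. Hence, by the rank-nullity theorem, $\dim\mathcal{C}(i_1,\dots,i_{n-k+1})\ge k-|T|=1$; equivalently, one may impose the $k-1$ vanishing conditions one coordinate at a time and invoke Lemma~\ref{lem6} repeatedly, each step lowering the dimension by at most $1$. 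In particular $\mathcal{C}(i_1,\dots,i_{n-k+1})\neq\{{\bf 0}\}$.

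For the upper bound, I would restrict each codeword of $\mathcal{C}(i_1,\dots,i_{n-k+1})$ to the coordinate positions in $S$. Since every such codeword is already zero outside $S$, this restriction map is injective and weight-preserving, so its image is a linear code $\mathcal{C}'$ over $\mathbb{F}_q$ of length $|S|=n-k+1$ and of dimension equal to $\dim\mathcal{C}(i_1,\dots,i_{n-k+1})$. By Lemma~\ref{lem14}, every nonzero codeword of $\mathcal{C}(i_1,\dots,i_{n-k+1})$ has Hamming weight $n-k$ or $n-k+1$, so $d(\mathcal{C}')\ge n-k$. Applying the Singleton bound of Lemma~\ref{lem12} to $\mathcal{C}'$ then gives
\[
n-k\ \le\ d(\mathcal{C}')\ \le\ (n-k+1)-\dim\mathcal{C}(i_1,\dots,i_{n-k+1})+1,
\]
whence $\dim\mathcal{C}(i_1,\dots,i_{n-k+1})\le 2$. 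Combining the two bounds yields $\dim\mathcal{C}(i_1,\dots,i_{n-k+1})\in\{1,2\}$, as claimed.

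The only point I expect to require care is the bookkeeping around $\mathcal{C}'$: verifying that restricting to $S$ really produces a bona fide linear code of length $n-k+1$, dimension $\dim\mathcal{C}(i_1,\dots,i_{n-k+1})$ and minimum distance at least $n-k$ (using that the restriction drops only all-zero coordinates), so that Lemma~\ref{lem12} applies legitimately. The reason the dimension can be as large as $2$, rather than being forced down to $1$ as it would be for an MDS code, is precisely the two-weight phenomenon ($n-k$ versus $n-k+1$) peculiar to AMDS codes; this is why the correct statement is a membership in $\{1,2\}$ rather than an equality, and it is exactly the flexibility needed for the weight-distribution computation in Section~3.
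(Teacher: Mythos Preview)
Your proof is correct. The lower bound coincides with the paper's argument (iterated use of Lemma~\ref{lem6}, equivalently rank--nullity for the projection onto the $k-1$ deleted coordinates). For the upper bound, however, you take a different route: you pass to the shortened code $\mathcal{C}'$ of length $n-k+1$ and apply the Singleton bound (Lemma~\ref{lem12}) directly, whereas the paper argues by contradiction, assuming $\dim\mathcal{C}(i_1,\dots,i_{n-k+1})>2$ and then stripping off two further coordinates via Lemma~\ref{lem6} to produce a nonzero codeword supported on only $n-k-1$ positions, contradicting $d(\mathcal{C})=n-k$. Your approach is a bit more structural and packages the estimate into a single inequality; the paper's approach is more elementary in that it avoids introducing the auxiliary code $\mathcal{C}'$ and reuses Lemma~\ref{lem6} for both bounds. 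Either way the content is the same: two coordinate deletions cannot kill a subspace of dimension at least $3$, which is exactly what Singleton says for a length-$(n-k+1)$ code of distance $\ge n-k$. Note also that you do not really need Lemma~\ref{lem14} here; $d(\mathcal{C}')\ge n-k$ is immediate from $d(\mathcal{C})=n-k$.
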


\begin{proof}
For any subset $\{i_{1}, \cdots, i_{n-k+1}\}
\subseteq\langle n\rangle$, we write
$$\{i_{n-k+2}, \cdots, i_{n}\}:=\langle n\rangle\setminus\{i_{1}, \cdots, i_{n-k+1}\}.$$
Then
$$\{i_{1}, \cdots, i_{n-k+1}\}=\langle n\rangle\setminus
\{i_{n-k+2}, \cdots, i_{n}\}.$$
By Lemma \ref{lem6}, we know that
$$\dim\mathcal{C}({i_{1}, \cdots,  i_{n-k+1}})=
\dim\mathcal{C}(\langle n\rangle\backslash \{i_{n-k+2}\}
\backslash\cdots\backslash \{i_{n}\})\geq k-(k-1)=1.$$

Suppose that there exists subset
$\{i_{1}, \cdots, i_{n-k+1}\}\subseteq\langle n\rangle$
such that
$$\dim\mathcal{C}({i_{1}, \cdots, i_{n-k+1}})>2.$$
Then by Lemma \ref{lem6}, one obtains that
\begin{align*}
&\dim\mathcal{C}(i_{1}, \cdots, i_{n-k-1})\\
=&\dim(\mathcal{C}(i_{1}, \cdots, i_{n-k+1})\backslash\{i_{n-k+1}\})\backslash\{i_{n-k}\})\\
\geq &\dim\mathcal{C}(i_{1}, \cdots, i_{n-k+1})-2>0.
\end{align*}
It follows that there is at least a nonzero codeword
${\bf c}\in\mathcal{C}(i_{1}, \cdots, i_{n-k-1})$,
we have ${\rm supp}({\bf c})\subseteq \{i_{1},
\cdots, i_{n-k-1}\}$. Hence $w({\bf c})=| {\rm supp}({\bf c})|
\le|\{i_{1}, \cdots, i_{n-k-1}\}|=n-k-1$.
But $w({\bf c})\ge d({\mathcal C})=n-k$. Hence we arrive at
$$
n-k-1\ge w({\bf c})\ge d({\mathcal C})=n-k
$$
which is impossible. So
$\dim\mathcal{C}({i_{1}, \cdots, i_{n-k+1}})\le 2$
for any subset $\{i_{1}, \cdots, i_{n-k+1}\}
\subseteq\langle n\rangle$.
It follows that for any subset $\{i_{1}, \cdots,
i_{n-k+1}\}\subseteq\langle n\rangle$, we have
$\dim\mathcal{C}(i_{1}, \cdots, i_{n-k+1})\in\{1,2\}$
as desired. Lemma \ref{lem7} is proved.
\end{proof}

\begin{lem}\label{lem15}
Let $\mathcal{C}$ be an AMDS code with parameters
$[n, k, n-k]$. Let $\delta_i=n-k-1+i$ for $i\in\{1,2\}$.
Let $K_i\subseteq\langle n\rangle$ be an arbitrary subset
with $\delta_i\leq |K_i|\leq n$. Then each of the following is true:

{\rm  (1).} We have the disjoint union:
$$\mathcal{C}_i(K_i)=\bigcup_{\begin{subarray}{c}
I\subseteq K_i\\
|I|=\delta_i
\end{subarray}}\mathcal{C}_i(I).$$

{\rm (2).} We have
$$|\mathcal{C}_i(K_i)|=\sum_{\begin{subarray}{c}
I\subseteq K_i\\
|I|=\delta_i
\end{subarray}}|\mathcal{C}_i(I)|.$$
\end{lem}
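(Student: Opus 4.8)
The plan is to prove part (1)---the decomposition of $\mathcal{C}_i(K_i)$ into the pieces $\mathcal{C}_i(I)$ indexed by the $\delta_i$-element subsets $I$ of $K_i$---and then deduce part (2) at once, since the cardinality of a disjoint union is the sum of the cardinalities of its parts. For the inclusion $\bigcup_{I\subseteq K_i,\,|I|=\delta_i}\mathcal{C}_i(I)\subseteq\mathcal{C}_i(K_i)$ there is nothing to do beyond unwinding the definition: if $I\subseteq K_i$ with $|I|=\delta_i$ and ${\bf c}\in\mathcal{C}_i(I)$, then ${\rm supp}({\bf c})\subseteq I\subseteq K_i$ and $w({\bf c})=\delta_i$, so ${\bf c}\in\mathcal{C}_i(K_i)$.

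The one step requiring a moment's attention is the reverse inclusion. Given ${\bf c}\in\mathcal{C}_i(K_i)$, we have $w({\bf c})=\delta_i$ and ${\rm supp}({\bf c})\subseteq K_i$; since $\mathcal{C}$ is AMDS, $\delta_i\ge n-k=d(\mathcal{C})\ge 1$, so ${\bf c}\neq{\bf 0}$ and $|{\rm supp}({\bf c})|=w({\bf c})=\delta_i$. The hypothesis $|K_i|\ge\delta_i$ then guarantees that $I:={\rm supp}({\bf c})$ is itself a $\delta_i$-element subset of $K_i$, and clearly ${\bf c}\in\mathcal{C}_i(I)$, so ${\bf c}$ lies in the union. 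For disjointness, suppose ${\bf c}\in\mathcal{C}_i(I)\cap\mathcal{C}_i(I')$ with $I,I'\subseteq K_i$ and $|I|=|I'|=\delta_i$; then ${\rm supp}({\bf c})\subseteq I\cap I'$ while $|{\rm supp}({\bf c})|=\delta_i$, so $|I\cap I'|\ge\delta_i$, and combined with $I\cap I'\subseteq I$ and $|I|=\delta_i$ this forces $I\cap I'=I=I'$. Hence distinct index sets give disjoint pieces, and the union in (1) is disjoint.

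Part (2) is now immediate, since $|\mathcal{C}_i(K_i)|=\sum_{I\subseteq K_i,\,|I|=\delta_i}|\mathcal{C}_i(I)|$ by the disjointness in (1). I do not foresee any genuine obstacle; the decisive---if elementary---observation is that a codeword of weight exactly $\delta_i$ determines its support set precisely, so that support can serve as the index set $I$, with admissibility ensured by the standing assumption $|K_i|\ge\delta_i$. None of the earlier structural results, Lemmas \ref{lem14}--\ref{lem7}, are needed here; the argument uses only the definition of $\mathcal{C}_i(\cdot)$ and the inequality $n-k\ge 1$ valid for any AMDS code.
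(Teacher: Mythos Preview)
Your proof is correct and follows essentially the same route as the paper's: both establish part (1) first and derive part (2) from disjointness, both show the two inclusions by unwinding the definition and taking $I={\rm supp}({\bf c})$ for the reverse one, and both prove disjointness by noting that ${\bf c}\in\mathcal{C}_i(I)\cap\mathcal{C}_i(I')$ forces $|{\rm supp}({\bf c})|\le |I\cap I'|<\delta_i$ when $I\neq I'$, contradicting $w({\bf c})=\delta_i$. Your write-up is slightly more explicit about why ${\bf c}\neq{\bf 0}$ and why $|I|=\delta_i$, but the argument is the same.
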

\begin{proof}
Since part (2) is a obvious corollary of part (1), we need just
to show part (1). In the following, we show part (1).

Let $i\in\{1,2\}$ be given. Pick an arbitrary subset $I\subseteq K_i$
with $|I|=\delta_i$ and $\mathcal{C}_i(I)\neq \emptyset$.
We pick a ${\bf c}\in \mathcal{C}_i(I)$ with
$w({\bf c})=\delta_i$. Then ${\bf c}\in\mathcal{C}_i(K_i)$
which implies that $\mathcal{C}_i(I)\subseteq
\mathcal{C}_i(K_i)$. It follows that
$$\bigcup_{\begin{subarray}{c}
I\subseteq K_i |I|=\delta_i
\end{subarray}}\mathcal{C}_i(I)\subseteq \mathcal{C}_i(K_i).$$
Conversely, for any ${\bf c}\in\mathcal{C}_i(K_i)$,
writing ${\rm supp}({\bf c})=I$, we have
$${\bf c}\in \mathcal{C}_i(I)\subseteq\bigcup_{\begin{subarray}{c}
I\subseteq K_i\\
|I|=\delta_i
\end{subarray}}\mathcal{C}_i(I).$$
It then follows that
$$\mathcal{C}_i(K_i)
\subseteq\bigcup_{\begin{subarray}{c}
I\subseteq K_i |I|=\delta_i
\end{subarray}}\mathcal{C}_i(I).$$
Thus
$$\mathcal{C}_i(K_i)=\bigcup_{\begin{subarray}{c}
I\subseteq K_i\\
|I|=\delta_i
\end{subarray}}\mathcal{C}_i(I).$$
It remains to show that this union is disjoint
that will be done in what follows.

We claim that for $i\in\{1,2\}$, if $I_{1}, I_{2}\subseteq K$ with
$I_{1}\neq I_{2}$ and $ {\rm supp}(I_{1})={\rm supp}(I_{2})=\delta_i$,
then $\mathcal{C}_{i}(I_{1})\cap\mathcal{C}_{i}(I_{2})=\emptyset $.
Picking a ${\bf c}\in\mathcal{C}_{i}(I_{1})
\cap\mathcal{C}_{i}(I_{2})$ gives that
${\rm supp}({\bf c})\subseteq I_{1}\cap I_{2}$.
Since $I_{1}\neq I_{2}$,
one deduces that $|{\rm supp}({\bf c})|\leq |I_{1}
\cap I_{2}|<|I_1|=\delta_i$.
This contradicts with the fact
$|{\rm supp}({\bf c})|=\delta_i$. Hence
the proofs of the claim and  part (1) are complete.
So Lemma \ref{lem15} is proved.
\end{proof}

We introduce a concept. We say that a codeword
${\bf c}\in\mathcal{C}$ is {\it monic} if, as a vector,
the first nonzero component is the multiplicative
identity $1_{\mathbb{F}_q}$ of the finite field $\mathbb{F}_q$.

\begin{lem}\label{lem8}
Let $\mathcal{C}$ be an AMDS code over $\mathbb{F}_q$
with parameters $[n, k, n-k]$. Then it holds that
\begin{align}\label{2.6}
kA_{n-k}+ A_{n-k+1}=(q-1)\binom{n}{n-k+1}
\end{align}
if and only if it holds that
\begin{align}\label{6}
\dim \mathcal{C}(i_{1}, \cdots, i_{n-k+1})=1,
\forall\{i_{1}, \cdots, i_{n-k+1}\}\subseteq\langle n\rangle.
\end{align}
\end{lem}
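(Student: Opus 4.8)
The plan is to establish both directions at once by a double counting of the set
$$
S := \big\{\, ({\bf c}, J) \ \big|\ {\bf c}\in\mathcal{C}\ \text{is monic},\ w({\bf c})\in\{n-k,\,n-k+1\},\ J\subseteq\langle n\rangle,\ |J|=n-k+1,\ {\rm supp}({\bf c})\subseteq J \,\big\}.
$$
First I would count $|S|$ by fixing the codeword. Any nonzero codeword spans a one-dimensional subspace of $\mathbb{F}_q^{n}$ consisting of $q-1$ nonzero codewords, all of the same Hamming weight and exactly one of which is monic; hence there are $A_{n-k}/(q-1)$ monic codewords of weight $n-k$ and $A_{n-k+1}/(q-1)$ monic codewords of weight $n-k+1$. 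A monic codeword ${\bf c}$ of weight $n-k$ has $|{\rm supp}({\bf c})|=n-k$, so it can be enlarged to an $(n-k+1)$-subset $J\supseteq{\rm supp}({\bf c})$ in exactly $k$ ways, whereas a monic codeword of weight $n-k+1$ forces $J={\rm supp}({\bf c})$. This gives
$$
|S| = \frac{k A_{n-k} + A_{n-k+1}}{q-1}.
$$

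Next I would count $|S|$ by fixing $J=\{i_{1},\dots,i_{n-k+1}\}$. By Lemma \ref{lem14} (and Lemma \ref{lem5}), every nonzero codeword of $\mathcal{C}(i_{1},\dots,i_{n-k+1})$ has weight $n-k$ or $n-k+1$, so the $J$-fibre of $S$ is precisely the set of monic nonzero codewords of the subspace $\mathcal{C}(i_{1},\dots,i_{n-k+1})$. In a $d$-dimensional $\mathbb{F}_q$-subspace the monic vectors are in bijection with the one-dimensional subspaces, so there are $(q^{d}-1)/(q-1)$ of them; since $\dim\mathcal{C}(i_{1},\dots,i_{n-k+1})\in\{1,2\}$ by Lemma \ref{lem7}, the $J$-fibre has size $1$ if this dimension is $1$ and size $q+1$ if it is $2$. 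Letting $N$ denote the number of $(n-k+1)$-subsets $J\subseteq\langle n\rangle$ with $\dim\mathcal{C}(J)=2$, we obtain
$$
|S| = \Big(\binom{n}{n-k+1}-N\Big) + (q+1)N = \binom{n}{n-k+1} + qN.
$$

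Equating the two evaluations of $|S|$ yields $k A_{n-k}+A_{n-k+1} = (q-1)\binom{n}{n-k+1} + q(q-1)N$. Since $q(q-1)>0$ and $N\ge 0$, the identity (\ref{2.6}) holds precisely when $N=0$, that is, precisely when $\dim\mathcal{C}(i_{1},\dots,i_{n-k+1})=1$ for every $(n-k+1)$-subset of $\langle n\rangle$ (recall that $\dim\mathcal{C}(i_{1},\dots,i_{n-k+1})\ge 1$ always holds, by Lemma \ref{lem7}), which is exactly (\ref{6}). The bookkeeping — that each line contributes one monic codeword, that distinct monic codewords span distinct lines, and the count $(q^{d}-1)/(q-1)$ of monic vectors in a $d$-dimensional subspace — is elementary; the step demanding the most care is matching the two sides, namely verifying that for each $J$ the $J$-fibre of $S$ is the whole set of monic codewords of $\mathcal{C}(J)$ with no contributions from codewords of other weights, which is exactly where Lemmas \ref{lem5}, \ref{lem14} and \ref{lem7} enter.
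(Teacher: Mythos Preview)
Your proof is correct. The central identity you obtain,
\[
kA_{n-k}+A_{n-k+1}=(q-1)\binom{n}{n-k+1}+q(q-1)N,
\]
with $N$ the number of $(n-k+1)$-subsets $J$ having $\dim\mathcal{C}(J)=2$, is equivalent to the paper's unconditional identity (\ref{2.23}), namely $kA_{n-k}+A_{n-k+1}=\sum_{J}(q^{\dim\mathcal{C}(J)}-1)$. The paper, however, uses that identity only for the direction (\ref{2.6})$\Rightarrow$(\ref{6}), deriving it through Lemma~\ref{lem15} and a sequence of sum rearrangements; for the reverse direction (\ref{6})$\Rightarrow$(\ref{2.6}) it gives a separate and much longer argument, introducing auxiliary sets $F_i$, $E_i$, $M_i$ and proving three claims to evaluate $|F_1|+|F_2|$ under the hypothesis that every $\dim\mathcal{C}(J)=1$. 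Your double-counting of pairs $({\bf c},J)$ via monic representatives handles both directions at once and is considerably shorter; it also makes the role of Lemma~\ref{lem7} transparent, since that lemma is precisely what pins each $J$-fibre to size $1$ or $q+1$. The only thing the paper's longer route avoids is the projective count $(q^{d}-1)/(q-1)$ of monic vectors in a $d$-dimensional subspace, but that count is elementary.
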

\begin{proof}
First, we show the sufficiency part. We assume that
(\ref{6}) is true. Now for $i\in\{1,2\}$,
we define three sets $F_i$, $C_i$ and $E_i(t)$ as follows:
$$F_i:=\{\{i_1, ..., i_{n-k+1}\}\subseteq \langle n\rangle \mid
\mathcal{C}_i(i_{1}, \cdots, i_{n-k+1})\ne\emptyset\},$$
$$C_1:=\{{\bf c} \in\mathcal{C} \mid w({\bf c})=n-k\},
C_2:=\{{\bf c} \in\mathcal{C} \mid w({\bf c})=n-k+1\},$$
$$E_i(t):=\{{\bf c}\in C_i\mid {\rm\ the \ first \ nonzero
\ component \ of \ codeword}\ {\bf c} {\rm\ is} \ t\}
\ {\rm for} \ t\in\mathbb{F}_{q}^{*}.$$
Then for $i\in\{1,2\}$, $E_i(t)\cap E_i(t')=\emptyset$ if
$t, t'\in\mathbb{F}_{q}^{*}$ and $t\ne t'$. Moreover,
$|C_1|=A_{n-k}$ and $|C_2|=A_{n-k+1}$. For brevity, we
let $E_i:=E_i(1_{\mathbb{F}_{q}})$ for $i=1, 2$.

{\sc Claim I.} We have $|E_i(t)|=|E_i| \ (i=1, 2)$
for any $t\in\mathbb{F}_{q}^{*}$.

In fact, for any codeword ${\bf c_{t}}\in E_i(t)$, let
$({\bf c_{t}})_{i_{1}}$ be the first nonzero component
of ${\bf c_{t}}$ with $1\le i_1\le n$. Then
$({\bf c_{t}})_{i_{1}}^{-1}{\bf c_{t}}\in E_i$, from which  one derives
$E_i(t)\subseteq E_i $.
It then follows that $|E_i(t)|\leq|E_i|$. Conversely, let
${\bf c}\in E_i$ be any given element. Since ${\bf c}$
is monic, one has $t{\bf c}\in E_i(t)$. Thus
$E_i\subseteq E_i(t)$, and so $|E_i(t)|\geq |E_i|$
from which one deduces that $|E_i(t)|=|E_i|$ as claimed.
So Claim I is proved.

For $i\in\{1,2\}$, since all $E_i(t), t\in\mathbb{F}_{q}^{*}$,
are disjoint, we have
$\bigcup_{t\in \mathbb{F}_{q}^{*}}E_i(t)=C_i$.
So by Claim I,
$$|C_i|=\sum_{t\in \mathbb{F}_{q}^{*}}|E_i(t)|=(q-1)|E_i|.$$
It follows that
\begin{align}\label{2.3}
|E_1|=\frac{1}{q-1}|C_1|=\frac{A_{n-k}}{q-1}
\end{align}
and
\begin{align}\label{2.19}
|E_2|=\frac{1}{q-1}|C_2|=\frac{A_{n-k+1}}{q-1}.
\end{align}

{\sc Claim II.} We have $|F_1|=k|E_1|$.

In order to prove Claim II, we define the map
\begin{align*}
\vartheta: & E_1 \rightarrow\tilde{M}_1:=
\{{\rm supp}({\bf c})\mid {\bf c}\in E_1\} \\
& {\bf c}\mapsto {\rm supp}({\bf c}).
\end{align*}

Clearly, it is well
defined and is surjective. Now we show it is injective.
In fact, let ${\rm supp}({\bf c})={\rm supp}({\bf c}')$,
where ${\bf c}, {\bf c}'\in E_1$. Then $w({\bf c})=
w({\bf c}')=n-k$ and ${\bf c}$ and ${\bf c}'$ are monic.
One may write ${\rm supp}({\bf c})
={\rm supp}({\bf c}')=\{j_1, ..., j_{n-k}\}
(\subseteq\langle n\rangle)$ with $j_1<...<j_{n-k}$.
Since ${\bf c}$ and ${\bf c}'$ are monic, we have
${\rm supp}({\bf c}-{\bf c}')\subseteq\{j_2, ..., j_{n-k}\}$
and so $w({\bf c}-{\bf c}')=|{\rm supp}({\bf c}-{\bf c}')|
\le n-k-1$. However, ${\bf c}-{\bf c}'\in {\mathcal C}$ and
$\mathcal{C}$ is an $[n, k, n-k]$ AMDS code. Hence we
must have ${\bf c}-{\bf c}'={\bf 0}$, the zero codeword of
${\mathcal C}$. In other words, ${\bf c}={\bf c}'$. Thus
$\vartheta$ is injective, and so is a bijection. Therefore
$|E_1|=|\tilde{M}_1|$.

Now we define a set $M_1$ by
$$M_1:=\{{\rm supp}({\bf c})\cup\{j_{{\bf c}}\} \mid {\bf c}\in E_1,
j_{{\bf c}}\in\langle n\rangle\backslash{\rm supp}({\bf c})\}.$$
Then
$M_1=\{{\rm supp}({\bf c})\cup\{j_{{\bf c}}\} \mid
{\bf c}\in\mathcal{C}, w({\bf c})=n-k, {\bf c} \ {\rm is \ monic},
\ j_{{\bf c}}\in\langle n\rangle\backslash{\rm supp}({\bf c})\}.$
Now let $N:={\rm supp}({\bf c})\cup \{j_{\bf c}\}
={\rm supp}({\bf c}')\cup \{j_{\bf c}'\}$, where
${\bf c}, {\bf c}'\in E_1$, $j_{\bf c}\in\langle n\rangle
\backslash{\rm supp}({\bf c})$,
$j_{{\bf c}'}\in\langle n\rangle\backslash{\rm supp}({\bf c}')$.
Then $\# N=n-k+1$, ${\rm supp}({\bf c})\subseteq N$
and ${\rm supp}({\bf c}')\subseteq N$.
It then follows that
${\bf c}, {\bf c}'\in\mathcal{C}_1(N)\subseteq\mathcal{C}(N)$.
By the condition (\ref{6}), one has $\dim \mathcal{C}(N)=1$.
This implies that ${\bf c}=\ell{\bf c}'$ with
$\ell\in \mathbb{F}^{*}_q$. Since ${\bf c} $ and ${\bf c}'$
are monic, one has $\ell=1$ which implies that ${\bf c}={\bf c}'$.
Thus $j_{\bf c}=j_{{\bf c}'}$. Hence any two elements in
the set $M_1$ are distinct.

From this and noticing that
${\rm supp}({\bf c})=n-k$ for any ${\bf c}\in E_1$,
we then derive that
\begin{align}\label{2.14}
|M_1|=|\tilde{M}_1|\cdot|(\langle n\rangle\backslash{\rm supp}
({\bf c}))|=|E_1|\cdot|(\langle n\rangle\backslash{\rm supp}
({\bf c}))|=|E_1|\cdot (n-(n-k))=k|E_1|.
\end{align}

Now we prove that $F_1=M_1$. On the one hand, for any
${\rm supp}({\bf c})\cup\{j_{{\bf c}}\}\in M_1$,
we have ${\bf c} \in E_1$ and $j_{{\bf c}}\in\langle n\rangle
\backslash{\rm supp}({\bf c})$. Thus
$|{\rm supp}({\bf c})|=n-k$. But ${\rm supp}({\bf c})
\subseteq{\rm supp}({\bf c})\cup\{j_{{\bf c}}\}$.
Hence ${\bf c}\in{\mathcal C}_1({\rm supp}({\bf c})
\cup\{j_{{\bf c}}\})$, namely, ${\mathcal C}_1({\rm supp}
({\bf c})\cup\{j_{{\bf c}}\})\ne\emptyset$. It the follows
from $| {\rm supp}({\bf c})\cup\{j_{{\bf c}}\}|=n-k+1$ that
${\rm supp}({\bf c})\cup\{j_{{\bf c}}\}\in F_1$.
By the arbitrariness of ${\rm supp}({\bf c})\cup\{j_{{\bf c}}\}\in M_1$,
one then concludes that $M_1\subseteq F_1$.
It remains to show that $F_1\subseteq M_1$. Actually,
for any $\{i_{1}, \cdots, i_{n-k+1}\}\in F_1$,
we have $\mathcal{C}_1(i_{1}, \cdots, i_{n-k+1})\ne\emptyset$.
Pick an ${\bf e}\in\mathcal{C}_1(i_{1}, \cdots, i_{n-k+1})$.
Then $w({\bf e})=n-k$ and ${\rm supp}({\bf e})\subseteq\{i_{1},
\cdots, i_{n-k+1}\}$. Consider the following two cases.

{\it Case 1.} ${\bf e}_{i_{1}}=0$. Then
${\bf e}_{i_{2}}\neq 0$ and ${\rm supp}({\bf e}_{i_{2}}
^{-1}{\bf e})={\rm supp}({\bf e})=\{i_{2}, \cdots, i_{n-k+1}\}$.
So $w({\bf e}_{i_{2}}^{-1}{\bf e})=|{\rm supp}
({\bf e}_{i_{2}}^{-1} {\bf e})|=|{\rm supp}({\bf e})|=n-k$.
It implies that
${\bf e}^{-1}_{i_{2}} {\bf e}\in E_1$. At this moment, we have
$$\{i_{1}, i_2, \cdots, i_{n-k+1}\}={\rm supp}({\bf e})\cup\{i_1\}
={\rm supp}({\bf e}_{i_{2}}^{-1}{\bf e})\cup\{i_1\}\in M_1.$$
So we have $F_1\subseteq M_1$ as desired.

{\it Case 2.} ${\bf e}_{i_{1}}\neq 0$. Likewise, we have
${\bf e}^{-1}_{i_{1}}{\bf e}\in E_1$, which infers that
${\rm supp}({\bf e}^{-1}_{i_{1}}{\bf e})={\rm supp}({\bf e})
\subseteq\{i_{1}, \cdots, i_{n-k+1}\}$.
Hence we arrive at
\begin{align*}
\{i_{1}, \cdots, i_{n-k+1}\}=& {\rm supp}({\bf e})
\cup(\{i_{1}, \cdots, i_{n-k+1}\}\backslash{\rm supp}({\bf e}))\\
=& {\rm supp}({\bf e}^{-1}_{i_{1}}{\bf e})
\cup(\{i_{1}, \cdots, i_{n-k+1}\}\backslash{\rm supp}
({\bf e}^{-1}_{i_{1}}{\bf e}))\in M_1.
\end{align*}
By the arbitrariness of $\{i_{1}, \cdots, i_{n-k+1}\}\in F_1$,
one has $F_1\subseteq M_1$ as required. One then concludes
that $F_1=M_1$.

Finally, from (\ref{2.14}) one derives that
$|F_1|=|M_1|=k|E_1|$ as claimed. Claim II is proved.

Consequently, combining  (\ref{2.3}) and Claim II,
we derive that
\begin{align}\label{4''}
|F_1|= k|E_1|=\frac{kA_{n-k}}{q-1}.
\end{align}

{\sc Claim III.} We have $|F_2|=|E_2|$.

We define a set $M_2$ by $M_2:=\{{\rm supp}
({\bf c})\mid {\bf c}\in E_2\}$. Then
$M_2=\{{\rm supp}({\bf c})\mid
{\bf c}\in\mathcal{C}, w({\bf c})=n-k+1,
{\bf c} \ {\rm is \ monic}\}$.
Let $N':={\rm supp}({\bf c})={\rm supp}({\bf c}')$,
where ${\bf c}, {\bf c}'\in E_2$.
Then $|N'|=n-k+1$. It then follows that
${\bf c}, {\bf c}'\in\mathcal{C}_2(N')\subseteq\mathcal{C}(N')$.
By the condition (\ref{6}), one has $\dim \mathcal{C}(N')=1$.
This implies that ${\bf c}=\ell{\bf c}'$ with
$\ell\in \mathbb{F}^{*}_q$. Since ${\bf c} $ and
${\bf c}'$ are monic, one has $\ell=1$
which implies that ${\bf c}={\bf c}'$. Thus any two
elements in the set $M_2$ are distinct. It then
follows that
\begin{align}\label{2.15}
|M_2|=|E_2|.
\end{align}

Now we prove that $F_2=M_2$. On the one hand,
take arbitrarily ${\rm supp}({\bf e})\in M_2$.
Then we have ${\bf e} \in E_2$. It is obvious that
$|{\rm supp}({\bf e})|=n-k+1$. Then ${\rm supp}
({\bf e})\in F_2$. By the arbitrariness of
${\rm supp}({\bf e})\in M_2$,
one concludes that $M_2\subseteq F_2$.
On the other hand, for any $\{i_{1}, \cdots,
i_{n-k+1}\}\in F_2$, we have
$\mathcal{C}_2(i_{1}, \cdots, i_{n-k+1})\ne\emptyset$.
Then picking an ${\bf e'}\in\mathcal{C}_2(i_{1},
\cdots, i_{n-k+1})$ gives us that
${\rm supp}({\bf e'})=\{i_{1}, \cdots, i_{n-k+1}\}$
and $w({\bf e'})=n-k+1$. It follows that
$({\bf e}')^{-1}_{i_{1}}{\bf e}'\in E_2$. Therefore
$$\{i_{1}, \cdots, i_{n-k+1}\}={\rm supp}({\bf e}')
={\rm supp}(({\bf e}')^{-1}_{i_{1}}{\bf e}')\in M_2.$$
By the arbitrariness of $\{i_{1}, \cdots, i_{n-k+1}\}\in F_2$,
one has $F_2\subseteq M_2$, from which one derives that
$F_2=M_2$ as expected.

We can now deduce from (\ref{2.15}) that
$|F_2|=|M_2|=|E_2|$ as asserted. So Claim III
is proved.

Combining  (\ref{2.19}) and Claim III, we can easily deduce that
\begin{align}\label{4}
|F_2|= |E_2|= \frac{A_{n-k+1}}{q-1}.
\end{align}

Now from (\ref{4''}) and (\ref{4}), one yields that
\begin{align}\label{2.20}
|F_1|+|F_2|=\frac{1}{q-1}( kA_{n-k}+ A_{n-k+1}).
\end{align}

In what follows, we show that $F_1\cap F_2=\emptyset$.
We assume that $F_1\cap F_2\ne \emptyset$. Then we can
choose a set $\{i_{1}, \cdots, i_{n-k+1}\}\in F_1\cap F_2$.
So $\{i_{1}, \cdots, i_{n-k+1}\}\in F_1$ and
$\{i_{1}, \cdots, i_{n-k+1}\}\in F_2$. This infers
that there exist ${\bf f}, {\bf f'}\in\mathcal{C}$
such that ${\bf f}\in\mathcal{C}_1(i_{1}, \cdots, i_{n-k+1})$
and ${\bf f'}\in\mathcal{C}_2(i_{1}, \cdots, i_{n-k+1})$.
One can deduce that $w({\bf f})=n-k$ and $w({\bf f}')=n-k+1$.
Moreover, we have ${\bf f}, {\bf f'}\in\mathcal{C}(i_{1},
\cdots, i_{n-k+1})$. However, the condition (\ref{6})
tells us that $\dim \mathcal{C}(i_{1}, \cdots, i_{n-k+1})=1$.
Hence we must have ${\bf f}=\ell{\bf f}'$ with
$\ell\in \mathbb{F}^{*}_q$. This implies that
$w({\bf f})=w({\bf f}')$ which is impossible since
$w({\bf f})=n-k$ and $w({\bf f}')=n-k+1$.
Hence we have $F_1\cap F_2=\emptyset$ as expected.

Now for any subset$\{i_{1}, \cdots, i_{n-k+1}\}
\subseteq \langle n\rangle$, Lemma \ref{lem7} tells us
$\dim\mathcal{C}(i_{1}, ..., i_{n-k+1})\in\{1,2\}$ which
implies that $\mathcal{C}(i_{1}, \cdots, i_{n-k+1})
\ne\emptyset$. Finally, applying the inclusion-exclusion
principle (see, for example, \cite{[Ap]}) and Lemma
\ref{lem5}, we arrive at
\begin{align}\label{2.12}
|F_1|+|F_2|=&|(F_1\cup F_2)|+|(F_1\cap F_2)| \notag\\
=&|(F_1\cup F_2)| \notag\\
=&|\{\{i_{1}, \cdots, i_{n-k+1}\}\subseteq\langle n\rangle
\mid \mathcal{C}_1(i_{1}, \cdots, i_{n-k+1})
\cup\mathcal{C}_2(i_{1}, \cdots, i_{n-k+1})\ne\emptyset\}| \notag\\
=&|\{\{i_{1}, \cdots, i_{n-k+1}\}\subseteq\langle n\rangle
\mid \mathcal{C}(i_{1}, \cdots, i_{n-k+1})
\backslash\{{\bf 0}\}\ne\emptyset\}|\notag\\
=& |\{\{i_{1}, \cdots, i_{n-k+1}\} \mid
\{i_{1}, \cdots, i_{n-k+1}\}\subseteq\langle n\rangle\}|\notag\\
=&\binom{n}{n-k+1}.
\end{align}
Therefore, by the (\ref{2.20}) and (\ref{2.12}), the desired
result (\ref{2.6}) follows immediately. So the sufficiency part
is proved.

Now we are in the position to give the proof of the
necessity part. Suppose that (\ref{6}) is true.
Letting $K_i=\langle n\rangle$ $(i=1, 2)$ in Lemma
\ref{lem15} tells us that
\begin{align}\label{2.21}
kA_{n-k}= &k|C_1|\notag\\
=& k|\mathcal{C}_{1}(\langle n\rangle)|\notag\\
=& k\sum_{\begin{subarray}{c}
I\subseteq\langle n\rangle\\
|I|=\delta_1=n-k
\end{subarray}} |\mathcal{C}_{1}(I)|\notag\\
=& \sum_{\begin{subarray}{c}
I\subseteq\langle n\rangle\\
|I|=n-k
\end{subarray}}|\mathcal{C}_{1}(I)|\sum_{j\in
\langle n\rangle\backslash I}1\notag\\
=&\sum_{\begin{subarray}{c}
I\subseteq\langle n\rangle\\
|I|=n-k
\end{subarray}}\sum_{j\in\langle n\rangle\backslash I}
|\mathcal{C}_{1}(I)|\notag\\
=& \sum_{\begin{subarray}{c}
J\subseteq\langle n\rangle\\
|J|=n-k+1
\end{subarray}}\sum_{j\in J}
|\mathcal{C}_{1}(J\backslash\{j\})|\notag\\
=& \sum_{\begin{subarray}{c}
J\subseteq\langle n\rangle\\
|J|=n-k+1
\end{subarray}}\sum_{\begin{subarray}{c}
I\subseteq J\\
|I|=n-k
\end{subarray}}|\mathcal{C}_{1}(I)|\notag\\
=& \sum_{\begin{subarray}{c}
J\subseteq\langle n\rangle\\
|J|=n-k+1
\end{subarray}}|\mathcal{C}_{1}(J)|
\end{align}
and
\begin{align}\label{2.22}
A_{n-k+1}= &|C_2|\notag\\
=& |\mathcal{C}_{2}(\langle n\rangle)|\notag\\
=& \sum_{\begin{subarray}{c}
J\subseteq\langle n\rangle\\
|J|=\delta_2=n-k+1
\end{subarray}} |\mathcal{C}_{2}(J)|.
\end{align}

It then follows from (\ref{2.21}) and (\ref{2.22}) that
\begin{align}\label{2.23}
 kA_{n-k}+A_{n-k+1}
=& \sum_{\begin{subarray}{c}
J\subseteq\langle n\rangle\\
|J|=n-k+1
\end{subarray}}|\mathcal{C}_{1}(J)|
+\sum_{\begin{subarray}{c}
J\subseteq\langle n\rangle\\
|J|=n-k+1
\end{subarray}}|\mathcal{C}_{2}(J)|\notag\\
=&\sum_{\begin{subarray}{c}
J\subseteq\langle n\rangle\\
|J|=n-k+1
\end{subarray}}(|\mathcal{C}_{1}(J)|+|\mathcal{C}_{2}(J)|)\notag\\
=&\sum_{\begin{subarray}{c}
J\subseteq\langle n\rangle\\
|J|=n-k+1
\end{subarray}}(|\mathcal{C}_{1}(J)
\cup\mathcal{C}_{2}(J)|+|\mathcal{C}_{1}(J)
\cap|\mathcal{C}_{2}(J)|)\notag\\
=&\sum_{\begin{subarray}{c}
J\subseteq\langle n\rangle\\
|J|=n-k+1
\end{subarray}}|\mathcal{C}_{1}(J)
\cup\mathcal{C}_{2}(J)| \ {\rm (since} \ \mathcal{C}_{1}(J)\cap|\mathcal{C}_{2}(J)=\emptyset)\notag\\
=&\sum_{\begin{subarray}{c}
J\subseteq\langle n\rangle\\
|J|=n-k+1
\end{subarray}}|\mathcal{C}(J)
\backslash\{{\bf 0}\}|\ {\rm (by \ Lemma \ref{lem5})}\notag\\
=&\sum_{\begin{subarray}{c}
J\subseteq\langle n\rangle\\
|J|=n-k+1
\end{subarray}}(q^{\dim\mathcal{C}(J)}-1).
\end{align}
But Lemma \ref{lem7} gives us that for any subset
$J\subseteq\langle n\rangle$ with $|J|=n-k+1$,
one has $\dim\mathcal{C}(J)\in\{1, 2\}$, and hence
$q^{\dim\mathcal{C}(J)}-1=q-1$, or $q^2-1$.
It then follows from (\ref{2.23}) and (\ref{2.6}) that

\begin{align*}
kA_{n-k}+A_{n-k+1}=&\sum_{\begin{subarray}{c}
J\subseteq\langle n\rangle\\
|J|=n-k+1
\end{subarray}}(q^{\dim\mathcal{C}(J)}-1)\\
\geq &\sum_{\begin{subarray}{c}
J\subseteq\langle n\rangle\\
|J|=n-k+1
\end{subarray}}(q-1)\\
=&(q-1)\sum_{\begin{subarray}{c}
J\subseteq\langle n\rangle\\
|J|=n-k+1
\end{subarray}}1\\
=&(q-1)\binom{n}{n-k+1}\\
=& kA_{n-k}+A_{n-k+1}.
\end{align*}
This enables $\dim\mathcal{C}(J)=1$ for any subset
$J\subseteq\langle n\rangle$ with $|J|=n-k+1$. That
is, (\ref{6}) holds as one desires. So the necessity
part is proved.

The proof of Lemma \ref{lem8} is complete.
\end{proof}

Finally, we can give a characterization for
the dual code of an AMDS code to be an AMDS
code. In other words, we obtain a characterization
for an AMDS code to be a NMDS code. This is
the main result of this section.

\begin{thm} \label{thm2.1}
Let $\mathcal{C}$ be an AMDS code over $\mathbb{F}_q$
with parameters $[n, k, n-k]$. Then the dual code of
$\mathcal{C}$ is an AMDS code if and only if
$\dim\mathcal{C}(I)=1$ holds for any subset
$I\subseteq\langle n\rangle$ with $|I|=n-k+1$.
\end{thm}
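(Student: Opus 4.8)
The plan is to obtain Theorem \ref{thm2.1} by simply chaining together Lemma \ref{lem4} and Lemma \ref{lem8}, which between them have already carried out all of the substantive work. The one conceptual point to spell out at the outset is that, since $\mathcal{C}$ is assumed to be an AMDS code with parameters $[n,k,n-k]$, the assertion ``$\mathcal{C}^{\bot}$ is an AMDS code'' is, by the very definition of a near MDS code, identical to the assertion ``$\mathcal{C}$ is a NMDS code.'' Hence it suffices to characterize when $\mathcal{C}$ is NMDS.

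First I would invoke Lemma \ref{lem4}: the code $\mathcal{C}$ is a NMDS code if and only if $\frac{1}{q-1}(kA_{n-k}+A_{n-k+1})=\binom{n}{n-k+1}$. Multiplying both sides by $q-1$ shows that this is literally the same as identity (\ref{2.6}), namely $kA_{n-k}+A_{n-k+1}=(q-1)\binom{n}{n-k+1}$. Then I would invoke Lemma \ref{lem8}, which states that (\ref{2.6}) holds if and only if $\dim\mathcal{C}(i_{1},\cdots,i_{n-k+1})=1$ for every subset $\{i_{1},\cdots,i_{n-k+1}\}\subseteq\langle n\rangle$, i.e.\ for every $I\subseteq\langle n\rangle$ with $|I|=n-k+1$. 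Concatenating the chain of equivalences
$$\mathcal{C}^{\bot}\ \text{AMDS}\iff\mathcal{C}\ \text{NMDS}\iff(\ref{2.6})\iff\big(\dim\mathcal{C}(I)=1\ \text{for all}\ I\subseteq\langle n\rangle\ \text{with}\ |I|=n-k+1\big)$$
gives exactly the statement of Theorem \ref{thm2.1}.

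As for the difficulty: essentially all of it has been front-loaded into Lemma \ref{lem8} and its supporting results (Lemmas \ref{lem5}, \ref{lem6}, \ref{lem7}, \ref{lem15}). The sufficiency direction of Lemma \ref{lem8} constructs the bijection $\vartheta$ and the set identities $F_1=M_1$, $F_2=M_2$, together with $F_1\cap F_2=\emptyset$, and then counts via (\ref{2.20}) and (\ref{2.12}); the necessity direction runs the double-counting argument (\ref{2.21})--(\ref{2.23}) and closes with the pigeonhole squeeze forcing $\dim\mathcal{C}(J)=1$ for every $(n-k+1)$-subset $J$. Given those lemmas, the proof of Theorem \ref{thm2.1} is a one-line composition of ``if and only if'' statements, so I anticipate no genuine obstacle at this final stage; the only care needed is to state the AMDS/NMDS bookkeeping cleanly and to record that the condition ``for all $I$ with $|I|=n-k+1$'' is just a rephrasing of the indexing used in (\ref{6}).
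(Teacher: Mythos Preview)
Your proposal is correct and matches the paper's own proof exactly: the paper simply writes ``This follows immediately from Lemmas \ref{lem4} and \ref{lem8},'' which is precisely the chain of equivalences you spell out. There is nothing to add or correct.
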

\begin{proof}
This follows immediately from Lemmas \ref{lem4} and \ref{lem8}.
\end{proof}

\section{On the BCH code $\mathcal{C}_{(q, q+1, 3, 4)}$
with parameters $[q+1, q-3, 4]$ and proof of Theorem \ref{thm1.2}}

Now we turn our attention to the BCH code
$\mathcal{C}_{(q, q+1, 3, 4)}$. We will
explore its properties. Throughout this
section, let $U_{q+1}$ denote the cyclic
group consisting of all $(q+1)$-th roots
of unity in $\mathbb{F}_{q^{2}}$, which
is a subgroup of the multiplicative
group of $\mathbb{F}_{q^{2}}$.

\begin{lem}\label{lem3.1}
Let $q=p^m$ with $p$ being a prime number and $m\ge 1$
being an integer and let $\mathcal{C}$ be the BCH code
$\mathcal{C}_{(q, q+1, 3, 4)}$. Then
$\dim \mathcal{C}(i_{1}, \cdots, i_{n-k+1=5})=1$ holds for
any subset $\{i_{1}, \cdots, i_{5}\}\subseteq\langle n\rangle$
if and only if for arbitrary two distinct elements
$x, y\in U_{q+1}\backslash\{1:=1_{\mathbb{F}_{q^{2}}}\}
\subseteq \mathbb{F}_{q^{2}}$, there exists at most one element
$z\in U_{q+1}\backslash\{x, y, 1\}\subseteq{\mathbb{F}_{q^{2}}}$
such that the following system of homogeneous equations
\begin{align}\label{3.1}
\left(\begin{array}{cccc}
x^{4} & y^{4} & z^{4} & 1 \\
x^{5} & y^{5} & z^{5} & 1 \\
\end{array}\right){\bf X}={\bf 0}
\end{align}
has a nonzero solution ${\bf X}$ in ${\mathbb{F}}^{4}_{q}$.
\end{lem}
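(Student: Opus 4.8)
The plan is to translate the cyclic description of $\mathcal{C}=\mathcal{C}_{(q,q+1,3,4)}$ into a coordinate description indexed by $U_{q+1}$, and then to read off each space $\mathcal{C}(I)$ directly from the two BCH parity checks. Since $q\equiv -1\pmod{q+1}$, the $q$-cyclotomic cosets of $4$ and $5$ modulo $q+1$ are $\{4,q-3\}$ and $\{5,q-4\}$, so $\mathcal{C}$ has defining set $\{4,5,q-4,q-3\}$, dimension $q-3$, and hence $n-k+1=5$. Moreover, for a vector ${\bf c}\in\mathbb{F}_q^{q+1}$ the parity checks attached to $q-3=-4$ and $q-4=-5$ are the $q$-th powers of those attached to $4$ and $5$ (using ${\bf c}_j^{\,q}={\bf c}_j$ and $\gamma^{q}=\gamma^{-1}$ for $\gamma\in U_{q+1}$), hence redundant. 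Consequently, identifying the coordinate set with $U_{q+1}$ via $i\mapsto\beta^{i}$, a vector supported on $\{\gamma_1,\dots,\gamma_r\}\subseteq U_{q+1}$ with values $X_1,\dots,X_r\in\mathbb{F}_q$ lies in $\mathcal{C}$ if and only if $\sum_j X_j\gamma_j^{4}=0$ and $\sum_j X_j\gamma_j^{5}=0$.

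Next I would do dimension bookkeeping for small supports, using only the fact that $\mathcal{C}$ has no nonzero codeword of weight $\le 3$ (equivalently $d(\mathcal{C})\ge 4$, as granted by Theorem \ref{thm1.1} in the range of interest). Then $\dim\mathcal{C}(T)\le 1$ for every $4$-subset $T$ (a codeword vanishing at one coordinate of $T$ would have weight $\le 3$), while Lemma \ref{lem7} gives $\dim\mathcal{C}(S)\in\{1,2\}$ for every $5$-subset $S$. The bridge I would establish is: for a $5$-subset $S=\{\gamma_1,\dots,\gamma_5\}$, $\dim\mathcal{C}(S)=2$ if and only if $S$ contains two distinct $4$-subsets $T$ with $\dim\mathcal{C}(T)=1$. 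For the ``only if'' direction, the subspace $V_j$ of codewords in $\mathcal{C}(S)$ vanishing at coordinate $\gamma_j$ equals $\mathcal{C}(S\setminus\{\gamma_j\})$, so $\dim V_j\le 1$; but $V_j$ is the kernel of one linear functional on the $2$-dimensional space $\mathcal{C}(S)$, so $\dim V_j\ge 1$, whence all five $4$-subsets $S\setminus\{\gamma_j\}$ have $\dim\mathcal{C}=1$. For the ``if'' direction, nonzero codewords supported on distinct $4$-subsets $T_1,T_2$ of $S$ cannot be proportional (their common support would lie in the $3$-set $T_1\cap T_2$, contradicting $d(\mathcal{C})\ge 4$), so they are independent in $\mathcal{C}(S)$.

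I would then invoke cyclicity: multiplication of all coordinates by a fixed element of $U_{q+1}$ is a power of the cyclic shift, hence an automorphism of $\mathcal{C}$ carrying $\mathcal{C}(I)$ onto $\mathcal{C}(\gamma I)$, so when computing $\dim\mathcal{C}(I)$ we may translate any chosen element of $I$ to $1$. For a $4$-subset $T=\{x,y,z,1\}$ with $x,y,z\in U_{q+1}\setminus\{1\}$ distinct, the description of the first paragraph identifies $\mathcal{C}(T)$ with the set of ${\bf X}\in\mathbb{F}_q^{4}$ solving exactly the system (\ref{3.1}); since $\dim\mathcal{C}(T)\le 1$ automatically, ``$\dim\mathcal{C}(T)=1$'' is equivalent to ``(\ref{3.1}) has a nonzero solution in $\mathbb{F}_q^{4}$'' (and any such solution then has all entries nonzero, again because $d(\mathcal{C})\ge 4$). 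Assembling the pieces: the property ``$\dim\mathcal{C}(i_1,\dots,i_5)=1$ for every $5$-subset'' fails exactly when some $5$-subset contains two distinct $4$-subsets of dimension $1$, i.e. when there exist distinct $4$-subsets $T_1\ne T_2$ with $\dim\mathcal{C}(T_i)=1$ and $|T_1\cap T_2|=3$; translating a common element of $T_1\cap T_2$ to $1$ and naming the other two common elements $x,y$ and the exceptional elements $z\in T_1$, $z'\in T_2$, this says precisely that for some distinct $x,y\in U_{q+1}\setminus\{1\}$ there are two distinct $z,z'\in U_{q+1}\setminus\{x,y,1\}$ making (\ref{3.1}) solvable in $\mathbb{F}_q^{4}$. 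Negating this gives exactly the condition of the lemma.

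All the individual steps are short; the points I expect to need the most care are: (a) the redundancy of the $q-3$ and $q-4$ parity checks, so that imposing only $\sum_j X_j\gamma_j^{4}=\sum_j X_j\gamma_j^{5}=0$ genuinely cuts out $\mathcal{C}$ on each coordinate subset (here one uses ${\bf c}_j^{\,q}={\bf c}_j$ together with $\gamma_j^{\,q}=\gamma_j^{-1}$ on $U_{q+1}$); and (b) the combinatorial bookkeeping in the last step, where after the cyclic translation one must check that the shared triple really is $\{x,y,1\}$ with $x\ne y$ and $x,y\ne 1$, that $z\ne z'$ and $z,z'\notin\{x,y,1\}$, and hence that $\{x,y,z,1\}$ and $\{x,y,z',1\}$ are honestly distinct $4$-subsets of the single $5$-subset $\{x,y,z,z',1\}$, so that the quantifier structure matches the lemma's ``at most one $z$'' verbatim.
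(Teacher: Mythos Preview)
Your proposal is correct and follows the same overall strategy as the paper: reduce the dimension condition on $5$-subsets to the existence of two weight-$4$ codewords supported on distinct $4$-subsets sharing three positions, then use the cyclic structure to translate one shared position to $1$ and read off the system~(\ref{3.1}). The one organizational difference worth noting is your ``bridge'' lemma ($\dim\mathcal{C}(S)=2$ if and only if $S$ contains two distinct $4$-subsets $T$ with $\dim\mathcal{C}(T)=1$), proved by the clean observation that each $\mathcal{C}(S\setminus\{\gamma_j\})$ is the kernel of a single coordinate functional on $\mathcal{C}(S)$ and hence has dimension exactly $1$ when $\dim\mathcal{C}(S)=2$. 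The paper instead takes two independent codewords ${\bf c}_1,{\bf c}_2\in\mathcal{C}(S)$ and runs an explicit three-case analysis on $(w({\bf c}_1),w({\bf c}_2))\in\{4,5\}^2$ to manufacture two weight-$4$ codewords with distinct supports, followed by an inclusion--exclusion count to pin down $|{\rm supp}({\bf c}_3)\cap{\rm supp}({\bf c}_4)|=3$. Your dimension-counting shortcut is tidier and avoids the cases, but the underlying content is the same; your explicit remarks on the redundancy of the $q{-}3$ and $q{-}4$ parity checks and on cyclicity as coordinate translation simply make transparent what the paper uses implicitly.
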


\begin{proof}
At first, we let $n=q+1$ and
$$H:=\left(\begin{array}{cccc}
\alpha^{4} & (\alpha^{4})^{2} & \cdots & (\alpha^{4})^{n}=1 \\
\alpha^{5} & (\alpha^{5})^{2} & \cdots & (\alpha^{5})^{n}=1 \\
\end{array}\right)$$
with $\alpha$ being a generator of $U_{q+1}$. Then $H$ is the
parity-check matrix of $\mathcal{C}_{(q, q+1, 3, 4)}$, i.e.,
$$\mathcal{C}_{(q, q+1, 3, 4)}
=\{{\bf c}\in\mathbb{F}_{q}^{q+1}|H{\bf c}^T={\bf 0}\}.$$
Writing $h_{i}=\alpha^{i}$ for $i\in \langle n\rangle$
gives that
$$H=\left(\begin{array}{cccc}
h_{1}^{4} & h_{2}^{4} & \cdots & h_{n}^{4} \\
h_{1}^{5} & h_{2}^{5} & \cdots & h_{n}^{5} \\
\end{array}\right).$$

We first treat the necessity part. Let
$\dim\mathcal{C}(i_{1}, ..., i_{n-k+1=5})=1$ hold for
all subsets $\{i_{1}, ..., i_{5}\}\subseteq\langle
n\rangle$. For arbitrary two distinct elements
$x, y\in U_{q+1}\backslash\{1:=1_{\mathbb{F}_{q^{2}}}\}
\subseteq \mathbb{F}_{q^{2}}$, let $j_{1}, j_{2}\in
\langle n\rangle \backslash\{n\}$ be such that $x=h_{j_1}$
and $y=h_{j_2}$. Assume that there are two distinct indexes
$i_{1}, i_{2}\in\langle n\rangle\backslash
\{j_{1}, j_{2}, n\}$ such that the system of
homogeneous equations
$$\left(\begin{array}{cccc}
h_{j_{1}}^{4} & h_{j_{2}}^{4} & h_{i_{1}}^{4}& 1 \\
h_{j_{1}}^{5} & h_{j_{2}}^{5} & h_{i_{1}}^{5}& 1 \\
\end{array}
\right){\bf X}={\bf 0}$$
has a  nonzero solution ${\bf X} =(x_1, x_2, x_3, x_4)^T$
in ${\mathbb{F}}^{4}_{q}$ and the system of homogeneous equations
$$\left(\begin{array}{cccc}
h_{j_{1}}^{4} & h_{j_{2}}^{4} & h_{i_{2}}^{4}& 1 \\
h_{j_{1}}^{5} & h_{j_{2}}^{5} & h_{i_{2}}^{5}& 1 \\
\end{array}\right){\bf Y}={\bf 0}$$
has a nonzero solution ${\bf Y}=(y_1, y_2, y_3, y_4)^T$
in ${\mathbb{F}}^{4}_{q}$. Then we can pick two nonzero
vectors ${\bf c}, {\bf c}'\in{\mathbb{F}}^{n}_{q}$
such that
$${\bf c}_{j_{1}}=x_1,{\bf c}_{j_{2}}=x_2,
{\bf c}_{i_{1}}=x_3,{\bf c}_{n}=x_4 \ {\rm and}
\ {\bf c}_j=0 \ \forall j\in\langle n\rangle
\setminus\{j_{1},j_{2},i_{1},n\},$$
and
$${\bf c}'_{j_{1}}=y_1,{\bf c}'_{j_{2}}=y_2,
{\bf c}'_{i_{2}}=y_3,{\bf c}'_{n}=y_4
\ {\rm and} \ {\bf c}_j=0 \ \forall j\in
\langle n\rangle\setminus\{j_1, j_2, i_2, n\},$$
respectively. So we have
${\rm supp}({\bf c})\subseteq\{j_{1},j_{2},i_{1},n\}
\subseteq\{j_{1}, j_{2}, i_{1}, i_{2}, n\}
\subseteq\langle n\rangle$ and
${\rm supp}({\bf c}')\subseteq\{j_{1}, j_{2},i_{2}, n\}
\subseteq \{j_{1}, j_{2}, i_{1}, i_{2}, n\}
\subseteq\langle n\rangle$.
Letting $H_{i}=\left(\begin{array}{c}
h_{i}^{4}\\
h_{i}^{5}
\end{array}\right)$ as the $i$-th column
vector of $H$ for $i\in \langle n\rangle$ then gives us that
\begin{align*}
H{\bf c}^T=&\sum_{i=1}^{n}{\bf c}_{i}H_{i}\\
=&\sum_{i\in\langle n\rangle\backslash\{j_{1},j_{2},
i_{1},n\}}{\bf c}_{i}H_{i}+{\bf c}_{j_{1}}H_{j_{1}}
+{\bf c}_{j_{2}}H_{j_{2}}+{\bf c}_{i_{1}}H_{i_{1}}+{\bf c}_{n}H_{n}\\
=&{\bf 0}+{\bf c}_{j_{1}}H_{j_{1}}+{\bf c}_{j_{2}}H_{j_{2}}
+{\bf c}_{i_{1}}H_{i_{1}}+{\bf c}_{n}H_{n}\\
=&x_{1}H_{j_{1}}+x_{2}H_{j_{2}}+x_{3}H_{i_{1}}+x_{4}H_{n}\\
=&\left(
  \begin{array}{cccc}
    h_{j_{1}}^{4} & h_{j_{2}}^{4} & h_{i_{1}}^{4} & 1 \\
    h_{j_{1}}^{5} & h_{j_{2}}^{5} & h_{i_{1}}^{5} & 1 \\
  \end{array}
  \right)(x_1, x_2, x_3, x_4)^T\\
=& {\bf 0}
\end{align*}
and
\begin{align*}
H({\bf c}')^T=&\sum_{i=1}^{n}{\bf c}'_{i}H_{i}\\
=&\sum_{i\in\langle n\rangle
\backslash\{j_{1},j_{2},i_{2},n\}}{\bf c}'_{i}H_{i}
+{\bf c}'_{j_{1}}H_{j_{1}}+{\bf c}'_{j_{2}}H_{j_{2}}
+{\bf c}'_{i_{2}}H_{i_{2}}+{\bf c}'_{n}H_{n}\\
=&{\bf 0}+{\bf c}'_{j_{1}}H_{j_{1}}+{\bf c}'_{j_{2}}
H_{j_{2}}+{\bf c}'_{i_{2}}H_{i_{2}}+{\bf c}'_{n}H_{n}\\
=&y_{1}H_{j_{1}}+y_{2}H_{j_{2}}+y_{3}H_{i_{2}}+y_{4}H_{n}\\
=&\left(\begin{array}{cccc}
h_{j_{1}}^{4} & h_{j_{2}}^{4} & h_{i_{2}}^{4}& 1 \\
h_{j_{1}}^{5} & h_{j_{2}}^{5} & h_{i_{2}}^{5}& 1 \\
\end{array}
\right)(y_1, y_2, y_3, y_4)^T\\
=&{\bf 0}.
\end{align*}
This implies that ${\bf c}, {\bf c}'\in {\mathcal{C}}
=\mathcal{C}_{(q, q+1, 3, 4)}$. Thus $w({\bf c})
=|{\rm supp}({\bf c})|\ge d({\mathcal C})=4$ and
$w({\bf c}')=|{\rm supp}({\bf c}')|\ge d({\mathcal C})=4$.
But $|{\rm supp}({\bf c})|\le 4$ and $|{\rm supp}
({\bf c}')|\le 4$. Hence we must have ${\rm supp}
({\bf c})=\{j_{1}, j_{2}, i_{1}, n\}
\subseteq \{j_{1}, j_{2}, i_{1}, i_{2}, n\}$ and
${\rm supp}({\bf c}')=\{j_{1}, j_{2}, i_{2}, n\}
\subseteq \{j_{1}, j_{2}, i_{1}, i_{2}, n\}$.
One then deduces that ${\bf c}, {\bf c}'\in
\mathcal{C}(j_{1}, j_{2}, i_{1}, i_{2}, n)$.
It then follows from  ${\rm supp}({\bf c})
\neq {\rm supp}({\bf c}')$ that ${\bf c}$ and
${\bf c}'$ are linear independent, from which
one derives that
$\dim\mathcal{C}(j_{1}, j_{2}, i_{1}, i_{2}, n)\ge 2$.
This contradicts with the hypothesis
$\dim\mathcal{C}(j_{1}, j_{2}, i_{1}, i_{2}, n)=1$.
Hence for arbitrary two distinct elements
$x, y\in U_{q+1}\backslash\{1:=1_{\mathbb{F}_{q^{2}}}\}
\subseteq \mathbb{F}_{q^{2}}$, there exists at most one element
$z\in U_{q+1}\backslash\{x, y, 1\}\subseteq{\mathbb{F}_{q^{2}}}$
such that the system (\ref{3.1}) of homogeneous equations
has a nonzero solution ${\bf X}$ in ${\mathbb{F}}^{4}_{q}$.
The necessity part is proved.

Consequently, we show the sufficiency part.
First of all, Lemma \ref{lem7} tells us that
$\dim{\mathcal C}(i_1, ..., i_5)\in\{1, 2\}$
for all subsets $\{i_1, \cdots, i_5\}
\subseteq\langle n\rangle$. We assume that
there exists a subset
$\{j_{1}, j_{2}, j_{3}, i_{1}, i_{2}\}\subseteq
\langle n\rangle$ such that
$\dim\mathcal{C}(j_{1}, j_{2}, j_{3}, i_{1}, i_{2})=2$.
Then there exist two linearly independent nonzero
codewords ${\bf c}_{1}, {\bf c}_{2}\in
\mathcal{C}(j_{1}, j_{2}, j_{3}, i_{1}, i_{2})$.
Thus $w({\bf c}_1), w({\bf c}_2)\in\{4, 5\}$.
We claim that there are two linearly independent
nonzero codewords ${\bf c}_{3}$ and ${\bf c}_{4}$
in $\mathcal{C}(j_{1}, j_{2}, j_{3}, i_{1}, i_{2})$
with weight 4. If $w({\bf c}_1)=4$ and $w({\bf c}_2)=4$,
then letting ${\bf c}_3:={\bf c}_1\in\mathcal{C}
(j_{1}, j_{2}, j_{3}, i_{1}, i_{2})$
and ${\bf c}_4:={\bf c}_2\in\mathcal{C}
(j_{1}, j_{2}, j_{3}, i_{1}, i_{2})$ gives
us the two claimed linearly independent nonzero
codewords ${\bf c}_3$ and ${\bf c}_4$ in
$\mathcal{C}(j_{1}, j_{2}, j_{3}, i_{1}, i_{2})$
with $w({\bf c}_3)=w({\bf c}_4)=4$. When $w({\bf c}_1)=5$
or $w({\bf c}_2)=5$, in order to finish the proof of
the claim, one needs only to consider the
following three cases.

{\it Case 1.} $w({\bf c}_1)=w({\bf c}_2)=5$. Then
${\rm supp}({\bf c}_1)={\rm supp}({\bf c}_2)
=\{j_{1}, j_{2}, j_{3}, i_{1}, i_{2}\}$.
Obviously, $({\bf c}_1)_{i_{1}}\neq0$ and
$({\bf c}_2)_{i_{1}}\neq0$. Let
${\bf c}_3={\bf c}_1-(({\bf c}_1)_{i_{1}}
({\bf c}_2)^{-1}_{i_{1}}){\bf c}_2
\in\mathcal{C}(j_{1}, j_{2}, j_{3}, i_{1}, i_{2})$.
Thus
\begin{align*}
({\bf c}_3)_{i_{1}}=&({\bf c}_1-(({\bf c}_1)_{i_{1}}
({\bf c}_2)^{-1}_{i_{1}}){\bf c}_2)_{i_{1}}\\
=&({\bf c}_1)_{i_{1}}-((({\bf c}_1)_{i_{1}}
({\bf c}_2)^{-1}_{i_{1}}){\bf c}_2)_{i_{1}}\\
=& ({\bf c}_1)_{i_{1}}-({\bf c}_1)_{i_{1}}
({\bf c}_2)^{-1}_{i_{1}}({\bf c}_2)_{i_{1}}\\
=& 0.
\end{align*}
It implies that ${\rm supp}({\bf c}_3)
\subseteq\{j_{1}, j_{2}, j_{3}, i_{2}\}$ and so
we have $w({\bf c}_3)={\rm supp}({\bf c}_3)\le 4$.
Since ${\bf c}_1$ and ${\bf c}_2$ are linearly
independent nonzero codewords, ${\bf c}_3$ must
be a nonzero codeword. This implies that
$4=d(\mathcal{C})\leq w({\bf c}_3)\le 4$.
Hence we find a codeword ${\bf c}_3\in \mathcal{C}
(j_{1}, j_{2}, j_{3}, i_{1}, i_{2})$
with $w({\bf c}_3)=4$. Likewise, since
$({\bf c}_2)_{i_{2}}\neq0$, letting
${\bf c}_4:={\bf c}_2-(({\bf c}_2)_{i_{2}}
({\bf c}_1)^{-1}_{i_{2}}){\bf c}_1$ gives us
a codeword ${\bf c}_4\in \mathcal{C}
(j_{1}, j_{2}, j_{3}, i_{1}, i_{2})$
with ${\rm supp}({\bf c}_4)=\{j_{1}, j_{2},
j_{3}, i_{1}\}$ and $w({\bf c}_4)=4$.
Moreover,
${\rm supp}({\bf c}_3)=\{j_{1}, j_{2}, j_{3},
i_{2}\}\neq\{j_{1}, j_{2}, j_{3}, i_{1}\}
={\rm supp}({\bf c}_4)$
which implies that ${\bf c}_3$ and ${\bf c}_4$
are linearly independent nonzero codewords.
Therefore we obtain two linearly independent
nonzero codewords ${\bf c}_3, {\bf c}_4
\in\mathcal{C}(j_{1}, j_{2}, j_{3}, i_1,
i_{2})$ with $w({\bf c}_3)=w({\bf c}_4)=4$
as desired. The claim is proved in this case.

{\it Case 2.} $w({\bf c}_1)=5$ and $w({\bf c}_2)=4$.
Then ${\rm supp}({\bf c}_1)=\{j_{1}, j_{2}, j_{3}, i_{1}, i_{2}\}$
and ${\rm supp}({\bf c}_2)\subseteq\{j_{1}, j_{2}, j_{3}, i_{1}, i_{2}\}$.
Evidently, $({\bf c}_1)_{i_{1}}\neq0$ and we may
let $({\bf c}_2)_{i_{2}}\neq0$. Picking
${\bf c}_3={\bf c}_1-(({\bf c}_1)_{i_{2}}({\bf c}_2)
^{-1}_{i_{2}}){\bf c}_2$ gives that
${\rm supp}({\bf c}_3)\subseteq\{j_{1}, j_{2}, j_{3}, i_{1}, i_{2}\}$
and
\begin{align*}
({\bf c}_3)_{i_{2}}=&({\bf c}_1-(({\bf c}_1)_{i_{2}}
({\bf c}_2)^{-1}_{i_{2}}){\bf c}_2)_{i_{2}}\\
=& ({\bf c}_1)_{i_{2}}-(({\bf c}_1)_{i_{2}}
({\bf c}_2)^{-1}_{i_{2}}){\bf c}_2)_{i_{2}}\\
=& ({\bf c}_1)_{i_{2}}-({\bf c}_1)_{i_{2}}
({\bf c}_2)^{-1}_{i_{2}}({\bf c}_2)_{i_{2}}\\
=& 0.
\end{align*}
Hence ${\rm supp}({\bf c}_3)\subseteq\{j_{1}, j_{2}, j_{3}, i_{1}\}$.
As ${\bf c}_1$ and ${\bf c}_2$ are linearly independent
nonzero codewords, ${\bf c}_3$ must be a nonzero codeword.
This implies that $4=d(\mathcal{C})\leq w({\bf c}_3)
=|{\rm supp}({\bf c}_3)|\le 4$. So one gets a codeword
${\bf c}_3\in \mathcal{C}(j_{1}, j_{2}, j_{3}, i_{1}, i_{2})$
with $w({\bf c}_3)=4$. Hence we find two linearly
independent nonzero codewords ${\bf c}_3$ and
${\bf c}_4={\bf c}_2$ in $\mathcal{C}(j_{1}, j_{2},
j_{3}, i_{1}, i_{2})$ with $w({\bf c}_3)=w({\bf c}_4)=4$.
The claim is true in this case.

{\it Case 3.} $w({\bf c}_1)=4$ and $w({\bf c}_2)=5$.
Then by the symmetry, letting ${\bf c}_4={\bf c}_2
-(({\bf c}_2)_{i_{2}}({\bf c}_2)^{-1}_{i_{2}}){\bf c}_1$
gives us two linearly independent nonzero codewords
${\bf c}_3:={\bf c}_1$ and ${\bf c}_4$ in
$\mathcal{C}(j_{1}, j_{2}, j_{3}, i_{1}, i_{2})$
with $w({\bf c}_3)=w({\bf c}_4)=4$.
Hence the claim is proved in this case.

By the above claim, we always have two linearly
independent nonzero codewords ${\bf c}_3$ and
${\bf c}_4$ in $\mathcal{C}(j_{1}, j_{2}, j_{3},
i_{1}, i_{2})$ with $w({\bf c}_3)=w({\bf c}_4)=4$.
Suppose that ${\rm supp}({\bf c}_3)={\rm supp}({\bf c}_4)$.
One may write ${\rm supp}({\bf c}_3)
={\rm supp}({\bf c}_4)=\{j_{1}, j_{2}, j_{3}, i_{1}\}$.
Let ${\bf c}_{5}:={\bf c}_{3}-(({\bf c}_3)_{i_1}
({\bf c}_{4})_{i_{1}}^{-1}){\bf c}_{4}$.
Then $({\bf c}_5)_{i_1}=0$ and so ${\rm supp}
({\bf c}_5)\subseteq\{j_{1}, j_{2}, j_{3}\}$.
Since ${\bf c}_3$ and ${\bf c}_4$ are linearly
independent in $\mathcal{C}(j_{1}, j_{2}, j_{3},
i_{1}, i_{2})$, it follows that ${\bf c}_{5}$
is a nonzero codeword. So $w({\bf c}_{5})\le 3$
which contradicts with the fact
$w({\bf c}_{5})\ge d(\mathcal{C})=4$.
Thus we must have ${\rm supp}({\bf c}_{3})
\neq {\rm supp}({\bf c}_{4})$. However,
$w({\bf c}_3)=w({\bf c}_4)=4$. Hence
$|({\rm supp}({\bf c}_{3})\cap{\rm supp}
({\bf c}_{4})|\le 3$.

On the other hand, since
${\bf c}_3, {\bf c}_4\in\mathcal{C}
(j_{1}, j_{2}, j_{3}, i_{1}, i_{2})$, we
have ${\rm supp}({\bf c}_3)\cup{\rm supp}
({\bf c}_4)\subseteq\{j_{1}, j_{2}, j_{3},
i_1, i_2\}$. Thus $|{\rm supp}({\bf c}_3)
\cup{\rm supp}({\bf c}_4)|\le 5$. It then follows
from the inclusion-exclusion principle that
\begin{align*}
|{\rm supp}({\bf c}_{3})\cap {\rm supp}
({\bf c}_{4})|=&  |{\rm supp}({\bf c}_{3})|
+|{\rm supp}({\bf c}_{4})|-|{\rm supp}
({\bf c}_{3})\cup {\rm supp}({\bf c}_{4})|\\
 \geq&  4+4-5=3.
\end{align*}
This concludes that $|{\rm supp}({\bf c}_{3})
\cap {\rm supp}({\bf c}_{4})|=3$. One may write
${\rm supp}({\bf c}_{3})=\{j_{1}, j_{2}, j_{3},
i_{1}\}$ and ${\rm supp}({\bf c}_{4})=\{j_{1},
j_{2}, j_{3}, i_{2}\}$. Then
${\bf X}:=(({\bf c}_{3})_{j_{1}}, ({\bf c}_{3})_{j_{2}},
({\bf c}_{3})_{j_{3}}, ({\bf c}_{3})_{i_{1}})^T$
is a solution of the following system of homogeneous equations
$$\left(\begin{array}{cccc}
h_{j_{1}}^{4} & h_{j_{2}}^{4} & h_{j_{3}}^{4}& h_{i_{1}}^{4} \\
h_{j_{1}}^{5} & h_{j_{2}}^{5} & h_{j_{3}}^{5}& h_{i_{1}}^{5} \\
\end{array}\right){\bf X}={\bf 0}$$
and ${\bf Y}:=(({\bf c}_{4})_{j_{1}},
({\bf c}_{4})_{j_{2}}, ({\bf c}_{4})_{j_{3}},
({\bf c}_{4})_{i_{2}})^T$ is a solution
of the system of homogeneous equations
$$\left(\begin{array}{cccc}
h_{j_{1}}^{4} & h_{j_{2}}^{4} & h_{j_{3}}^{4}& h_{i_{2}}^{4} \\
h_{j_{1}}^{5} & h_{j_{2}}^{5} & h_{j_{3}}^{5}& h_{i_{2}}^{5} \\
\end{array}\right){\bf Y}={\bf 0}.$$
It then follows that
$$\left(\begin{array}{cccc}
(h_{j_{1}}h_{j_{3}}^{-1})^{4} & (h_{j_{2}}h_{j_{3}}^{-1})^{4} & 1 & (h_{i_{1}}h_{j_{3}}^{-1})^{4} \\
(h_{j_{1}}h_{j_{3}}^{-1})^{5} & (h_{j_{2}}h_{j_{3}}^{-1})^{5} & 1 & (h_{i_{1}}h_{j_{3}}^{-1})^{5} \\
\end{array}\right){\bf X}
=\left(\begin{array}{c}
  0\times h_{j_{3}}^{-4}\\
  0\times h_{j_{3}}^{-5}
\end{array}\right)={\bf 0}$$
and
$$\left(
  \begin{array}{cccc}
    (h_{j_{1}}h_{j_{3}}^{-1})^{4} &
    (h_{j_{2}}h_{j_{3}}^{-1})^{4} & 1
    & (h_{i_{2}}h_{j_{3}}^{-1})^{4} \\
    (h_{j_{1}}h_{j_{3}}^{-1})^{5}
    & (h_{j_{2}}h_{j_{3}}^{-1})^{5} & 1
    & (h_{i_{2}}h_{j_{3}}^{-1})^{5} \\
  \end{array}
  \right){\bf Y}=
  \left(\begin{array}{c}
  0\times h_{j_{3}}^{-4}\\
  0\times h_{j_{3}}^{-5}
  \end{array}\right)={\bf 0}.$$

Let $x_0=h_{j_{1}}h_{j_{3}}^{-1}, y_0=h_{j_{2}}h_{j_{3}}^{-1},
z_0=h_{i_{1}}h_{j_{3}}^{-1}$ and $z'_0=h_{i_{2}}h_{j_{3}}^{-1}$.
Then $x_0, y_0, z_0, z_0'\in U_{q+1}$ are pairwise distinct,
and both of the systems of homogeneous equations
$$\left(\begin{array}{cccc}
x_0^{4} & y_0^{4} & z_0^{4} & 1 \\
x_0^{5} & y_0^{5} & z_0^{5} & 1 \\
\end{array}
\right){\bf X}={\bf 0}$$
and
$$\left(\begin{array}{cccc}
x_0^{4} & y_0^{4} & z_0'^{4} & 1 \\
x_0^{5} & y_0^{5} & z_0'^{5} & 1 \\
\end{array}\right){\bf Y}={\bf 0}$$
have nonzero solutions in ${\mathbb{F}}^{4}_{q}$,
which contradicts with the hypothesis.
Thus the assumption is not true. Hence we must have
$\dim\mathcal{C}(j_{1}, j_{2}, j_{3}, i_{1}, i_{2})=1$
for any subset $\{j_{1}, j_{2}, j_{3}, i_{1}, i_{2}\}
\subseteq\langle n\rangle$. The sufficiency part
is proved.

This concludes the proof of Lemma \ref{lem3.1}.
\end{proof}

By the theory of the system of homogeneous equations
over a field, one knows that for arbitrary elements
$x, y, z\in U_{q+1}\setminus\{1\}$, the system
(\ref{3.1}) of homogeneous equations always has a
nonzero solution $X$ in $\mathbb{F}_{q^2}^4$ when the
characteristics of the field is equal to 3. But it
does not guarantee that the system (\ref{3.1}) of
homogeneous equations holds a nonzero solution $X$
in $\mathbb{F}_{q^2}^4$.

The following result is known and is given in
\cite{[GYZZ-FFA22]}.

\begin{lem}\label{lem3} \cite{[GYZZ-FFA22]}
Let $q=3^{m}$ with $m$ being an odd integer.
Let $U_{q+1}$ denote the group consisting of
all $(q+1)$-th roots of unity in
${\mathbb{F}}_{q^{2}}$. If $x,y$ and $z$ are three
pairwise distinct elements in $U_{q+1}$, then
$$\det\left(\begin{array}{ccc}
    x^{4} & y^{4} & z^{4} \\
    x^{5} & y^{5} & z^{5} \\
    x^{-5} & y^{-5} & z^{-5} \\
\end{array}\right)\neq0.$$
\end{lem}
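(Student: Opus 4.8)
The plan is to exploit a scaling symmetry to eliminate one variable, recast the vanishing of the determinant as a ``collision'' $\phi(x)=\phi(y)$ of an explicit function $\phi$, and then use the relation $t^{q}=t^{-1}$ (valid on $U_{q+1}$) together with the parity of $m$ to rule such a collision out.

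First I would observe that, writing $D(x,y,z)$ for the displayed determinant, every monomial occurring in its Laplace expansion has the form $u^{4}v^{5}w^{-5}$ with $\{u,v,w\}=\{x,y,z\}$, so $D(\lambda x,\lambda y,\lambda z)=\lambda^{4}D(x,y,z)$ for any scalar $\lambda$; choosing $\lambda=z^{-1}$ reduces the lemma to proving $D(x,y,1)\neq 0$ for distinct $x,y\in U_{q+1}\setminus\{1\}$. Expanding $D(x,y,1)$ along its last column and multiplying by $(xy)^{5}$ yields a polynomial that manifestly vanishes at $x=1$, at $y=1$, and at $x=y$; dividing out these three linear factors leaves, up to a nonzero factor, $p(x,y):=x^{9}(1+y+\cdots+y^{8})-y^{9}(1+x+\cdots+x^{8})$. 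Since $x,y\neq 0,1$ and $x\neq y$, it therefore suffices to show that $\phi(t):=t^{-9}(1+t+\cdots+t^{8})=\sum_{j=1}^{9}t^{-j}$ is injective on $U_{q+1}\setminus\{1\}$ (it is nowhere zero there, for $1+t+\cdots+t^{8}=0$ would make $t$ a nontrivial ninth root of unity, which is impossible because $\gcd(9,q^{2}-1)=1$).

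The crux is the identity $\phi(t)^{q-1}=t^{10}$ for $t\in U_{q+1}\setminus\{1\}$: since $t^{q}=t^{-1}$ one has $\phi(t)^{q}=\sum_{j=1}^{9}t^{j}=t(1+t+\cdots+t^{8})$, and dividing this by $\phi(t)=t^{-9}(1+t+\cdots+t^{8})$ gives $t^{10}$. Thus a collision $\phi(x)=\phi(y)$ forces $x^{10}=y^{10}$, i.e. $(x/y)^{10}=1$ with $x/y\in U_{q+1}$; and because $m$ is odd we have $\gcd(10,q+1)=\gcd(10,3^{m}+1)=2$ (indeed $5\mid 3^{m}+1$ only when $m\equiv 2\pmod 4$), so $x/y=\pm 1$ and hence $y=-x$. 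Finally, using $(-x)^{-j}=(-1)^{j}x^{-j}$ and the fact that $\mathbb{F}_{q}$ has characteristic $3$, one computes $\phi(x)-\phi(-x)=-x^{-9}(1+x^{2}+x^{4}+x^{6}+x^{8})$, whose vanishing would make $x^{2}$ a nontrivial fifth root of unity in $U_{q+1}$ — impossible since $\gcd(5,q+1)=1$ for $m$ odd. This contradiction finishes the proof.

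I expect the main obstacles to be, first, recognizing that after the normalization $z=1$ the determinant collapses, up to harmless linear factors, to the $\phi$-collision problem, and second, discovering the identity $\phi(t)^{q-1}=t^{10}$; once these are in hand, the hypothesis that $m$ is odd is exactly what removes the $5$-torsion (and any extra $2$-torsion beyond $\pm1$) that would otherwise permit a collision, and the rest is routine bookkeeping with roots of unity in characteristic $3$.
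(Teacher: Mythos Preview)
The paper does not prove this lemma at all: it is quoted verbatim from \cite{[GYZZ-FFA22]} and used as a black box, so there is no ``paper's own proof'' to compare against. Your argument must therefore be judged on its own merits.

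Your proof is essentially correct, but the factorization step is misstated. After multiplying $(xy)^{5}D(x,y,1)$ out one gets
\[
P(x,y)=x^{10}-y^{10}-x^{9}+y^{9}+x^{9}y^{10}-y^{9}x^{10},
\]
and a direct check shows $P=-(x-1)(y-1)\,p(x,y)$ with your $p(x,y)=x^{9}(1+y+\cdots+y^{8})-y^{9}(1+x+\cdots+x^{8})$. Thus only the factors $(x-1)(y-1)$ come out, not $(x-y)$; the degree count confirms this ($P$ has total degree $19$, $p$ has degree $17$). The factor $(x-y)$ is already hidden inside $p$ itself, since $p(x,y)=x^{9}y^{9}\bigl(\phi(y)-\phi(x)\bigr)$. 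This does not damage the argument: for $x,y\in U_{q+1}\setminus\{1\}$ one still has $D(x,y,1)=0\iff p(x,y)=0\iff\phi(x)=\phi(y)$, which is exactly what you use.

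The remainder --- $\phi(t)\neq 0$ via $\gcd(9,q^{2}-1)=1$, the identity $\phi(t)^{q-1}=t^{10}$ from $t^{q}=t^{-1}$, the reduction $(x/y)^{10}=1\Rightarrow(x/y)^{2}=1$ using $\gcd(10,3^{m}+1)=2$ for odd $m$, and the elimination of $y=-x$ via $1+x^{2}+x^{4}+x^{6}+x^{8}\neq 0$ because $5\nmid q+1$ --- is all correct and cleanly isolates where the hypothesis ``$m$ odd'' enters. So with the small correction above your proof goes through.
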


\begin{lem}\label{lem10}
Let $q=3^{m}$ with $m$ being an odd integer.
Let $x,y,z\in U_{q+1}\backslash\{1:=1_{\mathbb{F}_{q^{2}}}\}
\subseteq \mathbb{F}_{q^{2}}$ be
arbitrary three pairwise distinct elements.
Then each of the following is true:

{\rm (i).} The system (\ref{3.1}) of homogeneous equations
has a nonzero solution in ${\mathbb{F}}^{4}_{q}$ if and only
if the following system of homogeneous equations
\begin{align}\label{3.2}
\left(\begin{array}{cccc}
 x^{4} & y^{4} & z^{4} & 1 \\
 x^{5} & y^{5} & z^{5} & 1 \\
 x^{-5} & y^{-5} & z^{-5} & 1 \\
 x^{-4} & y^{-4} & z^{-4} & 1 \\
  \end{array}
\right){\bf X}={\bf 0}
\end{align}
has a nonzero solution in $\mathbb{F}_{q}^{4}$.

{\rm (ii).} The system (\ref{3.2}) has a nonzero
solution in $\mathbb{F}_{q}^{4}$ if and only if
the following holds in $\mathbb{F}_{q^{2}}$:
\begin{align}\label{3.3}
\det\left(\begin{array}{cccc}
 x^{4} & y^{4} & z^{4} & 1 \\
 x^{5} & y^{5} & z^{5} & 1 \\
 x^{-5} & y^{-5} & z^{-5} & 1 \\
 x^{-4} & y^{-4} & z^{-4} & 1 \\
\end{array}\right)=0.
\end{align}
\end{lem}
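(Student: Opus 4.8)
The plan is to use two elementary facts about $U_{q+1}$ and the quadratic extension $\mathbb{F}_{q^{2}}/\mathbb{F}_{q}$: every $a\in U_{q+1}$ satisfies $a^{q}=a^{-1}$, and the $q$-th power map $\sigma$ is the nontrivial automorphism of $\mathbb{F}_{q^{2}}/\mathbb{F}_{q}$, whose fixed field is $\mathbb{F}_{q}$. Write $r_{1}=(x^{4},y^{4},z^{4},1)$, $r_{2}=(x^{5},y^{5},z^{5},1)$, $r_{3}=(x^{-5},y^{-5},z^{-5},1)$, $r_{4}=(x^{-4},y^{-4},z^{-4},1)$ for the four rows occurring in (\ref{3.2}); the two rows of (\ref{3.1}) are $r_{1},r_{2}$. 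Applying $\sigma$ to each entry and using $a^{q}=a^{-1}$ for $a\in U_{q+1}$ together with $1^{q}=1$, one obtains $r_{1}^{(q)}=r_{4}$, $r_{2}^{(q)}=r_{3}$, $r_{3}^{(q)}=r_{2}$, $r_{4}^{(q)}=r_{1}$; in other words, entrywise Frobenius merely permutes the four rows by the pattern $(14)(23)$. Everything below is a consequence of this one observation.

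For part (i): if ${\bf X}\in\mathbb{F}_{q}^{4}$, then each coordinate of ${\bf X}$ is $\sigma$-fixed, so applying $\sigma$ to the scalar identity $r_{i}\cdot{\bf X}=0$ gives $r_{i}^{(q)}\cdot{\bf X}=0$. Hence, for ${\bf X}\in\mathbb{F}_{q}^{4}$, we have $r_{1}\cdot{\bf X}=0\iff r_{4}\cdot{\bf X}=0$ and $r_{2}\cdot{\bf X}=0\iff r_{3}\cdot{\bf X}=0$. Therefore the set of ${\bf X}\in\mathbb{F}_{q}^{4}$ solving (\ref{3.1}) and the set of ${\bf X}\in\mathbb{F}_{q}^{4}$ solving (\ref{3.2}) are literally the same set: appending the rows $r_{3},r_{4}$ imposes nothing new on an $\mathbb{F}_{q}$-vector already annihilated by $r_{1},r_{2}$, while conversely the constraints of (\ref{3.1}) are among those of (\ref{3.2}). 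Two equal sets are simultaneously trivial or nontrivial, which gives part (i).

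For part (ii), the implication ``nonzero $\mathbb{F}_{q}$-solution $\Rightarrow$ (\ref{3.3})'' is immediate: a nonzero ${\bf X}\in\mathbb{F}_{q}^{4}\subseteq\mathbb{F}_{q^{2}}^{4}$ in the kernel of the $4\times4$ matrix $M$ of (\ref{3.2}) makes $M$ singular over $\mathbb{F}_{q^{2}}$. For the converse, assume $\det M=0$ and put $W=\ker_{\mathbb{F}_{q^{2}}}M\neq\{{\bf 0}\}$. First, $W$ is $\sigma$-stable: if $M{\bf v}={\bf 0}$, applying $\sigma$ entrywise yields $M^{(q)}{\bf v}^{(q)}={\bf 0}$, and since $M^{(q)}$ is $M$ with its rows permuted, $M^{(q)}=PM$ for an invertible permutation matrix $P$, so $M{\bf v}^{(q)}={\bf 0}$, i.e. ${\bf v}^{(q)}\in W$. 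Now pick ${\bf v}\in W\setminus\{{\bf 0}\}$ and an $\mathbb{F}_{q}$-basis $\{1,\theta\}$ of $\mathbb{F}_{q^{2}}$, and form ${\bf u}={\bf v}+{\bf v}^{(q)}$ and ${\bf w}=\theta{\bf v}+\theta^{q}{\bf v}^{(q)}$. Both lie in $W$ because $W$ is an $\mathbb{F}_{q^{2}}$-subspace containing ${\bf v}^{(q)}$, and both are $\sigma$-fixed, hence lie in $\mathbb{F}_{q}^{4}$; moreover they cannot both be zero, since ${\bf u}={\bf w}={\bf 0}$ would force ${\bf v}^{(q)}=-{\bf v}$ and then $(\theta-\theta^{q}){\bf v}={\bf 0}$, whence ${\bf v}={\bf 0}$ as $\theta\notin\mathbb{F}_{q}$. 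So one of ${\bf u},{\bf w}$ is a nonzero vector in $W\cap\mathbb{F}_{q}^{4}$, i.e. a nonzero solution of (\ref{3.2}) in $\mathbb{F}_{q}^{4}$.

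The Frobenius bookkeeping in the first paragraph is routine; the one step that needs a genuine idea is the descent in part (ii), namely recognizing that $\ker M$ is defined over $\mathbb{F}_{q}$ precisely because entrywise Frobenius only permutes the rows of $M$. If one prefers to avoid Galois descent, an equivalent route for (ii) is to split each $\mathbb{F}_{q^{2}}$-equation of (\ref{3.1}) into two $\mathbb{F}_{q}$-equations via the basis $\{1,\theta\}$, assemble the resulting $4\times4$ matrix $M'$ over $\mathbb{F}_{q}$, and verify the identity $\det M=\pm(\theta^{q}-\theta)^{2}\det M'$, so that (\ref{3.3}) is equivalent to $\det M'=0$, which by ordinary linear algebra over $\mathbb{F}_{q}$ is equivalent to the existence of a nonzero $\mathbb{F}_{q}$-solution of (\ref{3.1}), hence, by part (i), of (\ref{3.2}).
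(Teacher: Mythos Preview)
Your proof is correct. Part (i) is essentially the paper's argument, rephrased: both of you observe that for ${\bf X}\in\mathbb{F}_q^4$ the Frobenius automorphism sends the equation $r_i\cdot{\bf X}=0$ to $r_i^{(q)}\cdot{\bf X}=0$, and that the four rows are swapped in pairs under Frobenius.

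Part (ii) is where the two proofs diverge. The paper's route for the implication ``$\det M=0\Rightarrow$ nonzero $\mathbb{F}_q$-solution'' is to normalize a nonzero $\mathbb{F}_{q^2}$-solution to the form $(1,\bar b,\bar c,\bar d)$, observe that $(1,\bar b^{\,q},\bar c^{\,q},\bar d^{\,q})$ is another solution, and then invoke Lemma~\ref{lem3} (quoted from \cite{[GYZZ-FFA22]}) to force ${\rm rank}\,M=3$, so that the kernel is one-dimensional and the two solutions must coincide, giving $\bar b,\bar c,\bar d\in\mathbb{F}_q$. Your argument instead proves directly that $\ker_{\mathbb{F}_{q^2}}M$ is $\sigma$-stable (because $M^{(q)}=PM$ for a permutation matrix $P$) and then performs a standard Galois-descent trick with ${\bf u}={\bf v}+{\bf v}^{(q)}$ and ${\bf w}=\theta{\bf v}+\theta^{q}{\bf v}^{(q)}$ to produce a nonzero $\mathbb{F}_q$-vector in the kernel. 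The advantage of your approach is that it is self-contained and dimension-agnostic: you never need to know that the kernel is one-dimensional, and in particular you do not need the external input of Lemma~\ref{lem3}. The paper's approach, by contrast, yields the extra information that the $\mathbb{F}_q$-solution is unique up to scalar, which is not needed here but is consistent with how the later lemmas are used.
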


\begin{proof}
(i). We begin with the proof of the necessity.
Let the system (\ref{3.1}) of the homogeneous equations
hold a nonzero solution $(a,b,c,d)\in\mathbb{F}_{q}^{4}$.
Then for $i\in\{4,5\}$, we have
$$ax^{i}+by^{i}+cz^{i}+d=0.$$
Since $x,y,z\in U_{q+1}\setminus\{1\}$ and $a,b,c,d\in
{\mathbb{F}_{q}}$, we have $x^{q+1}=y^{q+1}=z^{q+1}=1$
and $a^q=a, b^q=b, c^q=c$ and $d^q=d$.
It follows that $x^{-1}=x^q, y^{-1}=y^q, z^{-1}=z^q$.
Hence for $i\in\{4, 5\}$, we have
$$ax^{-i}+by^{-i}+cz^{-i}+d
=(ax^{i}+by^{i}+cz^{i}+d)^{q}=0^q=0.$$
It then follows that $(a,b,c,d)\in\mathbb{F}_{q}^{4}$
is also a nonzero solution in $\mathbb{F}_{q}^{4}$
of the system (\ref{3.2}) of homogeneous equations.
So the necessity part is proved.

Consequently, we show the sufficiency part.
Let the system (\ref{3.2}) of homogeneous equations
have a nonzero solution $(a,b,c,d)\in\mathbb{F}_{q}^{4}
\subseteq\mathbb{F}_{q^2}^{4}$. Then for $i\in\{4,5\}$,
we have $ax^{i}+by^{i}+cz^{i}+d=0.$
In other words, $(a,b,c,d)\in\mathbb{F}_{q}^{4}$
is a nonzero solution of the system (\ref{3.1}) of
homogeneous equations. Therefore the sufficiency
part is proved. So part (i) is proved.

(ii). First, we show the necessity part.
Let the system (\ref{3.2}) of homogeneous equations
have a nonzero solution in $\mathbb{F}_q^4\subseteq
\mathbb{F}_{q^2}^4$. Then by the theory of linear
algebra over finite fields, one knows that (\ref{3.3})
is true over the finite field $\mathbb{F}_{q^2}$.
Hence the necessity is proved.

Conversely, we consider the sufficiency part.
Let (\ref{3.3}) hold over the finite field
$\mathbb{F}_{q^2}$. Then the system (\ref{3.2})
of homogeneous equations has a nonzero solution
$(a,b,c,d)\in\mathbb{F}_{q^2}^{4}$. So for
$i\in\{4, 5\}$, we have
$$ax^{i}+by^{i}+cz^{i}+d=0$$
and
$$ax^{-i}+by^{-i}+cz^{-i}+d=0.$$
Since $(a,b,c,d)\in\mathbb{F}_{q^2}^{4}$ is nonzero,
at least one of $a, b, c$ and $d$ is a nonzero
element in $\mathbb{F}_{q^2}$. WLOG, one may let
$a\ne 0$. Then for $i\in\{4, 5\}$, one has
$$x^{i}+\frac{b}{a}y^{i}+\frac{c}{a}z^{i}+\frac{d}{a}=0$$
and
$$x^{-i}+\frac{b}{a}y^{-i}+\frac{c}{a}z^{-i}+\frac{d}{a}=0.$$
Write $\bar b=\frac{b}{a}, \bar c=\frac{c}{a}$
and $\bar d=\frac{d}{a}$. Then
$(1, \bar b, \bar c, \bar d)\in\mathbb{F}_{q^2}^{4}$
is solution of (\ref{3.2}).

From the condition that $x,y,z\in U_{q+1}\backslash\{1\}$,
one derives that $x=x^{-q}, y=y^{-q}$ and $z=z^{-q}$.
It then follows that for $i\in\{4, 5\}$, we have
$$x^{i}+\bar b^qy^{i}+\bar c^qz^{i}+\bar d^q
=x^{-qi}+\bar b^qy^{-qi}+\bar c^qz^{-qi}+\bar d^q
=(x^{-i}+\bar by^{-i}+\bar cz^{-i}+\bar d)^q
=0^q=0$$
and
$$x^{-i}+\bar b^qy^{-i}+\bar c^qz^{-i}+\bar d^q
=x^{qi}+\bar b^qy^{qi}+\bar c^qz^{qi}+\bar d^q
=(x^{i}+\bar by^{i}+\bar cz^{i}+\bar d)^q
=0^q=0.$$
This implies that $(1, \bar b^q, \bar c^q, \bar d^q)
\in\mathbb{F}_{q^2}^{4}$ is a nonzero solution
of (\ref{3.2}).

On the other hand, since (3.3) is true, we have
\begin{align}\label{3.4}
{\rm rank}\left(\begin{array}{cccc}
   x^{4} & y^{4} & z^{4} & 1 \\
 x^{5} & y^{5} & z^{5} & 1 \\
 x^{-5} & y^{-5} & z^{-5} & 1 \\
 x^{-4} & y^{-4} & z^{-4} & 1 \\
  \end{array}
\right)\le 3.
\end{align}
But Lemma \ref{lem3} tells us that
$${\rm rank}\left(
  \begin{array}{ccc}
 x^{4} & y^{4} & z^{4} \\
 x^{5} & y^{5} & z^{5} \\
 x^{-5} & y^{-5} & z^{-5} \\
  \end{array}
\right)= 3.$$
Then
\begin{align}\label{3.5}
{\rm rank}\left(
  \begin{array}{cccc}
 x^{4} & y^{4} & z^{4} & 1 \\
 x^{5} & y^{5} & z^{5} & 1 \\
 x^{-5} & y^{-5} & z^{-5} & 1 \\
 x^{-4} & y^{-4} & z^{-4} & 1 \\
  \end{array}
\right)\geq {\rm rank}\left(
  \begin{array}{ccc}
 x^{4} & y^{4} & z^{4} \\
 x^{5} & y^{5} & z^{5} \\
 x^{-5} & y^{-5} & z^{-5} \\
  \end{array}
\right)= 3.
\end{align}
Combining (\ref{3.4}) and (\ref{3.5}), one can
conclude that
$${\rm rank}\left(\begin{array}{cccc}
x^{4} & y^{4} & z^{4} & 1 \\
x^{5} & y^{5} & z^{5} & 1 \\
x^{-5} & y^{-5} & z^{-5} & 1 \\
x^{-4} & y^{-4} & z^{-4} & 1 \\
\end{array}\right)=3.$$
It then follows that the dimension of the
solution space over $\mathbb{F}_{q^{2}}$ of
the system (\ref{3.2}) of homogeneous equations
is equal to $4-{\rm rank}(A)=4-3=1$.
This infers that the two solutions
$(1, \bar b, \bar c, \bar d)^T$ and $(1, \bar b^{q},
\bar c^{q}, \bar d^{q})^T$ are linearly dependent
which implies that there is a nonzero element
$l\in\mathbb{F}_{q^2}^*$ such that
$(1, \bar b, \bar c, \bar d)^T=l(1, \bar b^{q},
\bar c^{q}, \bar d^{q})^T$. It follows that $l=1$
and so we have $\bar b=\bar b^{q},
\bar c=\bar c^{q}$ and $\bar d=\bar d^{q}$.
This implies that each of $\bar b, \bar c$ and
$\bar d$ belongs to $\mathbb{F}_q$. Hence
$(1, \bar b, \bar c, \bar d)$ is a nonzero
solution in $\mathbb{F}_q^4$ of (\ref{3.2}).
The sufficiency part is proved.

This completes the proof of Lemma \ref{lem10}.
\end{proof}

\begin{lem}\label{lem11}
Let $q=3^{m}$ with $m$ being a positive integer.
Let $x,y,z\in U_{q+1}\backslash\{1\}$ be
three pairwise distinct elements. Then
$$\det\left(\begin{array}{cccc}
x^{4} & y^{4} & z^{4} & 1 \\
x^{5} & y^{5} & z^{5} & 1 \\
x^{-5} & y^{-5} & z^{-5} & 1 \\
x^{-4} & y^{-4} & z^{-4} & 1 \\
\end{array}\right)
=\frac{XYZ}{((X+1)(Y+1)(Z+1))^{5}}
\det\left(\begin{array}{cc}
Y^9-X^{9} & Z^9-X^{9}\\
Y^8-X^{8}& Z^{8}-X^{8}
\end{array}\right),$$
where $X:=x-1$, $Y:=y-1$ and $Z:=z-1$.
Moreover, it holds in $\mathbb{F}_{q^2}$ that
$$\det\left(\begin{array}{cccc}
x^{4} & y^{4} & z^{4} & 1 \\
 x^{5} & y^{5} & z^{5} & 1 \\
 x^{-5} & y^{-5} & z^{-5} & 1 \\
 x^{-4} & y^{-4} & z^{-4} & 1 \\
 \end{array}
\right)=0$$
if and only if it holds in $\mathbb{F}_{q^2}$ that
\begin{equation*}
\det\left(\begin{array}{cc}
Y^{9}-X^{9} & Z^{9}-X^{9} \\
Y^{8}-X^{8} & Z^{8}-X^{8}
\end{array}\right)=0.
\end{equation*}
\end{lem}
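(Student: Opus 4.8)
The plan is to reduce the $4\times 4$ determinant to the $2\times 2$ one by a short sequence of elementary row and column operations, using nothing beyond the characteristic-$3$ identity $(u+1)^{3}=u^{3}+1$ and the fact that $x,y,z$ are nonzero (being in $U_{q+1}$).

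First I would clear the negative exponents by pulling $x^{-5}$ out of the first column, $y^{-5}$ out of the second, and $z^{-5}$ out of the third. Since $x^{4}x^{5}=x^{9}$, $x^{5}x^{5}=x^{10}$, $x^{-5}x^{5}=1$, $x^{-4}x^{5}=x$ (and similarly for $y,z$), while the last column $(1,1,1,1)^{T}$ is untouched, this gives
\[
\det\left(\begin{array}{cccc}
x^{4} & y^{4} & z^{4} & 1 \\
x^{5} & y^{5} & z^{5} & 1 \\
x^{-5} & y^{-5} & z^{-5} & 1 \\
x^{-4} & y^{-4} & z^{-4} & 1
\end{array}\right)
=(xyz)^{-5}\det\left(\begin{array}{cccc}
x^{9} & y^{9} & z^{9} & 1 \\
x^{10} & y^{10} & z^{10} & 1 \\
1 & 1 & 1 & 1 \\
x & y & z & 1
\end{array}\right),
\]
and since $(xyz)^{5}=((X+1)(Y+1)(Z+1))^{5}$, it remains to identify the latter $4\times 4$ determinant with $XYZ\det\left(\begin{array}{cc}Y^{9}-X^{9}&Z^{9}-X^{9}\\Y^{8}-X^{8}&Z^{8}-X^{8}\end{array}\right)$.

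For that, I would substitute the characteristic-$3$ expansions $x^{9}=(X+1)^{9}=X^{9}+1$, $x^{10}=x\cdot x^{9}=(X+1)(X^{9}+1)=X^{10}+X^{9}+X+1$, and $x=X+1$ (likewise for $y,z$). Then I would subtract the all-ones third row from the first and fourth rows, turning them into $(X^{9},Y^{9},Z^{9},0)$ and $(X,Y,Z,0)$, and subtract the third row together with these two new rows from the second row, collapsing it to $(X^{10},Y^{10},Z^{10},0)$. Expanding the resulting determinant along its last column — whose only nonzero entry is the $1$ in position $(3,4)$, carrying the sign $(-1)^{3+4}$ — leaves $-1$ times the $3\times 3$ determinant with rows $(X^{9},Y^{9},Z^{9})$, $(X^{10},Y^{10},Z^{10})$, $(X,Y,Z)$. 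Factoring $X,Y,Z$ out of its three columns, applying $C_{2}\mapsto C_{2}-C_{1}$ and $C_{3}\mapsto C_{3}-C_{1}$, and expanding along the bottom row $(1,0,0)$ then yields $XYZ\det\left(\begin{array}{cc}Y^{8}-X^{8}&Z^{8}-X^{8}\\Y^{9}-X^{9}&Z^{9}-X^{9}\end{array}\right)$; a single row swap together with the earlier factor $-1$ puts this in the claimed form $XYZ\det\left(\begin{array}{cc}Y^{9}-X^{9}&Z^{9}-X^{9}\\Y^{8}-X^{8}&Z^{8}-X^{8}\end{array}\right)$, which proves the identity.

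The ``moreover'' statement is then immediate: the scalar $XYZ/((X+1)(Y+1)(Z+1))^{5}$ is a nonzero element of $\mathbb{F}_{q^{2}}$, since $X=x-1$, $Y=y-1$, $Z=z-1$ are all nonzero (as $x,y,z\neq1$) and $(X+1)(Y+1)(Z+1)=xyz\neq0$; hence the $4\times 4$ determinant vanishes if and only if the $2\times 2$ determinant does. I do not expect a genuine obstacle — the whole argument is bookkeeping. The only points that demand care are the characteristic-$3$ expansion $x^{10}=X^{10}+X^{9}+X+1$ (it is emphatically not $X^{10}+1$), and keeping track of the cumulative sign through the two cofactor expansions and the final row swap, so that the right-hand side ends up with its rows in the order $(9,8)$ rather than $(8,9)$.
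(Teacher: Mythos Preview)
Your proof is correct. Both your argument and the paper's proceed by elementary row/column operations together with the characteristic-$3$ identity $(u-1)^{9}=u^{9}-1$, so the approaches are essentially the same in spirit; however, your sequence of reductions is tidier. The paper first subtracts the last row from the others and expands to a $3\times 3$ determinant with entries like $x^{8}-1$, and then has to invoke the somewhat ad~hoc identity $\frac{x^{8}-1}{x-1}=(x-1)^{8}-x^{8}$ (valid in characteristic $3$) before further reducing. By pulling the factor $(xyz)^{-5}$ out of the columns at the very start, you land immediately on entries $x^{9},x^{10},x,1$ whose characteristic-$3$ expansions in $X=x-1$ are transparent, so no such auxiliary identity is needed and the bookkeeping is shorter.
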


\begin{proof}
Let
$$A=\left(
  \begin{array}{cccc}
 x^{4} & y^{4} & z^{4} & 1 \\
 x^{5} & y^{5} & z^{5} & 1 \\
 x^{-5} & y^{-5} & z^{-5} & 1 \\
 x^{-4} & y^{-4} & z^{-4} & 1 \\
  \end{array}
\right).$$
Then
\begin{align*}
\det(A)=&\det\left(\begin{array}{ccc}
        x^{4}-x^{-4} & y^{4}-y^{-4} & z^{4}-z^{-4}\\
        x^{5}-x^{-4} & y^{5}-y^{-4} & z^{5}-z^{-4}\\
        x^{-5}-x^{-4} & y^{-5}-y^{-4} & z^{-5}-z^{-4}
      \end{array}
\right)\\
=&(xyz)^{-4}\det\left(\begin{array}{ccc}
x^{8}-1 & x^{8}-1 & z^{8}-1\\
x^{9}-1 & y^{9}-1 & z^{9}-1\\
x^{-1}-1 & y^{-1}-1 & z^{-1}-1
\end{array}\right).
\end{align*}
Since $q=3^{m}$ with $m\ge 1$ being an integer, one has
$x^9-1=(x-1)^9, y^9-1=(y-1)^9$ and $z^9-1=(z-1)^9$.
Moreover, we have
$$\frac{x^{8}-1}{x-1}=(1+x+\cdots+x^7+x^8)-x^8
=\frac{(x-1)^{9}}{x-1}-x^8=(x-1)^{8}-x^8.$$
Likewise, we have
$$\frac{y^{8}-1}{y-1}=(y-1)^{8}-y^8,
\frac{z^{8}-1}{z-1}=(z-1)^{8}-z^8.$$
It follows that
\begin{align*}
\det(A)
=&(xyz)^{-4}(x-1)(y-1)(z-1)\\
&\times\det\left(\begin{array}{ccc}
        (x-1)^{8}-x^8 & (y-1)^{8}-y^8 & (z-1)^{8}-z^8\\
        (x-1)^{8} & (y-1)^{8} & (z-1)^{8}\\
        -x^{-1} & -y^{-1} & -z^{-1}
      \end{array}
\right)\\
=&(xyz)^{-4}(x-1)(y-1)(z-1)\det\left(\begin{array}{ccc}
        x^8 & y^8 & z^8\\
        (x-1)^{8} & (y-1)^{8} & (z-1)^{8}\\
       x^{-1} & y^{-1} & z^{-1}
      \end{array}
\right)\\
=& (xyz)^{-4}(x-1)(y-1)(z-1)\\
&\times\det\left(\begin{array}{ccc}
        x^8 & y^8-\frac{x}{y}x^{8}
        & z^8-\frac{x}{z}x^{8}\\
        (x-1)^{8} & (y-1)^8-\frac{x}{y}(x-1)^{8}
        & (z-1)^8-\frac{x}{z}(x-1)^{8}\\
       x^{-1} & 0 & 0
      \end{array}
\right)\\
=& x^{-5}(yz)^{-4}(x-1)(y-1)(z-1)\\
&\times\det\left(\begin{array}{cc}
y^8-x^{9}y^{-1} & z^8-x^{9}z^{-1}\\
(y-1)^8-x(x-1)^{8}y^{-1}
&(z-1)^8-x(x-1)^{8}z^{-1}
\end{array}\right)\\
=&(xyz)^{-5}(x-1)(y-1)(z-1)\\
&\times\det\left(\begin{array}{cc}
y^9-x^{9} & z^9-x^{9}\\
y(y-1)^8-x(x-1)^{8}& z(z-1)^8-x(x-1)^{8}
\end{array}\right).
\end{align*}
But
$$y(y-1)^{8}=(y-1)^{9}+(y-1)^{8}=y^{9}-1+(y-1)^{8}$$
and
$$z(z-1)^{8}=z^{9}-1+(z-1)^{8}.$$
Then one arrives at
\begin{align*}
&\det(A)=(xyz)^{-5}(x-1)(y-1)(z-1)\\
\times&\det\left(\begin{array}{cc}
         y^9-x^{9} & z^9-x^{9}\\
         (y^9-1)+(y-1)^8-(x^{9}-1)-(x-1)^{8}&
         (z^9-1)+(z-1)^8-(x^{9}-1)-(x-1)^{8}
      \end{array}
\right)\\
=&(xyz)^{-5}(x-1)(y-1)(z-1)\\
\times&\det\left(\begin{array}{cc}
         y^9-x^{9} & z^9-x^{9}\\
         (y^9-1-x^{9}+1)+(y-1)^8-(x-1)^{8}&
         (z^9-1-x^{9}+1)+(z-1)^8-(x-1)^{8}
      \end{array}
\right)\\
=&(xyz)^{-5}(x-1)(y-1)(z-1)\\
\times&\det\left(\begin{array}{cc}
y^9-x^{9} & z^9-x^{9}\\
y^9-x^{9}+(y-1)^8-(x-1)^{8}
&z^9-x^{9}+(z-1)^8-(x-1)^{8}
\end{array}
\right)\\
=&(xyz)^{-5}(x-1)(y-1)(z-1)\\
\times&\det\left(\begin{array}{cc}
y^9-x^{9} & z^9-x^{9}\\
(y-1)^8-(x-1)^{8}& (z-1)^8-(x-1)^{8}
\end{array}
\right).
\end{align*}
Since the characteristics of the finite field
$\mathbb{F}_{q^2}$ is 3, one has
$$Y^9-X^{9}=(y-1)^{9}-(x-1)^9=y^9-x^9$$
and
$$Z^9-X^{9}=(z-1)^{9}-(x-1)^9=z^9-x^9.$$
This yields that
\begin{align*}
\det(A)&=XYZ((X+1)(Y+1)(Z+1))^{-5}
\det\left(\begin{array}{cc}
Y^9-X^{9} & Z^9-X^{9}\\
Y^8-X^{8}& Z^{8}-X^{8}
\end{array}\right)
\end{align*}
as required.

Since $x,y,z\in U_{q+1}\backslash\{1\}$ are
three pairwise distinct elements, we have
$$XYZ((X+1)(Y+1)(Z+1))^{-5}\ne 0.$$
It then follows immediately that $\det(A)=0$
if and only if
\begin{equation*}
\det\left(\begin{array}{cc}
        Y^{9}-X^{9} & Z^{9}-X^{9} \\
        Y^{8}-X^{8} & Z^{8}-X^{8}
      \end{array}
\right)=0
\end{equation*}
as desired.

This finishes the proof of Lemma \ref{lem11}.
\end{proof}

\begin{lem}\label{lem16}
Let $q=3^{m}$ with $m$ being an odd integer.
Let $x,y,z\in U_{q+1}\backslash\{1\}$ be
three pairwise distinct elements.
Write $X:=x-1$, $Y:=y-1$ and $Z:=z-1$.
Then it holds in $\mathbb{F}_{q^2}$ that
\begin{equation}\label{3.6}
\det\left(\begin{array}{cc}
Y^{9}-X^{9} & Z^{9}-X^{9} \\
Y^{8}-X^{8} & Z^{8}-X^{8}
\end{array}\right)=0
\end{equation}
if and only if it holds in $\mathbb{F}_{q^2}$ that
\begin{equation}\label{3.7}
\frac{(y-1)(z-x)}{(x-1)(z-y)}=-1.
\end{equation}
\end{lem}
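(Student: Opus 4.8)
The plan is to parametrize the ``circle'' $U_{q+1}\setminus\{1\}$ rationally over $\mathbb{F}_q$, push the whole statement down to a symmetric-function identity in three variables over $\mathbb{F}_q$, and finish with a characteristic-$3$ divisibility together with the non-splitting of a specific cubic over $\mathbb{F}_{3^m}$ for odd $m$. Since $m$ is odd, $-1$ is a nonsquare in $\mathbb{F}_q$; I fix $i\in\mathbb{F}_{q^2}$ with $i^2=-1$, so that $i^q=-i$ and the Cayley map $s\mapsto(s+i)/(s-i)$ is a bijection from $\mathbb{F}_q$ onto $U_{q+1}\setminus\{1\}$. Writing $x=(s+i)/(s-i)$, $y=(t+i)/(t-i)$, $z=(w+i)/(w-i)$ with $s,t,w\in\mathbb{F}_q$ pairwise distinct, one gets $X=x-1=-i/(s-i)$, $Y=-i/(t-i)$, $Z=-i/(w-i)$, and a short computation in characteristic $3$ yields
\[
(y-1)(z-x)+(x-1)(z-y)=XY+YZ+ZX=-\frac{s+t+w}{(s-i)(t-i)(w-i)} .
\]
Hence \eqref{3.7} holds if and only if $XY+YZ+ZX=0$, if and only if $s+t+w=0$.

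Next I would reduce the left side of \eqref{3.6}. The column operations $C_2\mapsto C_2-C_1$, $C_3\mapsto C_3-C_1$ and expansion along the bottom row identify it with $\det\begin{pmatrix}X^9&Y^9&Z^9\\X^8&Y^8&Z^8\\1&1&1\end{pmatrix}$. Substituting the parametrization, and using $i^8=1$ and $(s-i)^9=s^9-i$ in characteristic $3$, one has $X^8=(s-i)/(s^9-i)$ and $X^9=-i/(s^9-i)$; factoring $1/(s^9-i)$ out of each column and $-i$ out of the first row, and then applying $R_2\mapsto R_2+iR_1$ and $R_3\mapsto R_3+iR_1$, gives
\[
\det\begin{pmatrix}Y^9-X^9&Z^9-X^9\\Y^8-X^8&Z^8-X^8\end{pmatrix}
=\frac{-i\,(t-s)(w-s)(w-t)}{(s^9-i)(t^9-i)(w^9-i)}\;h_7(s,t,w),
\]
where $h_7$ is the complete homogeneous symmetric polynomial of degree $7$ and I have used the bialternant identity $\det\begin{pmatrix}1&1&1\\s&t&w\\s^9&t^9&w^9\end{pmatrix}=(t-s)(w-s)(w-t)\,h_7(s,t,w)$ (the Schur polynomial $s_{(7,0,0)}$ equals $h_7$). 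Because $s,t,w$ are distinct and $s^9\in\mathbb{F}_q$ forces $s^9\ne i$, every factor in front is nonzero, so the determinant in \eqref{3.6} vanishes if and only if $h_7(s,t,w)=0$.

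From the recurrence $h_n=e_1h_{n-1}-e_2h_{n-2}+e_3h_{n-3}$ for three variables, a direct computation in characteristic $3$ gives $h_7=e_1\bigl(e_1^{6}-e_1^{3}e_3+e_1^{2}e_2^{2}-e_2^{3}\bigr)$, so $h_7(s,t,w)=0$ precisely when $s+t+w=0$ or $G(s,t,w):=e_1^{6}-e_1^{3}e_3+e_1^{2}e_2^{2}-e_2^{3}=0$. To eliminate the second case when $e_1\ne0$, rescale so that $e_1=1$; then $G=0$ reads $e_3=1+e_2^{2}-e_2^{3}$, and for this value of $e_3$ the monic cubic $X^3-X^2+e_2X-e_3$ whose roots are $s,t,w$ becomes $X^3-X^2-1$ after the substitution $X\mapsto X-e_2$. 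But over $\mathbb{F}_3$ one has $X^3-X^2-1=(X-2)(X^2+X+2)$ with $X^2+X+2$ irreducible, so for odd $m$ this cubic has a single root in $\mathbb{F}_{3^m}$ and hence cannot have three distinct roots there; this contradiction shows $G(s,t,w)\ne0$. Chaining the equivalences, the determinant in \eqref{3.6} vanishes if and only if $s+t+w=0$, i.e.\ if and only if \eqref{3.7} holds.

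I expect the crux to be this last step: recognizing that $e_1=s+t+w$ divides $h_7$ in characteristic $3$, and then proving that the residual factor $G$ never vanishes on a distinct triple in $\mathbb{F}_q$ with $e_1\ne0$, by reducing it to the non-splitting of $X^3-X^2-1$ over $\mathbb{F}_{3^m}$ for odd $m$ --- this is exactly the point at which the parity hypothesis on $m$ is genuinely used. The determinant manipulations of the middle step, and the reformulation of the right-hand side, are routine once the rational parametrization of $U_{q+1}$ is in place.
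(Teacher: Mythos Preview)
Your argument is correct and genuinely different from the paper's. The paper proceeds by direct algebraic manipulation: expanding the $2\times2$ determinant, one reaches $\bigl(\tfrac{Y(Z-X)}{X(Z-Y)}\bigr)^{8}=1$, then observes that $\Delta:=\tfrac{Y(Z-X)}{X(Z-Y)}$ satisfies $\Delta^{q}=\Delta$ (so $\Delta\in\mathbb{F}_q^{*}$), and since $q-1\equiv2\pmod 8$ for odd $m$ one has $\gcd(8,q-1)=2$, whence $\Delta^{2}=1$; the value $\Delta=1$ is excluded because it would force $z=1$. This is short and uses the parity of $m$ only through the gcd.

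Your route is more structural: the Cayley parametrization $U_{q+1}\setminus\{1\}\leftrightarrow\mathbb{F}_q$ converts \eqref{3.7} into the linear condition $s+t+w=0$ and converts \eqref{3.6} into the vanishing of the Schur polynomial $h_7(s,t,w)$; the characteristic-$3$ identity $h_7=e_1\,(e_1^{6}-e_1^{3}e_3+e_1^{2}e_2^{2}-e_2^{3})$ then reduces the problem to showing the cofactor $G$ never vanishes on a distinct $\mathbb{F}_q$-triple with $e_1\neq0$, which you settle by the Tschirnhaus shift to $Y^{3}-Y^{2}-1=(Y-2)(Y^{2}+Y+2)$ and the fact that $\mathbb{F}_{9}\not\subseteq\mathbb{F}_{3^{m}}$ for odd $m$. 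You thereby use the parity hypothesis twice (existence of $i$ with $i^{2}=-1$ outside $\mathbb{F}_q$, and irreducibility of $Y^{2}+Y+2$ over $\mathbb{F}_{3^{m}}$), both equivalent to $2\nmid m$. The paper's proof is shorter and needs no symmetric-function machinery; yours exposes the determinant as a Schur polynomial and makes the role of the cubic $Y^{3}-Y^{2}-1$ explicit, which could be useful if one wanted to vary the exponents $4,5$ in the parity-check matrix.
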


\begin{proof}
Firstly, we let (\ref{3.6}) hold in $\mathbb{F}_{q^2}$. Then
\begin{align*}
& Y^{9}(Z^{8}-X^{8})-Y^{8}(Z^{9}-X^{9})+X^{8}Z^{9}-X^{9}Z^{8}\\
=&(Z-X)\Big(Y^{9}\frac{Z^{8}-X^{8}}{Z-X}-Y^{8}
(Z-X)^{8}+X^{8}Z^{8}\Big)=0.
\end{align*}
Since
$$Z^8-X^8=\frac{1}{Z}(Z^{9}-X^{9}-X^{8}(Z-X))
=\frac{Z-X}{Z}((Z-X)^{8}-X^{8}),$$
one has
$$(Z-X)\Big(\frac{Y^9}{Z}((Z-X)^{8}-X^{8})
-Y^{8}(Z-X)^{8}+X^8Z^8\Big)=0.$$
But $X\ne Z$ since $x\ne z$. Hence
$$\frac{Y^{9}}{Z}((Z-X)^{8}-X^{8})-Y^{8}
(Z-X)^{8}=-X^8Z^8.$$
It follows that
$$\frac{Y^{9}}{Z}(Z-X)^{8}-Y^{8}(Z-X)^{8}
=\frac{Y^{9}X^{8}}{Z}-X^8Z^8$$
from which one can deduce that
$$ Y^{8}(Z-X)^{8}\Big(\frac{Y}{Z}-1\Big)
=\frac{X^8}{Z}(Y^9-Z^9)=\frac{X^8}{Z}(Y-Z)^{9}.$$
That is,
$$ Y^{8}(Z-X)^{8}(Y-Z)=X^8(Y-Z)^{9}.$$
Since $Y-Z\ne 0$, one derives that
$$\Big(\frac{Y(Z-X)}{X(Z-Y)}\Big)^{8}=1.$$

Now we let (\ref{3.7}) be true. Let
\begin{align}\label{3.8}
\Delta:=\frac{(y-1)(z-x)}{(x-1)(z-y)}.
\end{align}
Then $\frac{Y(Z-X)}{X(Z-Y)}=\Delta\in
\mathbb{F}_{q^2}$ and $\Delta^{8}=1$.
We claim that
$\Delta\in \mathbb{F}_{q}$. Actually,
since $x,y,z\in U_{q+1}\backslash\{1\}
\subseteq\mathbb{F}_{q^2}$, we have
\begin{align*}
\Delta^{q}=& \frac{(y-1)^{q}(z-x)^{q}}{(x-1)^{q}(z-y)^{q}}\\
=& \frac{(y^{q}-1)(z^{q}-x^q)}{(x^q-1)(z^{q}-y^{q})}\\
=& \frac{(y^{-1}-1)(z^{-1}-x^{-1})}{(x^{-1}-1)(z^{-1}-y^{-1})}\\
=& \frac{x^{-1}y^{-1}z^{-1}(1-y)(x-z)}{x^{-1}y^{-1}z^{-1}(1-x)(y-z)}\\
=& \frac{(y-1)(z-x)}{(x-1)(z-y)}=\Delta.
\end{align*}
Hence $\Delta\in \mathbb{F}_{q}$ as claimed. The claim
is proved. Evidently, one has $\Delta\ne 0$. Then the
claim tells us that $\Delta^{q-1}=1$. However, since
$q=3^m$ and $m\ge 1$ is an odd integer, we have
$q-1\equiv 3-1=2\pmod 8$. Then $\gcd(q-1, 8)=2$.
So from $\Delta^{8}=1$ and $\Delta^{q-1}=1$
we deduce that $\Delta^{2}=1$. It follows that
$\Delta=1$ or $\Delta=-1$. If $\Delta(z)=1$, then
$(y-1)(z-x)=(x-1)(z-y)$. This implies that
$x-y=xz-yz$. Since $x\ne y$, it then follows that
$z=\frac{x-y}{x-y}=1$ which contradicts with the
hypothesis $z\ne 1$. Hence we must have $\Delta=-1$
as (\ref{3.7}) wanted. The necessity part is proved.

Conversely, let (\ref{3.7}) hold in $\mathbb{F}_{q^2}$. Then
$$\Big(\frac{Y(Z-X)}{X(Z-Y)}\Big)^{8}
=\Big(\frac{(y-1)(z-x)}{(x-1)(z-y)}\Big)^{8}=1.$$
Then
$$
(Z-Y)Y^8(Z-X)^9=(Z-X)X^8(Z-Y)^9.
$$
Since
$$(Z-X)^9=Z^9-X^9 \ {\rm and} \ (Z-Y)^9=Z^9-Y^9,$$
it follows that
$$
(Z-Y)Y^8(Z^9-X^9)=(Z-X)X^8(Z^9-Y^9).
$$
It reduces to
$$
Y^8Z^{10}-Y^9Z^9-X^9Y^8Z=X^8Z^{10}-X^9 Z^9-X^8Y^9 Z.
$$
Then dividing by $Z$ on both sides gives that
$$Y^{9}(Z^{8}-X^{8})-Y^{8}(Z^{9}-X^{9})
+X^{8}Z^{9}-X^{9}Z^{8}=0.$$
We then deduce that
\begin{align*}
\det\left(\begin{array}{cc}
Y^{9}-X^{9} & Z^{9}-X^{9} \\
Y^{8}-X^{8} & Z^{8}-X^{8}
\end{array}\right) &=0.
\end{align*}
Namely, (\ref{3.6}) is true. The sufficiency part is proved.

The proof of Lemma \ref{lem16} is complete.
\end{proof}

\begin{lem}\label{lem17}
Let $q=3^{m}$ with $m$ being a positive integer.
Let $x,y\in U_{q+1}$ be two elements. Then each
of the following holds in $\mathbb{F}_{q^2}$:

{\rm (i).} We have $x+y+xy=0\Longleftrightarrow x+y+1=0
\Longleftrightarrow x=y=1$.

{\rm (ii).} If either $x,y\in U_{q+1}\setminus\{1\}$,
or $x,y\in U_{q+1}$ and $x\ne y$, then $x+y+xy\ne 0$
and $x+y+1\ne 0$.
\end{lem}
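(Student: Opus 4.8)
The plan is to reduce both parts to the single equivalence in (i), and to prove (i) by exploiting the defining property of $U_{q+1}$ together with a characteristic-$3$ factorization. The key observation is that $u\in U_{q+1}$ means $u^{q+1}=1$, equivalently $u^{q}=u^{-1}$ (all elements of $U_{q+1}$ being nonzero), so the Frobenius map $u\mapsto u^{q}$ acts on $U_{q+1}$ as inversion.

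First I would dispose of the trivial implications: if $x=y=1$, then, since the characteristic of $\mathbb{F}_{q^2}$ is $3$, we have $x+y+xy=1+1+1=3=0$ and $x+y+1=1+1+1=3=0$, so $x=y=1$ forces both equations. Next I would establish $x+y+xy=0\iff x+y+1=0$. Assuming $x+y+xy=0$, raising to the $q$-th power and using $x^{q}=x^{-1}$, $y^{q}=y^{-1}$ gives $x^{-1}+y^{-1}+x^{-1}y^{-1}=0$, and multiplying through by $xy$ yields $y+x+1=0$. The reverse implication is symmetric: from $x+y+1=0$ apply Frobenius to obtain $x^{-1}+y^{-1}+1=0$, then multiply by $xy$ to get $y+x+xy=0$.

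The heart of the argument is to show that if both $x+y+xy=0$ and $x+y+1=0$ hold, then $x=y=1$. Subtracting the two relations gives $xy-1=0$, so $y=x^{-1}$; substituting into $x+y+1=0$ and multiplying by $x$ gives $x^{2}+x+1=0$. Since the characteristic is $3$, one has $x^{2}+x+1=x^{2}-2x+1=(x-1)^{2}$, so $(x-1)^{2}=0$, forcing $x=1$ and hence $y=x^{-1}=1$. This closes the chain of equivalences in (i).

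Part (ii) is then immediate: in either listed situation the possibility $x=y=1$ is excluded --- directly when $x,y\in U_{q+1}\setminus\{1\}$, and because $x\ne y$ in the other case --- so by (i) neither $x+y+xy=0$ nor $x+y+1=0$ can hold. I do not anticipate any genuine obstacle; the only points requiring care are invoking the ``Frobenius acts as inversion on $U_{q+1}$'' property and the identity $x^{2}+x+1=(x-1)^{2}$, which is valid precisely because the characteristic is $3$.
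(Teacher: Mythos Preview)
Your proof is correct and follows essentially the same route as the paper's. The paper packages the first equivalence as the single identity $x+y+xy=xy(x+y+1)^{q}$, whereas you apply Frobenius separately in each direction, and it deduces $x=y=1$ via Vieta (roots of $(u-1)^{2}=0$) rather than by substituting $y=x^{-1}$, but these are cosmetic variations of the same argument.
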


\begin{proof}
(i). First of all, we show that
$x+y+xy=0\Longleftrightarrow x+y+1=0$.

In fact, since $x,y\in U_{q+1}$, we have
$x^q=x^{-1}$ and $y^q=y^{-1}$. Hence
$$x+y+xy=xy(x^{-1}+y^{-1}+1)=xy(x^q+y^q+1)=xy(x+y+1)^q.$$
But $x\ne 0$ and $y\ne 0$. It then follows that $x+y+xy=0$
holds if and only if $x+y+1=0$ holds as required.
So the first statement of part (i) is true.

Consequently, we show that $x+y+1=0\Longleftrightarrow x=y=1$.
Let $x=y=1$. Since the characteristics of ${\mathbb{F}}_{q^2}$
is equal to 3, one has $x+y+1=0$ holds in ${\mathbb{F}}_{q^2}$.
Conversely, let $x+y+1=0$. Then we have $x+y=-1=2$.
By the first part of (i), one knows that $x+y+xy=0$.
We then derive that $xy=1$ holds in ${\mathbb{F}}_{q^2}$.
It follows that $x$ and $y$ are the two roots in ${\mathbb{F}}_{q^2}$
of the quadratic equation
$$u^2-2u+1=(u-1)^2=0.$$
Thus we must have $x=y=1$ as desired. Thus part (i) is proved.

(ii). Let either $x,y\in U_{q+1}\setminus\{1\}$,
or $x,y\in U_{q+1}$ and $x\ne y$. Then $(x, y)
\ne (1, 1)$. So by part (i), we have
$x+y+xy\ne 0$ and $x+y+1\ne 0$ as desired.

This concludes the proof of Lemma \ref{lem17}.
\end{proof}

\begin{lem}\label{lem18}
Let $q=3^{m}$ with $m$ being a positive
integer. Let $x,y\in U_{q+1}\setminus\{1\}$
be two distinct elements. Then we have
$$-\frac{x+y+xy}{x+y+1}\in U_{q+1}\setminus\{x,y,1\}.$$
\end{lem}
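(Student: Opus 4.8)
The plan is to put $z:=-\dfrac{x+y+xy}{x+y+1}$ and to verify, in turn, that $z$ is a well-defined nonzero element of $\mathbb{F}_{q^2}$, that $z\in U_{q+1}$, and that $z\notin\{x,y,1\}$; together these give exactly the assertion of the lemma. For the first point, note that since $x,y\in U_{q+1}\setminus\{1\}$ are distinct, Lemma \ref{lem17}(ii) yields both $x+y+1\ne 0$ and $x+y+xy\ne 0$; the former makes the quotient defining $z$ legitimate, and the latter shows $z\ne 0$.

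Next I would check that $z\in U_{q+1}$, that is, $z^{q+1}=1$, or equivalently $z\cdot z^q=1$. Applying the Frobenius map $t\mapsto t^q$ to $z$ and using $x^q=x^{-1}$, $y^q=y^{-1}$ (valid because $x,y\in U_{q+1}$), one obtains
\begin{align*}
z^q=-\frac{x^{-1}+y^{-1}+x^{-1}y^{-1}}{x^{-1}+y^{-1}+1}=-\frac{x+y+1}{x+y+xy},
\end{align*}
the last equality by multiplying numerator and denominator by $xy$ (legitimate since $x+y+xy\ne 0$). Hence $z\cdot z^q=1$, so $z^{q+1}=1$ and $z\in U_{q+1}$.

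Finally I would exclude the three exceptional values. For each of $z=1$, $z=x$, $z=y$, I would clear the denominator $x+y+1$, collect terms, and simplify using $2=-1$ in $\mathbb{F}_{q^2}$ (characteristic $3$); the three equations reduce, respectively, to
\begin{align*}
(x-1)(y-1)=0,\qquad (x-1)(x-y)=0,\qquad (y-1)(y-x)=0.
\end{align*}
Each contradicts one of the hypotheses $x\ne 1$, $y\ne 1$, $x\ne y$, so $z\notin\{x,y,1\}$, which completes the plan.

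Everything here is elementary; the only place calling for a little care is the characteristic-$3$ bookkeeping in the last step, especially spotting the three factorizations above, and I do not anticipate any genuine obstacle.
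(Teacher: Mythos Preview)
Your proposal is correct and follows essentially the same approach as the paper: both use Lemma~\ref{lem17}(ii) to ensure the denominator and numerator are nonzero, compute $z^q$ via the Frobenius map and the identities $x^q=x^{-1}$, $y^q=y^{-1}$ to conclude $z\in U_{q+1}$, and then rule out $z\in\{1,x,y\}$ by clearing denominators and obtaining the same three factorizations $(x-1)(y-1)=0$, $(x-1)(x-y)=0$, $(y-1)(y-x)=0$ in characteristic~$3$.
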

\begin{proof}
At first, we show that
$$-\frac{x+y+xy}{x+y+1}\in U_{q+1}.$$
Since $x,y\in U_{q+1}\setminus\{1\}$, by Lemma 3.6 (ii)
we have $x+y+xy\ne 0$ and $x+y+1\ne 0$. Thus
\begin{align*}
\Big(-\frac{x+y+xy}{x+y+1}\Big)^q=&-\frac{(x+y+xy)^q}{(x+y+1)^q}\\
=&-\frac{x^{q}+y^{q}+(xy)^{q}}{x^{q}+y^{q}+1}\\
=&-\frac{x^{-1}+y^{-1}+(xy)^{-1}}{x^{-1}+y^{-1}+1}\\
=&-\frac{x+y+1}{x+y+xy}.
\end{align*}
It then follows that
$$\Big(-\frac{x+y+xy}{x+y+1}\Big)^{q+1}=1$$
which implies that $-\frac{x+y+xy}{x+y+1}\in U_{q+1}$
as one wants.

It remains to show that
$$-\frac{x+y+xy}{x+y+1}\not\in\{x,y,1\}.$$
This will be done in what follows.

First, suppose that $-\frac{x+y+xy}{x+y+1}=x$. Then
$x+y+xy+x(x+y+1)=0$. So $2x+2xy+x^2+y=0$ from which
one can derive that $-x-xy+x^2+y=0$. That is, $(y-x)(1-x)=0$.
Hence we have $x=y$ or $x=1$. This contradicts with the
hypothesis $x\ne y$ and $x\ne 1$. Therefore we must
have
$$-\frac{x+y+xy}{x+y+1}\ne x$$
as required.

Consequently, suppose that $-\frac{x+y+xy}{x+y+1}=y$.
Likewise, we can deduce that $(x-y)(1-y)=0$.
Thus either $x=y$ or $y=1$. We arrive at
a contradiction with the hypothesis $x\ne y$
and $y\ne 1$. Hence we must have
$$-\frac{x+y+xy}{x+y+1}\ne y$$
as expected.

Finally, suppose that $-\frac{x+y+xy}{x+y+1}=1$.
Then $x+y+xy+x+y+1=0$. Namely, $-x-y+xy+1=0$.
Equivalently, $(1-x)(1-y)=0$. This is impossible
since the hypothesis that $x\ne 1$ and $y\ne 1$
implies that $(1-x)(1-y)\ne 0$. In conclusion, we
have
$$-\frac{x+y+xy}{x+y+1}\ne 1$$
as desired.

This completes the proof of Lemma \ref{lem18}.
\end{proof}

\begin{lem}\label{lem19}
Let $q=3^{m}$ with $m$ being an odd integer.
Let $x,y\in U_{q+1}\setminus\{1\}$ be two distinct
elements. Then there is a unique element
$z\in U_{q+1}\setminus\{x,y,1\}$ such that the
system (\ref{3.1}) of homogeneous equations
has a nonzero solution ${\bf X}$ in
${\mathbb{F}}^{4}_{q}$.
\end{lem}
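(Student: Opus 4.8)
The plan is to reduce the whole statement to solving a single linear equation in $z$, by stringing together Lemmas \ref{lem10}, \ref{lem11}, \ref{lem16}, \ref{lem17} and \ref{lem18}. First I would fix an arbitrary $z\in U_{q+1}\setminus\{x,y,1\}$; then $x,y,z$ are three pairwise distinct elements of $U_{q+1}\setminus\{1\}$, which is exactly the hypothesis needed for the three ``equivalence'' lemmas. Chaining them in order: by Lemma \ref{lem10}(i) the system (\ref{3.1}) has a nonzero solution in $\mathbb{F}_q^4$ iff (\ref{3.2}) does; by Lemma \ref{lem10}(ii) this happens iff the $4\times4$ determinant in (\ref{3.3}) vanishes in $\mathbb{F}_{q^2}$; by Lemma \ref{lem11} that vanishing is equivalent to the vanishing of the $2\times2$ determinant in (\ref{3.6}); and by Lemma \ref{lem16} the latter is equivalent to the relation (\ref{3.7}), namely $\frac{(y-1)(z-x)}{(x-1)(z-y)}=-1$. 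So for $z\in U_{q+1}\setminus\{x,y,1\}$, the assertion ``system (\ref{3.1}) has a nonzero solution in $\mathbb{F}_q^4$'' is equivalent to ``(\ref{3.7}) holds for the triple $(x,y,z)$''.

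Next I would solve (\ref{3.7}) for $z$. Since $x\ne 1$ and $z\ne y$, clearing denominators is legitimate and gives $(y-1)(z-x)+(x-1)(z-y)=0$. Expanding and using that the characteristic of $\mathbb{F}_{q^2}$ is $3$ (so $-2=1$), this collapses to the linear equation $(x+y+1)z+(x+y+xy)=0$. By Lemma \ref{lem17}(ii) we have $x+y+1\ne 0$, because $x,y\in U_{q+1}\setminus\{1\}$; hence the equation has the single solution $z_0:=-\dfrac{x+y+xy}{x+y+1}$. All the manipulations here are reversible, so a given $z\in U_{q+1}\setminus\{x,y,1\}$ satisfies (\ref{3.7}) if and only if $z=z_0$. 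This already yields the uniqueness half of the lemma.

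For existence it then remains only to confirm that $z_0$ is an admissible candidate, i.e. that $z_0\in U_{q+1}\setminus\{x,y,1\}$, so that the equivalence chain of the first paragraph can be applied with $z=z_0$. But this is precisely the content of Lemma \ref{lem18}. Therefore $z_0\in U_{q+1}\setminus\{x,y,1\}$, it satisfies (\ref{3.7}), and hence the system (\ref{3.1}) has a nonzero solution in $\mathbb{F}_q^4$; combined with the uniqueness established above, $z_0$ is the unique such element, which proves the lemma.

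A word on where the difficulty sits: in this particular lemma almost nothing is hard. The passage from (\ref{3.7}) to a linear equation is two lines, and the only genuinely delicate point --- that the candidate $z_0$ is still in $U_{q+1}$ and differs from $x$, $y$ and $1$, which is exactly what legitimizes invoking Lemmas \ref{lem10}, \ref{lem11} and \ref{lem16} (all of which require $x,y,z$ pairwise distinct and $\ne 1$) --- has already been isolated and settled as Lemma \ref{lem18}, so I would simply cite it. The substantive work of the section lies upstream, in the determinant evaluations of Lemmas \ref{lem3} and \ref{lem11} and the characteristic‑$3$ argument of Lemma \ref{lem16}.
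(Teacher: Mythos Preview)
Your proposal is correct and follows exactly the route the paper intends: the paper's own proof is the single line ``This lemma follows immediately from Lemmas \ref{lem10} to \ref{lem18},'' and your argument is a faithful unpacking of that chain, including the explicit solution $z_0=-\frac{x+y+xy}{x+y+1}$ which is precisely the element furnished by Lemma \ref{lem18}.
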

\begin{proof}
This lemma follows immediately from Lemmas
\ref{lem10} to \ref{lem18}.
\end{proof}

We can now give the proof of Theorem \ref{thm1.2}
as the conclusion of this paper.\\

\noindent {\it Proof of Theorem \ref{thm1.2}.}
Let $\mathcal{C}$ be the BCH code $\mathcal{C}_{(q, q+1, 3, 4)}$
with parameters $[q+1, q-3, 4]$. On the one hand, Theorem 1.1
tells us that $\mathcal{C}$ is an AMDS code. And Theorem
\ref{thm2.1} infers that the dual code of $\mathcal{C}$ is
an AMDS code if and only if $\dim\mathcal{C}(I)=1$ holds
for any subset $I\subseteq\langle n\rangle$ with
$|I|=n-k+1=5$. But by Lemma \ref{lem3.1}, we know that
$\dim\mathcal{C}(I)=1$ holds for any subset
$I\subseteq\langle n\rangle$ with $|I|=n-k+1=5$ if and
only if for arbitrary two distinct elements
$x, y\in U_{q+1}\backslash\{1:=1_{\mathbb{F}_{q^{2}}}\}
\subseteq \mathbb{F}_{q^{2}}$, there exists at most one element
$z\in U_{q+1}\backslash\{x, y, 1\}\subseteq{\mathbb{F}_{q^{2}}}$
such that the system (\ref{3.1}) of homogeneous equations
has a nonzero solution ${\bf X}$ in ${\mathbb{F}}^{4}_{q}$.
But the latter one is true because Lemma \ref{lem19} guarantees
that for arbitrary two distinct elements
$x, y\in U_{q+1}\backslash\{1:=1_{\mathbb{F}_{q^{2}}}\}
\subseteq \mathbb{F}_{q^{2}}$, there is a unique element
$z\in U_{q+1}\backslash\{x, y, 1\}\subseteq{\mathbb{F}_{q^{2}}}$
such that the system (\ref{3.1}) of homogeneous equations
has a nonzero solution ${\bf X}$ in ${\mathbb{F}}^{4}_{q}$.
Then we can conclude that the dual code of $\mathcal{C}$ is
an AMDS code.

This finishes the proof of Theorem \ref{thm1.2}. \hfill$\Box$

\bibliographystyle{amsplain}

\end{document}